\newtheorem{thm}{Theorem}
\newtheorem*{prop}{Proposition}
\theoremstyle{definition}
\newtheorem*{defn}{Definition}
\def\hybrid{
        \topmargin -20pt
        \oddsidemargin 0pt
        \headheight 0pt \headsep 0pt
        \textwidth 6.25in 
        \textheight 9.5in 
        \marginparwidth .875in
        \parskip 5pt plus 1pt \jot = 1.5ex}
\def\bpm{\begin{pmatrix}}
\def\epm{\end{pmatrix}}
\newcommand{\bra}[1]{\langle#1\rvert}
\newcommand{\ket}[1]{\lvert#1\rangle}
\newcommand{\expval}[1]{\langle{#1}\rangle}
\DeclareMathOperator{\ima}{im}
\DeclarePairedDelimiterX\braket[2]{\langle}{\rangle}{#1 \delimsize\vert #2}
\DeclareMathAlphabet{\pazocal}{OMS}{zplm}{m}{n}
\newcommand\restr[2]{{\left.\kern-\nulldelimiterspace #1 \littletaller \right|_{#2}}}
\newcommand{\littletaller}{\mathchoice{\vphantom{\big|}}{}{}{}}
\begin{document}

\begin{titlepage}
\rightline{}
\rightline{December 2024}
\rightline{HU-EP-24/37} 
  
\begin{center}
\vskip 1.5cm
{\Large \bf{Off-Shell Quantum Mechanics \\[1.5ex] as Factorization Algebras on Intervals }}
\vskip 1.7cm

{\large\bf {Christoph Chiaffrino, Noah Hassan and Olaf Hohm}}
\vskip 0.4cm

\vskip .2cm

\end{center}

\bigskip\bigskip
\begin{center} 
\textbf{Abstract}

\end{center} 
\begin{quote}

We present, for the  harmonic oscillator 
and the spin-$\frac{1}{2}$ system, 
 an alternative formulation of quantum mechanics 
 that is `off-shell': it  is based on 
classical off-shell configurations and thus  similar to the path integral. 
The core elements  are  Batalin-Vilkovisky (BV) algebras and 
factorization algebras, following a program by Costello and Gwilliam. 
The BV algebras are the  spaces of quantum observables ${\rm Obs}^q(I)$  given by 
the symmetric algebra of polynomials in compactly 
supported functions on some interval $I\subset \mathbb{R}$, 
which can be viewed as functionals on the dynamical variables. 
Generalizing  associative algebras, 
factorization algebras include in their data  a topological space, which here is $\mathbb{R}$, 
and an assignment of a vector space to each open set, which here is the assignment  
of  ${\rm Obs}^q(I)$ to each open interval $I$. The central structure maps are bilinear 
${\rm Obs}^q(I_1)\otimes {\rm Obs}^q(I_2)\rightarrow {\rm Obs}^q(J)$ for disjoint intervals $I_1$ and $I_2$
contained in an interval $J$, which here is the  wedge product of the symmetric algebra. 
We prove, as the central result of this paper, 
that this factorization algebra is quasi-isomorphic to the factorization algebra of 
`on-shell' quantum mechanics. In this we extend 
previous work by including  half-open and closed intervals, 
and by generalizing to the  spin-$\frac{1}{2}$ system.

\end{quote} 
\vfill
\setcounter{footnote}{0}
\end{titlepage}

\tableofcontents

\vspace{5mm}

\section{Introduction}

In this paper we present an alternative `off-shell' formulation of quantum mechanics for the paradigmatic  cases of the harmonic oscillator and the spin-$\frac{1}{2}$ system. This formulation is based on  Batalin-Vilkovisky (BV) algebras and their  cohomology and the notion of `factorization algebras'. In this we apply and illustrate a program for the axiomatization of 
quantum field theory (QFT) due to Costello and Gwilliam \cite{Costello_Gwilliam_2016,Costello_Gwilliam_2021} in the 
simpler setting of quantum mechanics. 
This paper also generalizes the work of two of us, together with Pinto, 
on the formulation of quantum mechanics in terms of the cohomology of a BV algebra \cite{Chiaffrino:2021pob}. 
These formulations amount to a slight reinterpretation of the BV formalism, which was originally introduced within the path integral quantization of gauge theories, but was more recently understood as an alternative to the path integral that is useful also for theories without gauge redundancies. 
(For related applications of factorization algebras see \cite{BWilliam,GwilliamRejzner,Gwilliam:2022vja,Costello:2022wso,nishinaka2024}.)

In order to explain the new formulation, in particular in which sense it is off-shell, 
let us  recall  the standard formulation of quantum mechanics. Suppose 
we are given a classical theory whose dynamical variable is a smooth function $\phi(t)$
on the real line, i.e., $\phi$ lives in the vector space $C^\infty(\mathbb{R})$. 
While this space is infinite-dimensional, the space of solutions is typically finite-dimensional.  
For instance, if $\phi(t)$ is subject to equations of motion that are of second-order in time derivatives 
a solution is uniquely determined by specifying, for instance,  $\phi(0)=q$, $\dot{\phi}(0)=p$. 
Thus, we can identify a point $(q,p)$  in the phase space  $\mathbb{R}^2$ with a solution 
of the equations of motion. The canonical quantization now promotes $q$ and $p$ to operators 
$\hat{q}$ and $\hat{p}$
on some Hilbert space and postulates the Heisenberg commutation relation $[\hat{q},\hat{p}]=i\hbar$. 
Since it is thus based on the `phase space of classical solutions' one may call  canonical quantization 
`on-shell quantum mechanics'. 

This canonical formulation is to be contrasted with the path integral approach to quantum mechanics. 
This formulation is off-shell in that the path integral expresses the quantum expectation value or overlap of 
states as an integral over the infinite-dimensional space of \textit{all}  classical configurations 
subject to certain boundary conditions. 
Concretely, the 
overlap of states $\ket{q_i,t_i}$ and $\ket{q_f,t_f}$, where the notation indicates position eigenstates 
of the Heisenberg-picture operator $\hat{q}$ at time $t$: $\hat{q}(t)\ket{q;t}=q\ket{q;t}$, 
is written as the integral 
 \begin{equation}
  \langle q_f,t_f \ket{q_i,t_i} = \int_{\phi(t_i)=q_i,\, \phi(t_f) =q_f} {\cal D}\phi\, \exp\Big(\frac{i}{\hbar} S\big[\phi(t)\big]\Big)  \;, 
 \end{equation} 
where $S$ denotes the classical action, 
and the notation indicates 
that the integral is to be taken over all fields $\phi$ 
obeying the 
boundary conditions $\phi(t_i)=q_i,\, \phi(t_f) =q_f$. The path integral  is of crucial importance 
in  QFT, where the proper definition of the objects  of canonical quantization 
(Hilbert space, field operators, Hamiltonian, etc.) is very involved
if not impossible. In particular in non-abelian gauge theories such as Yang-Mills theory the use of the path integral is 
indispensable. The trouble is that in QFT there does not appear to be a mathematically rigorous definition of the measure 
${\cal D}\phi$ of the path integral.

The off-shell formulation of quantum mechanics to be discussed in this paper is analogous to the 
path integral in that there is no Hilbert space nor operators satisfying canonical commutation relations; 
rather, it is based on  off-shell classical field configurations, and the usual operator algebra of quantum mechanics 
only emerges upon `going on-shell'. 
The first crucial step is the following change of 
attitude: instead of talking about the space of all kinematically allowed 
classical field configurations and the `Euler-Lagrange subspace' of solutions to the  equations 
of motion, one talks about a chain complex and its cohomology. Let us illustrate this for the harmonic oscillator 
with  dynamical variable $\phi(t)$ in  $C^\infty(\mathbb{R})$ 
and equations of motion $\ddot{\phi} + \omega^2\phi = 0$. These data are encoded 
in the following chain complex:  
 \begin{equation}\label{oringalcomplex} 
 0\xrightarrow[]{d\ = \ 0}
 C^\infty(\mathbb{R}) \xrightarrow[]{d \ = \ \partial_t^2 + \omega^2} C^\infty(\mathbb{R})
\xrightarrow[]{d \ = \ 0} 0 \,. 
 \end{equation} 
Specifically, this is  a sequence of vector spaces, consisting of two copies of $C^\infty(\mathbb{R})$ and 
the trivial space indicated by $0$, where $\phi(t)$ lives in the space on the left  that we take to be 
the space in degree zero, which in turn is mapped via the differential $d  = \partial_t^2 + \omega^2$ 
to the space on the right, which we take to be of degree one. Note that $d^2=0$  holds trivially since one of 
the $d$ always vanishes. We can now consider the cohomology, which is the quotient vector space 
$H={\rm ker} \,d/{\rm im} \,d$, which in degree zero is just $\ker(\partial_t^2 + \omega^2)$, 
the space of solutions of the equations of motion.

In a second step one passes over to the space of observables, which one models by certain functionals 
of the dynamical variables as follows.\footnote{This step is well motivated 
in \textit{derived} algebraic geometry. 
Roughly speaking, the difference between ordinary geometry and algebraic geometry 
is to work, instead of a subspace $S \subseteq M$ specified by a family of equations $f^i = 0$, with the algebraic space of functions on $M$ modulo those proportional to the 
functions $f^i$ that define $S$, i.e., in our example, the full space modulo the equations of motion. In the {derived} approach one then realizes this 
quotient space as the cohomology of a chain complex, which is defined in terms of a so-called Tor group, and works on this chain complex instead \cite{gwilliam2012factorization,Costello_Gwilliam_2016}.}
We fix an interval $[t_i,t_f]$ between some initial and final time 
once and for all. We then consider compactly supported functions, i.e.,  functions $f$ 
that are `localized' by being  
non-vanishing  only on an open interval $I\subset [t_i,t_f]$. Such a function $f$ gives rise to 
a linear \textit{functional} $F$ acting as  
\begin{equation}\label{FactionIntro} 
    F[\phi] := \int_Idt\, f(t)\phi(t) \equiv\langle f, \phi\rangle \;.  
\end{equation}
We can think of such functionals  as measuring devices or machines that record 
certain properties of the dynamical variable  $\phi$. 
Apart from making everything that follows mathematically rigorous, 
the assumption of compact support also seems physically 
reasonable, for it means that the time period  within which the measurement is made is  finite.  
For instance, one can find functions $f_Q$ and $f_P$ with arbitrarily small compact support that give rise to functionals $Q$ and $P$ that measure position and momentum at time zero, i.e., 
$Q[\phi] = \langle f_Q, \phi\rangle=\phi(0)$ and $P[\phi] = \langle f_P, \phi\rangle = \dot{\phi}(0)$, 
when $\phi$ is on-shell.  
This last assumption allows one to avoid the use of  the delta function and its derivative, 
since one can find infinitely many compactly supported functions giving rise to functionals having the above characteristics \textit{on 
solutions}, yet they are   well-defined on the full space.\footnote{Interestingly,  $f_Q$ may even vanish at $t=0$, much as a machine may operate at a later or earlier time in order 
to determine   the position at time zero.}

Next we can define certain non-linear observables, such as the square of the position or momentum at time zero, 
by passing  over to the symmetric algebra, which consists of strings, or rather polynomials, 
of the above compactly supported functions $f$ 
(with support inside the interval $I$), 
 \begin{equation}
   F =  f_1 \wedge f_2 \wedge \ldots \wedge f_n\;, 
\end{equation}
where the (graded symmetric) wedge product is just the formal concatenation of symbols. 
Such functionals act on the dynamical variables as 
 \begin{equation}
  F[\phi] = \langle f_1,\phi\rangle \cdot \ldots \cdot \langle f_n, \phi\rangle \;, 
 \end{equation}
using the notation in (\ref{FactionIntro}). The differential of the original complex (\ref{oringalcomplex}) 
`dualizes' to a differential $d$  on the complex of linear observables and is then extended via the Leibniz rule 
to the full symmetric algebra, which in turn  renders the symmetric algebra a differential graded  algebra, 
which we denote by ${\rm Sym} \big({C^\infty_c(I)} \rightarrow  C^\infty_c(I) \big)$, 
where $C^\infty_c(I)$ denotes the functions compactly supported in $I$. 

In order to encode quantum mechanics the crucial step is now to deform the differential $d$ 
on the symmetric algebra of observables by the BV operator $\Delta$. On a quadratic monomial 
consisting of two functions $f$ and $g$ of degrees zero and $-1$, respectively, 
$\Delta$ acts as 
\begin{equation}\label{DeltaIntro} 
    \Delta(f \wedge g)  = i\hbar \int_I dt f(t)g(t) \;, 
 \end{equation}
which is extended to the full symmetric algebra by requiring that $\Delta$ acts as a \textit{second order} 
operator, an algebraic notion to be explained in the main text.  
One defines the algebra of quantum observables as the 
symmetric algebra equipped with the deformed differential 
$\delta_{\rm BV} := d + \Delta$ obeying $(\delta_{\rm BV})^2=0$:
\begin{equation}
    {\rm Obs}^q(I) := {\rm Sym} 
    \Big({\rm Obs}^{\rm lin}(I), \delta_{\rm BV}\Big)\;,     
    \label{eq:obs_q_defINTRO}
\end{equation}
relative to the interval $I$ on which its observables are supported. 
Here ${\rm Obs}^{\rm lin}(I)$ denotes the graded vector space consisting of compactly supported functions in degrees zero and $-1$. 
The basic  idea is now that quantum mechanics is defined on the cohomology of such a  BV algebra. 
Unfortunately,  one cannot hope to obtain the algebra of quantum operators on the cohomology 
of the BV algebra directly, for its algebra structure given by the wedge product does not descent to 
a well-defined algebra on the cohomology, owing to $\delta_{\rm BV}$ being a second-order operator. 
In order to relate to the algebra of quantum operators we have to consider  a more subtle notion of 
algebra: factorization algebras. 

Ordinary algebras are given by a vector space $A$ together with a bilinear product $\mu: A\otimes A\rightarrow A$ 
satisfying associativity. Factorization algebras \cite{beilinson2004chiral,Costello_Gwilliam_2016,Costello_Gwilliam_2021} 
generalize this notion by including in its data 
a topological space $X$ and an assignment of a vector space $\mathfrak{F}(U)$ to each open set $U\subset X$. 
Furthermore, there are two kinds of structure maps: \textit{i)} for each inclusion of open sets 
${U} \subseteq {V} \subseteq X$ one has a linear map of vector spaces 
$m_{{U}}^{{V}} : \mathfrak{F}({U}) \longrightarrow \mathfrak{F}({V})$ 
and \textit{ii)}  for each two disjoint open sets ${U}_1, {U}_2$ contained in some open set ${V}$ 
there is a bilinear map of vector spaces  
$m_{{U}_1\, {U}_2}^{{V}} : \mathfrak{F}({U}_1) \otimes \mathfrak{F}({U}_2) \longrightarrow \mathfrak{F}({V})$. 
The structure maps are subject to certain relations. An ordinary algebra $A$ is a special case of a 
factorization algebra where one takes the topological space to be $\mathbb{R}$ and assigns to each 
open interval $I\subset \mathbb{R}$ the vector space $A$ of the algebra. Since all vector spaces 
$\mathfrak{F}(I)$ are then equal we can take the first structure map to be just the identity and the second structure map 
to be the bilinear product $\mu$, in which case the factorization algebra relations 
reduce to associativity.

An important observation  is that canonical  quantum mechanics itself 
can be viewed as a factorization algebra as follows. The topological space is the closed interval 
$[t_i,t_f]\subset \mathbb{R}$, 
which has four kinds of relevant open subsets: 
\textit{i)} open intervals $I=(a,b)$ to which one assigns the space of operators ${\cal O}$; \textit{ii)} 
half-open intervals $[t_i, a)$ to which one assigns the Hilbert space ${\cal H}$ of ket states $\ket{\psi}$;
\textit{iii)} 
half-open intervals $(b, t_f]$ to which one assigns the dual Hilbert space ${\cal H}^*$ of bra states $\bra{\psi}$; 
and \textit{iv)} the closed interval $[t_i,t_f]$ to which one assigns the complex numbers $\mathbb{C}$. 
The structure maps are given by combinations of the four basic operations in quantum mechanics: 
the composition of operators ${\cal O}_1$, ${\cal O}_2$ to form ${\cal O}_1{\cal O}_2$, the  action  with 
an operator ${\cal O}$ on ket or bra states to form ${\cal O}\ket{\psi}$ or $\bra{\psi}{\cal O}$ and, finally, 
the  pairing between  bra and ket states $\bra{\psi}$ and $\ket{\eta}$ to form the complex 
number (probability amplitude) $\bra{\psi}\eta\rangle$.

We are now ready to state the central result of the present paper: For the harmonic oscillator (and similarly the 
spin-$\frac{1}{2}$ system)  there is a factorization algebra 
$\mathfrak{F}_{\rm QM}$, based on the off-shell BV algebras
 ${\rm Obs}^q(I)$ in (\ref{eq:obs_q_defINTRO}),  that is quasi-isomorphic 
--- the relevant notion of equivalence in the realm of chain complexes and factorization algebras --- 
to ordinary `on-shell' quantum mechanics viewed as a factorization algebra as sketched above. 
In the following we outline  this factorization algebra of off-shell quantum mechanics, 
which is also defined  on the closed interval 
$[t_i,t_f]\subset \mathbb{R}$. The factorization algebra $\mathfrak{F}_{\rm QM}$ assigns to each open interval 
$I$ the BV algebra ${\rm Obs}^q(I)$ in (\ref{eq:obs_q_defINTRO}).  The first structure map is defined, for open intervals 
$I_1\subseteq I_2$, as the trivial  inclusion that views on object in ${\rm Obs}^q(I_1)$, i.e.~a function, or polynomial  of functions, 
that is compactly supported in $I_1$, as such an object in ${\rm Obs}^q(I_2)$. The second structure map 
 $m_{I_1I_2}^J:\mathfrak{F}_{\rm QM}(I_1)\otimes \mathfrak{F}_{\rm QM}(I_2)\rightarrow \mathfrak{F}_{\rm QM}(J)$  for disjoint intervals $I_1, I_2$ contained in some interval  $J$ 
 is  given by the wedge product, 
  \begin{equation}\label{StructureMapIntro} 
  m_{I_1I_2}^J(F_1\otimes  F_2) \ = \ {F_1} \wedge {F}_2 \ \in \  \mathfrak{F}_{\rm QM}(J)\;. 
  \end{equation} 
Here one uses that $F_1$ and $F_2$ are compactly supported in $I_1$ and $I_2$, respectively, and 
hence are also compactly supported in $J$, and so the wedge product on the right-hand side 
is the one in $\mathfrak{F}_{\rm QM}(J)={\rm Obs}^q(J)$. Finally, to half-open intervals  $[t_i, a)$ and $(b, t_f]$, 
and to the closed interval $[t_i,t_f]$, the factorization algebra assigns the same kind of BV algebra, 
but  based on function spaces with boundary conditions on one or two boundary points. 
The  structure maps are defined as before.

The striking  economy of the above scheme must be emphasized: the only non-trivial operation 
(structure map of the factorization algebra) 
is the wedge product in (\ref{StructureMapIntro}), which we recall is just the concatenation of symbols. 
In particular, the alleged hallmark of 
quantum physics, the Heisenberg commutation relation, is not present, 
as the wedge product is graded symmetric.  
This should be compared  with the corresponding operations of canonical  quantum mechanics, 
which include  the composition of the operator algebra and the action of operators on state vectors. 
For instance, in the case of an $n$-dimensional Hilbert space the structure maps involve 
the multiplication of $n\times n$ matrices whose computational complexity 
is ${\cal O}(n^3)$.\footnote{There are, however, surprising  reductions of the computational complexity of matrix multiplication: 
\textit{https://www.quantamagazine.org/new-breakthrough-brings-matrix-multiplication-closer-to-ideal-20240307/}.}
In contrast, the wedge product of two words just puts them 
next to each other.\footnote{Just to make this very plain: the multiplication of two numbers of even moderate 
size, such as $785$ and $1921$, is already pretty hard, while their wedge product just puts a wedge 
between them: $785\wedge 1921$.   } 
The catch is that in order to compute the quantities of physical interest, the probability amplitudes, in a last step 
one has to pass over to cohomology (one has to go `on-shell'), 
which largely reintroduces the computational complexity of quantum mechanics. 

Specifically, there is  a projection $\Pi$ from the off-shell to the  on-shell factorization algebra 
of quantum mechanics. In order to compute a 
familiar quantity  such as the amplitude $\bra{\chi}{\cal O}\ket{\psi}$ 
one has to pick functionals of the off-shell algebra  corresponding to the appropriate intervals: 
$F_{\psi}\in \mathfrak{F}_{\rm QM}([t_i,a))$, $F_{{\cal O}}\in \mathfrak{F}_{\rm QM}((a,b))$ and $F_{\chi}\in \mathfrak{F}_{\rm QM}((b,t_f])$, 
so that $\Pi(F_{\psi})=\ket{\psi}\in {\cal H}$, $\Pi(F_{\chi})=\bra{\chi}\in {\cal H}^*$ and $\Pi(F_{\cal O})={\cal O}$. 
One then takes their wedge product
 \begin{equation}
   F_{\chi}\wedge F_{\cal O}\wedge F_{\psi} \ \in \ \mathfrak{F}_{\rm QM}([t_i,t_f])\;, 
 \end{equation}   
which yields  a functional corresponding to the full closed interval $[t_i,t_f]$. For this interval, $\Pi$ just projects to 
a complex number, which is in fact the desired probability amplitude: 
 \begin{equation}\label{finalamplitude} 
  \bra{\chi}{\cal O}\ket{\psi} = \Pi\big(  F_{\chi}\wedge F_{\cal O}\wedge F_{\psi}\big) \ \in \ \mathbb{C}\;. 
 \end{equation}    
While the wedge product is arguably the simplest operation one can imagine, 
the projector $\Pi$ is quite complicated as a consequence of it needing 
to commute with the BV differential $\delta_{\rm BV}$. Thus, in a sense, we have traded 
the complexity inherent in the operator algebra of quantum mechanics for the complexity of the projector 
down to cohomology. (One may, however,  envision  other and more efficient methods to compute the cohomology.) 
It should also be mentioned that, as to be expected, the off-shell formulation introduces a significant  
redundancy as there are infinitely many  functionals that project to a given state or operator. The 
final complex number in (\ref{finalamplitude}) is, of course, independent of the choice of functionals. 
The situation here is thus quite analogous to gauge theories where one introduces a gauge redundancy 
but physical observables are gauge invariant. Indeed, the passing over from gauge redundant to gauge invariant fields can also be interpreted as a quasi-isomorphism \cite{Chiaffrino:2020akd}.

The rest of this paper is organized as follows. In sec.~2 we provide a review of the required background on chain complexes, the space of observables as functionals and the 
BV algebra. Then we motivate and introduce in sec.~3 pre-factorization algebras and present the example of the off-shell and on-shell versions of the operator algebra of quantum mechanics. This follows the treatment by 
Costello and Gwilliam in  \cite{Costello_Gwilliam_2016} 
whose results we extend  by giving the explicit quasi-isomorphism between the off-shell and on-shell algebras. Our main new results are contained 
in sec.~4 and 5. In sec.~4 we present the complete off-shell factorization algebra for the quantum mechanics of the harmonic oscillator, which requires the inclusion of intervals with one or two boundary points. 
In sec.~5 this is extended to the spin-$\frac{1}{2}$ system. We close in sec.~6 
with conclusions and outlook, while some technical background is contained in three appendices. In particular in appendix~A the notion of BV algebras and their cohomology 
is motivated with  
finite-dimensional Gaussian intergrals. 
This paper is intended to be understandable to a general theoretical physicist, and the  reader unfamiliar with any of this is advised to start with appendix A.

\section{Chain complexes and Batalin-Vilkovisky  algebras}

\subsection{Chain complex and observables  for harmonic oscillator}

We begin with the one-dimensional harmonic oscillator with dynamical variable $\phi(t)$, where for now 
we assume the time variable $t$ to live in the real line $\mathbb{R}$.  
More precisely, we assume that $\phi$ is a real-valued smooth function on $\mathbb{R}$:  
$\phi\in C^\infty(\mathbb{R})$. 
The classical equations of motion are given by 
\begin{equation}
        \ddot{\phi} + \omega^2\phi = 0\;, 
    \label{eq:KG}
\end{equation}
where $\omega$ is constant, and we denote time derivatives by $\partial_t=\dot{}\,$. 
We encode these data in a chain complex by introducing two copies 
of $C^\infty(\mathbb{R})$, the space of `fields'  $\phi(t)$ in degree zero and the `space of equations of motion' in degree one: 
 \begin{equation}
 0\xrightarrow[]{d_{-1}}
 C^\infty(\mathbb{R}) \xrightarrow[]{d_0= \partial_t^2 + \omega^2} C^\infty(\mathbb{R})
\xrightarrow[]{\;d_{1}\;} 0 \,. 
 \end{equation} 
Here the differential  `squares to zero' trivially, in the sense of $d_{i+1} \circ d_{i}=0$, thanks to either 
$d_{-1}$ or $d_1$ being trivial. (In the following the trivial spaces will be left implicit.)

Let us next recall that the cohomology of the above chain complex encodes the space of solutions 
of the classical equations of motion. A general solution of (\ref{eq:KG}) can be expressed in 
terms of the basis solutions 
\begin{align}\label{basisharmfunct} 
    \phi_q(t) := \cos(\omega t)\;, \qquad 
    \phi_p(t) := \frac{1}{\omega}\sin(\omega t)\;, 
\end{align}
as 
 \begin{equation}
    \phi(t) = q\phi_q(t) + p\phi_p(t)\, ,
    \label{eq:KG_Solution}
\end{equation}
where  $q$ and $p$ are real numbers. In fact, these two numbers are just the 
initial conditions at time zero: $\phi(0)=q$, $\dot{\phi}(0)=p$. 
Therefore, even though a solution $\phi$ is an element of the infinite-dimensional space of smooth functions $C^\infty(\mathbb{R})$, 
a solution  is fully encoded in the two numbers $q$ and $p$. That is, there is a one-to-one map between solutions of eq.~\eqref{eq:KG} and the two dimensional phase space $\mathbb{R}^2$. More generally, we can extend this to a non-invertible map (projection) 
from the full space $C^\infty(\mathbb{R})$ to $\mathbb{R}^2$: 
\begin{equation}
\begin{gathered}
    \pi: C^\infty(\mathbb{R}) \longrightarrow \mathbb{R}^2\,, \qquad 
    \pi(\phi) = (\phi(0), \Dot{\phi}(0))\,. 
\end{gathered}
\end{equation}
More precisely, this defines the projection map in degree zero, while in degree one it acts trivially, 
as indicated in the following diagram: 
\begin{equation}\label{quasiisoproj}
    \begin{tikzcd}
    C^\infty(\mathbb{R}) \arrow[d,"\pi"] \arrow[r,"\partial_t^2 + \omega^2"] & C^\infty(\mathbb{R}) \arrow[d] \\
    \mathbb{R}^2 \arrow[r] & 0
    \end{tikzcd} \, .
\end{equation}
It turns out that $\mathbb{R}^2$ is isomorphic to cohomology, which is defined to be the space of $d$-closed vectors modulo $d$-exact vectors:
\begin{equation}
    H_n := \ker d_n / \ima d_{n-1}\, .
\end{equation}
Indeed, the zeroth cohomology space $H_0$ is given by the space of solutions $\ker(\partial_t^2 + \omega^2)$, 
as the space in degree $-1$ is trivial, and hence, by the above identification,  $H_0$ can be viewed as $\mathbb{R}^2$. 
In order to determine the cohomology space $H_1$ we note that the kernel is the full space, 
but this full space is also the image of $\partial_t^2 + \omega^2$: 
to each function $g\in C^\infty(\mathbb{R})$ one can define $h(g)(t):=\int_0^{t}  ds\, \frac{\sin \omega (t-s)}{\omega} g(s)$ 
that obeys $(\partial_t^2 + \omega^2)h(g) = g$.  
Thus, the cohomology $H_1$ is trivial, as indicated in \eqref{quasiisoproj}. 
The map $\pi$ is of course not an isomorphism, but it is a 
\textit{quasi-isomorphism} \cite{weibel1994introduction}.
To explain this notion we first note that $\pi$ is  a chain map, which means that the
above diagram commutes. This in turn implies that $\pi$ is well-defined on cohomology. If, in addition, 
it is an isomorphism on  cohomology it is called a \textit{quasi-isomorphism}, which is 
the case here. 
Thus, the full `off-shell' complex is quasi-isomorphic to 
the `on-shell' complex of cohomology, which in this case is  the phase space of the  
harmonic oscillator.

\medskip

Let us then turn to the observables for these physical variables, which will be certain functionals 
on the above space. We want to think of functionals as representing  measurements on some time interval $I = (a,b)$. A special case are the  linear observables, 
which in view of their role for quantum mechanics we take to be complex valued. 
Specifically, we consider  linear functionals $F$ that  act on the total space 
as
\begin{equation}
    F[\phi] := \int_\mathbb{R} dt\, f(t)\phi(t)\;, 
    \label{eq:Observable_Def}
\end{equation}
where the $f$ are 
complex valued smooth functions on $\mathbb{R}$ \textit{with compact support} on 
the open interval $I=(a,b) \subseteq \mathbb{R}$ 
\cite{Costello_Gwilliam_2016}, 
\begin{equation}
    f \in C^\infty_c(I)\, .
\end{equation}
The functionals $F$ in \eqref{eq:Observable_Def} define a `dual'  space of linear functions on the field space $C^\infty(\mathbb{R})$.  The linear functional $F$ can then be identified with $f \in C^\infty_c(I)$. 
Note that since $f$ is compactly supported, the above integral is finite and hence $F[\phi]$ is 
well-defined.

In the following we will work almost exclusively  at the level of observables given by compactly supported smooth 
functions $f$. These also form a graded vector space since $f$ may be integrated against a function in degree zero, 
in which case it carries itself degree zero, or against a function in degree one, in which case it carries itself 
degree $-1$. (This is so in order for  the number computed by the integral to have degree zero.) 
In fact, the `dual'  space of linear observables is again a chain complex of the form 
\begin{equation}\label{LINOBS} 
    {\rm Obs}^{\rm lin}(I)\  := \ \,
 C_c^\infty(I) \xrightarrow[]{\;\;d_{-1}\;\;} C_c^\infty(I)\;, 
\end{equation}
where, however,  the non-trivial differential maps functions in degree $-1$ to 
functions in degree zero. This can be understood as follows: For a function $\phi$ of degree zero 
in the original complex we define the differential on the dual space of functionals  via 
 \begin{equation}\label{DualDiff} 
 \begin{split}
    dF[\phi] &= F[d\phi]  = F[(\partial_t^2 + \omega^2)\phi] \\
    &= \int_\mathbb{R} dt\, f(t)(\partial_t^2 + \omega^2)\phi(t) = 
    \int_\mathbb{R} dt\, (\partial_t^2 + \omega^2)f(t)\phi(t) \;, 
 \end{split}   
\end{equation}
where we integrated by parts.
Since $d\phi$ is of degree one, the function $f$ above is of degree $-1$ and hence we 
read off that the non-trivial differential on the dual complex is 
 \begin{equation}
  d_{-1} = \partial_t^2 + \omega^2\;. 
 \end{equation}

Above we projected the original chain complex to its cohomology 
(the phase space of classical solutions), 
and now we work out the corresponding projection 
of the `dual' complex by restricting the functionals  to on-shell fields. 
To this end we evaluate a linear functional of the form \eqref{eq:Observable_Def} 
on a solution written as \eqref{eq:KG_Solution}: 
\begin{equation}
\begin{aligned} 
    F[\phi] &= q\int_Idt\, f(t)\phi_q(t) + p\int_Idt\, f(t)\phi_p(t)\\
    &\equiv  q\expval{f, \phi_q} + p\expval{f, \phi_p}\,,
\end{aligned}
\end{equation}
where here $f$ is of degree zero.  
Moreover, here and in the following we abbreviate the one-dimensional time integration of the product of two functions by 
angle  brackets $\expval{\,,\;}$.
We infer  that the action of $F$ on solutions is uniquely specified by the two complex numbers  $\expval{f, \phi_q}$ and $\expval{f, \phi_p}$. 
This motivates us to define the projection of the complex of functionals, viewed in terms of the function $f$ 
representing it, as the map 
\begin{equation}\label{PiMap} 
\begin{aligned}
    \Pi  : \;C^\infty_c(I) \longrightarrow \mathbb{C}^2\;, \qquad 
   \Pi(f) = (\expval{f, \phi_q}, \expval{f, \phi_p}) \in \mathbb{C}^2\;, 
\end{aligned}
\end{equation}
for $f$ of degree zero, and zero otherwise. 
Put differently, the complex (\ref{LINOBS}) is projected as 
\begin{equation}\label{cohomologyPROJ} 
    \xymatrix{
    C^\infty_c(I) \ar[r]^d \ar[d] & C^\infty_c(I) \ar[d]^\Pi\\
    0 \ar[r] & \mathbb{C}^2 
    }
\end{equation}
where we recall that the leftmost space contains functions of degree $-1$, while the rightmost space contains functions of degree zero. 
Note that the functions in degree $-1$ being projected to zero is consistent with the cohomology $H_{-1}$ being trivial. 
This follows because the harmonic oscillator equations (\ref{eq:KG}) have no solutions that are 
compactly supported, hence  $\ker d_{-1} = 0$ and so $H_{-1}=\ker d_{-1}/{\rm im} \, d_{-2}=0$.
We will confirm next  that the total  map $\Pi$ is a quasi-isomorphism, hence 
the total complex is quasi-isomorphic to $\mathbb{C}^2$.

We prove this by showing that $\Pi$ is a chain map and has a homotopy inverse. First of all, $\Pi$ being a chain map means that it commutes with the differential, 
$\Pi\,d = d\,\Pi$, where by slight abuse of notation we denote the differentials upstairs and downstairs 
both by $d$.  Since the differential downstairs is actually trivial this in turn amounts to $\Pi\, d=0$, 
which is only non-trivial when acting on an $f$ in degree $-1$: 
 \begin{equation}\label{Pid=0} 
  (\Pi\,d_{-1})(f) = \Pi(df)  = \big(\langle df, \phi_q\rangle, \langle df, \phi_p\rangle\big) = (0,0)\;. 
 \end{equation} 
The last step follows, as in (\ref{DualDiff}), since $d=\partial_t^2+\omega^2$ can be integrated by parts, after which it annihilates the solutions $\phi_q$ and $\phi_p$.

Next we need to show that there is a homotopy inverse chain map $I: \mathbb{C}^2 \rightarrow C^\infty_c(I)$ 
in degree zero together with a \textit{homotopy map} $\mathfrak{h} : C^\infty_c(I)\rightarrow C^\infty_c(I)$ mapping degree zero functions 
to degree $-1$ functions (i.e.~having intrinsic degree $-1$),
 so that 
 \begin{equation}\label{homotopyREL} 
 \begin{split}
  \Pi \circ I &= {\rm id}_{\mathbb{C}^2}\;, \\
  I \circ \Pi  &= {\rm id} - d\circ \mathfrak{h} - \mathfrak{h}\circ d\;. 
 \end{split}
 \end{equation} 
Denoting the compactly supported function in degree zero obtained by the inclusion map from $(q,p)\in \mathbb{C}^2$ as 
$I(q,p)= \mathfrak{f}_{q,p} = q f_Q + p f_P$ with $f_Q,f_P \in C^\infty_c(I)$ the first condition implies 
 \begin{equation}
  \Pi(I(q,p)) = \Pi(\mathfrak{f}_{q,p}) = (\expval{\mathfrak{f}_{q,p}, \phi_q}, \expval{\mathfrak{f}_{q,p}, \phi_p}) = (q,p)\;. 
 \end{equation} 
Recalling the basis solutions (\ref{basisharmfunct}) this condition in turn requires 
 \begin{equation}\label{eq:FQFP}
 \begin{split}
  \int_I dt \,f_Q(t) \cos (\omega t) &= 1\;, \quad 
   \int_I dt \,f_{Q}(t) \sin (\omega t) = 0 \; , \\
  \int_I dt \,f_P (t) \cos (\omega t) &= 0\;, \quad 
   \int_I dt \,f_{P}(t) \sin (\omega t) = \omega \; .
\end{split}
 \end{equation} 
Upon choosing 
$f_Q \in C^\infty_c(I)$ satisfying the first two conditions, one can then define $f_P = - \dot f_Q$ in order to satisfy the last two conditions.

The second relation in (\ref{homotopyREL}) implies two relations depending on the degree of $f$ on which 
it is evaluated: 
 \begin{equation}\label{homotopycomp} 
 \begin{split}
  &|f|=0\,:\quad I\Pi(f) - f = -d(\mathfrak{h}(f))\;,  \\
  &|f|=-1\,: \quad -f = - \mathfrak{h}(df) \;, 
 \end{split}
 \end{equation} 
where we used that $\Pi$ and $\mathfrak{h}$ are only non-trivial when acting on degree zero functions. 
We will now construct such a homotopy map from the simpler map 
\begin{equation}
    h(f)(t) := \frac{i}{2\omega}\int ds\, e^{-i\omega |t - s|}f(s)\,, 
    \label{eq:h}
\end{equation}
which invertes the differential 
from the left and the 
right:
\begin{equation}\label{dhinvers}
 d(h(f)) =  f\;, \qquad h(df) = f \;. 
\end{equation}
The proof of this fact is given in appendix \ref{App:Propagator}.

By itself $h$ is not an adequate homotopy map, since $h(f)$ is not necessarily compactly supported. 
The proper homotopy map is instead 
given by 
\begin{equation}
    \mathfrak{h} = h \circ (1 - I \circ \Pi)\;. 
\end{equation}
Note that $\Pi \circ I = {\rm id}$ implies that  $1 - I \circ \Pi$ projects 
onto the kernel of $\Pi$ since any function of the form $f - I \circ \Pi(f)$ is annihilated by $\Pi$ and hence an element of $\ker\Pi$.  
Put differently, $h$ and $\mathfrak{h}$ differ on $C^\infty_c(I)$, but agree on $\ker\Pi$.
Since $\Pi \text d f = 0$, in degree zero we use 
the first relation in (\ref{dhinvers})
to infer 
\begin{equation}
    \begin{aligned}
        d\mathfrak{h}(f) &= (1 - I\Pi)(f)\;. 
    \end{aligned}
\end{equation}
Thus, in total, (\ref{homotopycomp}) is obeyed. 

It is left to show that $\mathfrak{h}$ is well defined as a map $\mathfrak{h}: C^\infty_c(I) \rightarrow C^\infty_c(I)$. The map $h$ is discussed further in appendix \ref{App:Propagator}, where it is shown that $h(f)$ is smooth for any $f \in C^\infty_c(I)$. Since also $f - I \circ \Pi(f)$ is smooth for any $f \in C^\infty_c(I)$, $\mathfrak{h}(f)$ is smooth. Further, it follows from the discussion in the appendix that for any $f \in \ker \Pi$, meaning that $f$ vanishes when integrated against solutions, $h(f)$ is contained in the support of $f$. Then, since $f - I \circ \Pi(f)$ is supported in $I$, so is $\mathfrak{h}(f)$. This shows that $\mathfrak{h}: C^\infty_c(I) \rightarrow C^\infty_c(I)$ is well defined as a map of smooth functions with support in $I$. 

\medskip

We now turn to a particular class of linear functionals that will be important in order  to connect to 
familiar quantum mechanics: the position and momentum observables. 
Denoting the position observable by $Q$ and the momentum observable by $P$ we parameterize 
them by compactly supported functions $f_Q$ and $f_P$, respectively, 
 \begin{equation}\label{momentumandpositionoperator} 
 Q[\phi] = \langle f_Q, \phi\rangle\;, \qquad P[\phi] = \langle f_P, \phi\rangle\;. 
 \end{equation} 
In principle we expect these functionals to reproduce, say at time $t = 0$,  
the position $\phi(0)$ and the momentum $\Dot{\phi}(0)$. However, setting 
$Q[\phi] = \phi(0)$ and $P[\phi] = \Dot{\phi}(0)$ would require $f_Q$ and $f_P$ 
to be the Dirac delta function and its derivative, respectively, which are not actually smooth functions.   
However, for the following it will be sufficient to impose $Q[\phi] = \phi(0)$ and $P[\phi] = \Dot{\phi}(0)$
only \textit{on solutions},  for which it is sufficient 
to approximate the delta function 
by a smooth function $f_Q$ whose support is 
contained in the interval $I$, and similarly for $f_P$. 
Evaluating (\ref{momentumandpositionoperator}) on a solution 
written as $\phi(t) = q\phi_q(t) + p\phi_p(t)$, c.f.~(\ref{eq:KG_Solution}), yields 
 \begin{equation}
 \begin{split}
  Q[\phi ] &= q \langle f_Q, \phi_q\rangle + p  \langle f_Q, \phi_p\rangle \stackrel{!}{=} \phi(0)= q \;, \\
  P[\phi] &= q \langle f_P, \phi_q\rangle + p  \langle f_P, \phi_p\rangle \stackrel{!}{=} \dot{\phi}(0) = p \;, 
 \end{split} 
 \end{equation} 
from which we infer the following conditions on $f_Q$, $f_P$: 
\begin{equation}\label{fQfPcond} 
    \begin{aligned}
        \expval{f_Q, \phi_q} = \expval{f_P, \phi_p} = 1\;, \\
        \expval{f_Q, \phi_p} = \expval{f_P, \phi_q} = 0\;.
    \end{aligned}
\end{equation}
Note that these are also the conditions we found in \eqref{eq:FQFP} when we constructed the inclusion map $I$. 
It should be emphasized that these conditions are the only ones we need to impose. 
In particular, $f_Q$ does not actually need to be peaked around zero, and one may even have 
two  different `representatives' $f_Q$ and $f_Q'$ satisfying the above conditions that have 
disjoint support inside $I$.  
One can easily construct compactly supported functions satisfying these integral conditions, say by the 
familiar method of combining non-analytic functions such as $e^{-1/x^2}$ and extending them by zero appropriately. 
There are of course infinitely many such functions; we assume that a specific choice has been made 
for the functions $f_Q$ and $f_P$ 
and hence for the observables $Q$ and $P$, 
satisfying the above conditions,  but we will show next  that the ambiguity entailed in this choice is immaterial 
in cohomology.\footnote{We can visualize $f_Q$ to be any smooth compactly supported finite-width approximation to the 
delta function, with the width and hence the order of approximation  being immaterial at the end. Informally we can think of 
the Dirac delta function as living in the same cohomology class as the well-defined $f_Q$, except of course 
that only equivalence classes of smooth functions actually live in the cohomology. }

Let us then return to  the homotopy retract to the cohomology $\mathbb{C}^2$ discussed above.  
We can think of the above chosen functions $f_Q$ and $f_P$ in degree zero as basis functions of the 
cohomology in that \textit{any} function $f$ in degree zero can be written as 
 \begin{equation}\label{generalFdecomp} 
  f = u f_Q + w f_P + df_{-1} \;, \qquad u, w \in \mathbb{C}\;, 
 \end{equation}
for some function $f_{-1}$ in degree $-1$. Under the projector (\ref{PiMap}) we have, using 
(\ref{fQfPcond}), $\Pi(f)=(u, w)\in \mathbb{C}^2$. Conversely, the inclusion $I$ maps  a given $(q,p)\in \mathbb{C}^2$ to 
the following function in degree zero:  
 \begin{equation}\label{InclusionFinally} 
  I(q,p) = q f_Q + p f_P\;. 
 \end{equation} 
The first relation in (\ref{homotopyREL}), $\Pi\circ I = {\rm id}$, then follows immediately. 
The relation (\ref{InclusionFinally}) 
explains the ambiguity in the choice of $f_P$ and $f_Q$: it corresponds to the 
ambiguity of how to embed the cohomology into the full space. Since, as we will argue, 
physical quantum mechanics in particular 
only lives in cohomology this  specific choice is irrelevant.

We close this subsection by introducing a particular change of basis of observables that  sometimes will 
be more convenient  for the relation with familiar quantum mechanics. 
Instead of working with $Q$ and $P$ we can also use the observables  $A$ and $A^\dagger$ defined as
\begin{align}
    A := \sqrt{\frac{\omega}{2\hbar}}\Big(Q + \frac{i}{\omega}P\Big)\;, \qquad 
    A^\dagger := \sqrt{\frac{\omega}{2\hbar}}\Big(Q - \frac{i}{\omega}P\Big)\;.
\end{align} 
Parameterizing these functionals in terms of compactly supported smooth functions as 
 \begin{equation}
  A[\phi] = \langle f_a, \phi\rangle\;, \qquad A^{\dagger}[\phi] = \langle f_{a^{\dagger}} , \phi\rangle  \,, 
 \end{equation} 
we have 
 \begin{equation}
  f_a = \sqrt{\frac{\omega}{2\hbar}}\Big( f_Q+\frac{i}{\omega}f_P\Big) \;, \qquad
  f_{a^{\dagger}}  = \sqrt{\frac{\omega}{2\hbar}}\Big( f_Q- \frac{i}{\omega}f_P\Big)\;. 
 \end{equation} 
In terms of these and the similarly redefined basis of classical solutions 
\begin{equation}\label{sigmaPLUSM} 
    \sigma_\pm := \sqrt{\tfrac{\hbar}{2\omega}}\,e^{\mp i\omega t} = \sqrt{\tfrac{\hbar}{2\omega}}\big(\phi_q \mp i\omega \phi_p\big)\;,
\end{equation}
with inverse $\phi_q=\sqrt{\frac{\omega}{2\hbar}}(\sigma_+ + \sigma_-)$, $\phi_p=i \sqrt{\frac{\omega}{2\hbar}}(\sigma_+ - \sigma_-)$, 
the normalization conditions (\ref{fQfPcond}) on $f_Q$ and $f_P$ translate to normalization  conditions on $f_a$ and $f_{a^\dagger}$: 
\begin{equation}\label{fafadaggercomp} 
\begin{split} 
    \expval{f_a, \sigma_+} = 1\,, &\qquad\; \expval{f_a, \sigma_-} = 0\;, \\
    \expval{f_{a^\dagger}, \sigma_+} = 0\, , &\qquad \expval{f_{a^\dagger}, \sigma_-} = 1\, .
\end{split} 
\end{equation}
Finally, the projection $\Pi : C^\infty_c(I) \rightarrow \mathbb{C}^2$ defined in  (\ref{PiMap})  
can be rewritten after a quick computation as follows: 
\begin{equation}
\begin{aligned}
    \Pi (f) &= \expval{f, \sigma_+}a + \expval{f, \sigma_-}a^\dagger 
    \equiv f_-a + f_+a^\dagger\;,
\end{aligned}
\label{eq:projector_classical}
\end{equation}
where 
 \begin{equation}\label{aadaggerbasis} 
  a \equiv \sqrt{\tfrac{\omega}{2\hbar}}\,(\,1\,,\,i\,) \;, \qquad  a^{\dagger}  \equiv \sqrt{\tfrac{\omega}{2\hbar}}\,(\,1\,,\,-i\, )\;, 
 \end{equation} 
defines a new basis of $\mathbb{C}^2$, which here is expressed in the canonical basis, 
although the specific form will be irrelevant for 
the following applications.  
We will frequently use the abbreviation $f_\pm = \expval{f, \sigma_\mp}$  in later sections.

\subsection{Symmetric algebra}

We now enlarge the vector space of linear observables or functionals to include a certain class 
of multi-linear operators. 
We could consider observables 
on fields $\phi$ of the form 
 \begin{equation}\label{Completion}
    F[\phi_1,\ldots,\phi_n] = \int dt_1 \cdots \int dt_n \,f(t_1, \ldots , t_n)\,\phi_1(t_1)\cdots \phi_n(t_n)\,,
\end{equation}
with smooth compactly supported coefficient functions $f$. 
However, 
in the following it will be sufficient to restrict to functionals of the above form where the coefficient functions are the 
product of smooth compactly supported functions: $f(t_1, \ldots , t_n)=f_1(t_1)\cdots f_n(t_n)$.
A functional $F[\phi_1,\ldots,\phi_n]$ can then be written in terms of our above notation as 
 \begin{equation}\label{functionalFF} 
  F[\phi_1,\ldots,\phi_n] = \langle f_1,\phi_1\rangle \cdots \langle f_n, \phi_n\rangle \;. 
 \end{equation}
This writing suggests to think of a non-linear functional as a formal string  $f_1f_2\cdots f_n$ of functions in 
the dual complex. The total vector space of such objects  is also known as the \textit{symmetric algebra} denoted Sym. 
We thus define 
\begin{equation}\label{ClassObs} 
    {\rm Obs}^{cl}(I) := {\rm Sym} \Big(\xymatrix{C^\infty_c(I) \ar[r]^{d_{-1}} & C^\infty_c(I)}\Big)\;.
\end{equation}
More precisely, elements of the symmetric algebra are defined to be formal products (polynomials) of functions
\begin{equation}\label{littlefWord} 
   F =  f_1 \wedge f_2 \wedge \ldots \wedge f_n\;, 
\end{equation}
which in the following we will typically denote by capital letters $F$, $G$, etc., with 
the understanding that they can also be viewed as functionals acting as (\ref{functionalFF}). 
Note that such words may contain no $f$, in which case it is just a (complex) number, one $f$  or any finite number of $f$'s, 
which defines a monomial degree or grading of the space. The previous space of linear observables 
is included as the subspace of linear words. Moreover, in the symmetric algebra the formal wedge product 
is graded commutative in that we demand 
\begin{equation}
    f_1 \wedge f_2 = (-1)^{|f_1||f_2|}f_2 \wedge f_1\;,
\end{equation}
and similarly for longer words. Here $|f|$ denotes the degree of $f$, i.e., $|f|=-1$ if $f$ belongs to the left-most space 
and $|f|=0$ is $f$ belongs to the right-most space.  

The above wedge product defines a graded commutative associative product on the symmetric algebra. 
The latter also becomes a \textit{differential graded} algebra, where the map $d$ is extended to the symmetric algebra by 
demanding that it acts 
like $d_{-1}$ in \eqref{ClassObs} on linear observables and on higher monomials 
via the Leibniz rule: 
\begin{equation}\label{LEIBNIZ} 
    d(F \wedge G) = dF \wedge G + (-1)^{|F|} F \wedge dG\, ,
\end{equation}
with $F$ and $G$ general elements of the symmetric algebra, 
and where the degree $|F|$ of a word $F$ is the sum of the degrees of its individual functions or `letters'. 
This makes  the symmetric algebra  ${\rm Obs}^{cl}(I) $ defined in (\ref{ClassObs})  
into a differential graded commutative associative algebra.

Let us next ask  how to lift  the projection $\Pi$ to the cohomology to the full 
symmetric algebra. This can simply be done by demanding that $\Pi$ 
acts like a morphism:\footnote{In particular, on a number $w\in \mathbb{C}$ (an object of monomial degree zero) 
we set $\Pi(w) = w$ so that for a degree zero function $f$ we have $\Pi(wf)=w\Pi(f) = \Pi(w)\Pi(f)$.   } 
\begin{equation}\label{Piviamorph}
    \Pi(f_1 \wedge f_2 \wedge \ldots \wedge f_n) = \Pi(f_1) \wedge \Pi(f_2) \wedge \ldots \wedge \Pi(f_n)\;.
\end{equation}
This defines a map 
 \begin{equation}
  \Pi: \;{\rm Obs}^{cl}(I) \ \longrightarrow \ {\rm Sym}(\mathbb{C}^2)\;, 
 \end{equation} 
from the symmetric algebra of classical observables to the symmetric algebra of $\mathbb{C}^2$. 
Note that since $\Pi$ is a chain map on the individual spaces and since the derivation acts 
via the Leibniz rule on the symmetric algebra, $\Pi$ is also a chain map on the full symmetric algebra, 
which will not be true for the BV differential to be discussed shortly. 
Also note that since the original $\Pi$ is only non-trivial on degree zero functions, $\Pi$ acting on a general word 
(\ref{littlefWord}) is only non-zero if and only if every letter is of degree zero.

Let us now return to the commutative associative algebra structure on ${\rm Obs}^{cl}(I)$. Does it descend 
to a  commutative associative product on the cohomology? 
To address this question we denote the equivalence class containing a word $F$ by square brackets, i.e., 
 \begin{equation}
    [F] := \left\{F' \in {\rm Obs}^{cl}(I) \;\Big|\; \Pi(F') = \Pi(F)\right\}\;, 
\end{equation}
which  is of course equivalent to  
demanding $[F]=[G]$ if and only if $F=G + dF'$, c.f., the discussion around (\ref{generalFdecomp}). 
The natural definition of the product on cohomology is given by 
\begin{equation}
    [F] \star [G] := [F \wedge G]\,. 
\end{equation}
This  is by definition graded commutative and associative, but we have to 
verify that it is well-defined on cohomology. To this end we have to show that
 the product is independent of the chosen representatives. Replacing then $F$ 
 by $F+dF'$ we have 
  \begin{equation}\label{firstorderwelldefined} 
   [F+dF'] \star [G] = [(F+dF')\wedge G]  = [F\wedge G + d(F'\wedge G)] = [F\wedge G] \;, 
  \end{equation} 
where we used the Leibniz rule (\ref{LEIBNIZ}) and $dG=0$ by degree reasons. 
Thus, the product on cohomology is well-defined, which we see here to be a direct consequence of the 
differential $d$ being first order in the sense of obeying the Leibniz rule. 
This will change in the BV algebra needed for quantization, to which we turn next.

\subsection{BV algebra}
\label{sec:BValgebra}

So far we have shown that the symmetric algebra ${\rm Obs}^{cl}(I)$ defined in (\ref{ClassObs}) carries 
a differential graded commutative algebra that descends on cohomology to the symmetric algebra of $\mathbb{C}^2$. 
In terms of the basis  in (\ref{aadaggerbasis}) this vector space  consists 
of all words in  $a$ and $a^\dagger$ which is fully commutative.  
In order to recover the Weyl algebra of quantum mechanics, with $a$ and  $a^\dagger$ satisfying the 
commutation relation $[a, a^\dagger] = 1$, we have to consider the Batalin-Vilkovisky (BV) algebra in which the differential is deformed by the BV operator, with $\hbar$ being the deformation parameter.

The BV operator acts on the symmetric algebra and reduces the monomial degree by two. 
It thus acts on numbers and linear monomials trivially. On a quadratic  monomial 
consisting of two functions $f$ and $g$ with degrees zero and $-1$, respectively,  it acts as \cite{Costello_Gwilliam_2016}\footnote{The form for the BV operator  given here is not the familiar one given by  Batalin and Vilkovisky \cite{batalin1981gauge, batalin1983quantization}, but rather uses the formulation of \cite{Costello_Gwilliam_2016}. In the more familiar formulation the BV operator is written 
as 
\begin{equation}
 \Delta : = i\hbar \int_I dt \frac{\delta^2}{\delta \phi(t) \delta \phi^*(t)}\,.  
\end{equation}
This acts on a functional of a degree zero field $\phi$  and a degree  one `anti-field' $\phi^*$,  
 \begin{equation}
  F[\phi,\phi^*] = \int dt_1 \phi(t_1)f(t_1) \int dt_2 \phi^*(t_2) g(t_2)\,, 
 \end{equation}
as $ \Delta  F =  i\hbar \int_I dt f(t)g(t)$, in agreement with (\ref{BVDELTA}).}
\begin{equation}\label{BVDELTA} 
    \Delta(f \wedge g) := i\hbar \expval{f, g} = i\hbar \int_I dt f(t)g(t) \;. 
 \end{equation}
Note that since $\Delta$ eliminates a degree $-1$ function  it effectively increases the total degree by $+1$. 
Like the original differential $d$, $\Delta$ thus has intrinsic degree $+1$. 
On the  full symmetric algebra $\Delta$
acts as 
\begin{equation}\label{fullDeltaAction} 
    \Delta(f_1 \wedge \ldots  \wedge f_n) := \sum_{i < j}(\pm)\Delta(f_i \wedge f_j)f_1 \wedge \ldots \wedge \widehat{f}_i \wedge 
    \ldots \wedge \widehat{f}_j \wedge \ldots  \wedge f_n\;,
\end{equation}
where the notation $\widehat{f}_i$ indicates that these elements are left out, and the sign is the natural one obtained by 
moving $f_i$ and $f_j$ to the front. 
This formula determines a second order differential operator on $\text{Obs}^{cl}(I)$.\footnote{In general, on any symmetric algebra $\text{Sym}(V)$ over a graded vector space $V$, an order-$n$ differential operator $\partial_D$ is in one to one correspondence with a linear map $D: \mathbb{C} \oplus V \oplus \ldots V^{\wedge n} \rightarrow \text{Sym}(V)$. For example, $d_{cl}: C^\infty_c(I) \rightarrow C^\infty_c(I) \subseteq \text{Obs}^{cl}(I)$ determines a first order operator on $\text{Obs}^{cl}(I)$ via the Leibniz rule and $\Delta: C^\infty_c(I) \wedge C^\infty_c(I) \rightarrow \mathbb{C} \subseteq \text{Obs}^{cl}(I)$ determines a second order differential operator via \eqref{fullDeltaAction}. The reader unfamiliar with these facts can consult for example \cite{Markl:1997bj}, where higher order differential operators, both in general and on symmetric algebras over vector spaces, are discussed.}
 
With the above rule one may verify that $\Delta^2=0$. For instance, for degree zero functions $f_1$, $f_2$ and degree 
$-1$ functions $g_1$, $g_2$ the BV operator acts on the quartic  monomial built with these as 
 \begin{equation}\label{Deltadegreefour} 
 \begin{split}
  \Delta(f_1\wedge g_1\wedge f_2\wedge g_2) = &\,\Delta(f_1\wedge g_1) \,f_2\wedge g_2
  -\Delta(f_1\wedge g_2) \,g_1\wedge f_2\\
  &\,+\Delta(g_1\wedge f_2) \,f_1\wedge g_2 -\Delta(f_2\wedge g_2) \,f_1\wedge g_1\;. 
 \end{split} 
 \end{equation} 
Here we have only displayed the terms where $\Delta$ acts on one $f$ and one $g$, for otherwise it gives 
zero by degree reasons. Note that a sign is generated whenever one $g$ function has been moved past another 
$g$ function, since they have degree $-1$ and are hence odd. Owing to these signs it is evident that 
in $\Delta^2(f_1\wedge g_1\wedge f_2\wedge g_2)$ terms cancel pairwise, proving $\Delta^2=0$ 
on quartic  monomials. 
The general case follows after observing that $\Delta^2 = \frac{1}{2}[\Delta,\Delta]$ is a third order operator, since it is the commutator of two second order operators. Therefore, $\Delta^2$ is necessarily determined by its action 
\begin{equation}
\text{Sym}^{\le 3}\big(\xymatrix{C^\infty_c(I) \ar[r]^{d_{-1}} & C^\infty_c(I)}\Big) \;,
\end{equation} 
where it is trivially zero.

While $\Delta$ is nilpotent it does not act via the Leibniz rule. To illustrate this let 
$f_1$, $f_2$, $g_1$, $g_2$ be defined as above, and we define the quadratic  monomials 
 $F_1 = f_1 \wedge g_1$, $F_2 = f_2 \wedge g_2$. 
Using (\ref{Deltadegreefour}) one quickly infers that $\Delta$ does not act like a derivation; 
rather, the failure of $\Delta$ to obey the Leibniz rule, 
\begin{equation}
     (-1)^{|F_1|}\{F_1, F_2\} : = \Delta(F_1 \wedge F_2) - \Delta(F_1) \wedge F_2 - (-1)^{|F_1|}F_1 \wedge \Delta(F_2) \;, 
\end{equation}
is given by 
\begin{equation}\label{antibracket} 
    \{F_1, F_2\} = -i\hbar \langle  f_1, g_2\rangle g_1\wedge f_2  +i\hbar \langle  g_1, f_2 \rangle f_1\wedge g_2\;.  
\end{equation}
This defines the `anti-bracket' that, thanks to $\Delta$ being second order,  obeys the graded Jacobi identity and 
thus defines a graded Lie algebra. 
The data of a vector space (here ${\rm Obs}^{cl}(I)$) equipped  with a graded symmetric product (here the wedge product 
of the symmetric algebra) and a degree one differential $\Delta$ that is of second order defines a \textit{BV algebra}.

The main point is now that quantum mechanics will be encoded in the cohomology of a BV algebra based on 
the differential 
 \begin{equation}\label{deltaBV} 
  \delta_{\rm BV} := d + \Delta\;. 
 \end{equation} 
This defines a BV algebra structure on the symmetric algebra  since the sum of a first order operator and a second order operator is still of second order. 
Note that  since both the original differential $d$ and the BV operator $\Delta$ are of degree one, $ \delta_{\rm BV}$ is of degree one. 
Moreover, we have $ \delta_{\rm BV}^2=0$  since  $d^2=\Delta^2=0$ and one may verify 
that $d\Delta+\Delta d=0$\footnote{This is the graded commutator of the first order operator $d$ and 
the second order operator $\Delta$, which is again a second order operator, for which in turn it is sufficient to verify this relation on quadratic monomials.}. 
We thus define the BV algebra underlying quantum mechanics as 
\begin{equation}
    {\rm Obs}^q(I) := {\rm Sym} ({\rm Obs}^{\rm lin}(I),\delta_{\rm BV})\;.
    \label{eq:obs_q_def}
\end{equation}

Replacing $d$ by $\delta_{\rm BV}$ leaves the cohomology as encoded in (\ref{cohomologyPROJ}) 
unchanged, which is thus still given by $\mathbb{C}^2$. This could be shown using the perturbation lemma \cite{crainic2004perturbationlemmadeformations}, viewing 
the BV operator $\Delta$ as a perturbation of $d$, but we will in a moment give an explicit quasi-isomorphism to prove this fact. 
The map $\Pi$ onto the cohomology, acting as a morphism via (\ref{Piviamorph}) on the symmetric algebra, 
is no longer a chain map w.r.t.~$\delta_{\rm BV}=d+\Delta$. 
To see this consider functions $f$ and $g$ with $|f|=0$, $|g|=-1$.  
Then 
 \begin{equation}\label{chainmapproppp} 
  (\Pi\circ \delta_{\rm BV})(f\wedge g) = \Pi(f\wedge dg + \Delta(f\wedge g)) 
  =\Pi(f)\wedge \Pi(dg) + \Delta(f\wedge g) = i\hbar \langle f, g \rangle \;, 
 \end{equation} 
where we used $df=0$ by degree reasons and $\Pi(dg)=0$ by (\ref{Pid=0}). 
The right-hand side is generally non-zero. 
However, for $\Pi$ and the differentials to commute this should really be zero since the cohomology 
has no differential left. We can next define a proper chain map by correcting $\Pi$. 
Let us rename the current map acting as a morphism on the symmetric algebra $\Pi_0$; 
the corrected map is then defined as 
\begin{equation}\label{improvedPI} 
    \Pi := \Pi_0\, e^{-C}\;, \quad \text{where}\quad C:= \Delta \circ (1 \otimes h)\;, 
\end{equation} 
with the map $h$ in (\ref{eq:h}), so that $C(f_1\wedge f_2)=\Delta(f_1\wedge h(f_2))$ for $f_1, f_2$ of degree zero and zero otherwise. $C$ is extended to $\text{Obs}^{cl}(I)$ as a second order derivation. It now follows that 
\begin{equation} 
[d,C] = -\Delta\;. 
\end{equation} 
For instance, on the quadratic monomial $f\wedge g$, with $f, g$ defined as above, 
we have 
 \begin{equation}
   [d,C](f\wedge g) = -Cd(f\wedge g) 
   =-C(f\wedge dg) = -\Delta(f\wedge h(dg)) 
   = - \Delta(f\wedge g)\;. 
 \end{equation}
The general case follows since $[d,C]$ is of second order. By the same argument that $[\Delta,\Delta] = 0$, it then also follows that $[C,\Delta] = 0$. From this we deduce that
\begin{equation}
    [d,e^{-C}] = -e^{-C}[d,C] = e^{-C} \Delta \, ,
\end{equation}
from which we deduce that 
\begin{equation}
d e^{-C} = e^{-C}(d+\Delta) = e^{-C} \delta_{\rm BV}\;. 
\end{equation}
Therefore, $e^{-C}$ defines a chain map
\begin{equation}
    e^{-C}: \,\text{Obs}^q(I) \rightarrow \text{Obs}^{cl}(I) \, ,
\end{equation}
which is in fact an isomorphism, since it has an inverse given by $e^{C}$. Then, since $\Pi_0: \text{Obs}^{cl}(I) \rightarrow \text{Sym}(\mathbb{C}^2)$ is a quasi-isomorphism, the composition $\Pi = \Pi_0 \circ e^{-C}$ defines a quasi-isormorphism from the quantum observables $\text{Obs}^q(I)$ to $\text{Sym}(\mathbb{C}^2)$.

While the above shows that also for the BV algebra of quantum observables there is a 
well-defined chain map to the cohomology $\mathbb{C}^2$, the 
algebra structure does not descent to an algebra on 
the cohomology. This is so because we needed the differential defining the cohomology  to be of first order 
with respect to the product, c.f., the discussion around (\ref{firstorderwelldefined}). 
This problem will be solved in the next section by considering factorization algebras 
which consist of a family of spaces ${\rm Obs}^q(I)$ for non-overlapping open intervals $I$, and on such 
intervals the failure of $\delta_{\rm BV}$ to be first order vanishes. 
This can be seen by inspecting the  anti-bracket (\ref{antibracket}), which  vanishes if the two arguments are defined 
on disjoint intervals, for then integrals such as $\langle  f_1, g_2\rangle$ are zero. 
The resulting algebra structure 
then encodes the familiar operator algebra  of quantum mechanics including  $[q, p] = i\hbar$.

\section{Factorization algebras and Weyl algebra}

\subsection{Prefactorization algebras}

So far we have established a BV algebra structure on the symmetric algebra (\ref{eq:obs_q_def})  of quantum observables   
with differential $\delta_{\rm BV} := d + \Delta$. However, since this differential is of second order with respect to the 
graded commutative product  of the symmetric algebra, the cohomology of $\delta_{\rm BV}$ does not 
inherit a well-defined algebra structure. In order to obtain the familiar operator algebra of quantum mechanics we 
have to enlarge the framework once more to \textit{factorization algebras}. 
The latter assign a vector space to each open set of spacetime. 
In quantum mechanics, we consider `spacetime' to be $\mathbb{R}$, where to each open interval $I$, we assign the vector spaces $\mathfrak{F}(I):= {\rm Obs}^q(I)$ of quantum observables on $I$. 
The factorization algebra structure includes a bilinear 
product $m_{I_1,I_2}^{J}:\mathfrak{F}(I_1)\otimes \mathfrak{F}(I_2)\rightarrow \mathfrak{F}(J)$
only for \textit{disjoint} intervals $I_1$, $I_2$ contained in $J$. 
Thanks to the intervals being disjoint, the BV differential $\delta_{\rm BV}$ becomes effectively first order 
since its failure to be so encoded in the anti-bracket (\ref{antibracket}) vanishes. 
The algebra structure inherited by the cohomology of $\delta_{\rm BV}$ 
then turns out to be the Weyl algebra of quantum mechanics. 
Indeed, we will show that there  is a quasi-isomorphism from the  factorization algebra to the Weyl algebra of quantum mechanics. 

We will now give a minimal definition of a prefactorization algebra, which will determine a prefactorization algebra as defined in  \cite{costelloFactorization}:
\begin{defn}
A \textit{prefactorization algebra} on a topological space $X$, in the following usually denoted as $\mathfrak{F}$, consists of 
the following data and axioms:
\begin{itemize}
    \item{For each connected open set ${U} \subseteq X$ one assigns a vector space $\mathfrak{F}({U})$.\\[1ex] 
 \textit{\big[For us these are the vector spaces $\mathfrak{F}(I):= {\rm Obs}^q(I)$ of quantum observables on $I$.\big]} }
    
    \item{For each inclusion of connected open sets ${U} \subseteq {V} \subseteq X$ one has a linear map of vector spaces 
    $m_{{U}}^{{V}} : \mathfrak{F}({U}) \longrightarrow \mathfrak{F}({V})$.\\[1ex] 
   \textit{\big[For our example  this is the natural inclusion map $m_{I}^{J}: \,\mathfrak{F}(I)\rightarrow \mathfrak{F}(J)$ 
   for intervals $I \subseteq J$: any functional  $F$ in $\mathfrak{F}(I)$ also has compact support on $J$.\big] }
   }

    \item{For each two disjoint connected open sets ${U}_1, {U}_2$ contained in some connected open set ${V}$, we have a bilinear map of vector spaces  
    $m_{{U}_1\, {U}_2}^{{V}} : \mathfrak{F}({U}_1) \otimes \mathfrak{F}({U}_2) \longrightarrow \mathcal{F}({V})$. \\[1ex] 
  \textit{\big[In our example, the map 
  $m_{I_1,I_2}^{J}:\mathfrak{F}(I_1)\otimes \mathfrak{F}(I_2)\rightarrow \mathfrak{F}(J)$  for disjoint intervals $I_1, I_2$ contained in $J$ 
  takes functionals $F_1 \in \mathfrak{F}(I_1)$ 
  and $F_2\in \mathfrak{F}(I_2)$ and then defines $m_{I_1,I_2}^J(F_1,F_2) = m_{I_1}^{J}(F_1) \wedge m_{I_2}^{J}(F_2)$. Note that this is the (formal) wedge product, 
  which is non-trivial, in contrast to the point-wise product of functions which with the two functions having support in disjoint intervals 
  is trivially zero.\,\big]}}
    \item{The inclusion maps are compatible in the sense that, for each inclusion of open sets $U \subseteq V \subseteq W$, we have $m_V^W \circ m_U^V = m_U^W$. Furthermore, we require that $m_V^V = \text{id}_{\mathfrak{F}(V)}$.\\[1ex] 
 \textit{ \big[For our case, this is trivially obeyed.\big]}}
    \item{The inclusion $m_{U}^V$ and the product $m_{U_1,U_2}^V$ are compatible in the sense that for connected and open sets $U_1,U_2 \subseteq V \subseteq W$, with $U_1,U_2$ disjoint, we have 
    \begin{equation}
     m_{V}^W \circ m_{U_1,U_2}^V = m_{U_1,U_2}^W \, .
    \end{equation}
    Similarly, for $U_1 \subseteq V_1$, $U_2 \subseteq V_2$, $V_1 \cup V_2 \subseteq W$, with $U_1,U_2$ disjoint and separately $V_1,V_2$ disjoint, we have
        \begin{equation}
    m_{V_1,V_2}^W \circ (m_{U_1}^{V_1} \otimes m_{U_2}^{V_2}) = m_{U_1,U_2}^W \, .
    \end{equation}
    \textit{\big[In our example, this is obeyed due to the triviality of the embeddings $C^\infty_c(I) \rightarrow C^\infty_c(J)$ for any open intervals $I \subseteq J$.\big]}}
    
    \item{The maps $m_{U_1,U_2}^V$ obey natural associativity relations. For each three disjoint and connected open sets ${U}_1, {U}_2, {U}_3$ contained in some connected open set ${W}$ and connected open sets ${V}_1,{V}_2$, 
     so that ${U}_1, {U}_2 \subseteq {V}_1$ and ${U}_2, {U}_3 \subseteq {V}_2$,  that are themselves contained in ${W}$, one demands the compatibility condition 
    \begin{equation}\label{CompaAss} 
         m_{{V}_1{U}_3}^{{W}} \circ \left( m_{{U}_1{U}_2}^{{V}_1}\otimes {\rm id} \right)  
        = m_{{U}_1{V}_2}^{{W}} \circ 
        \left( {\rm id} \otimes m_{{U}_2 {U}_3}^{{V}_2}\right) \;. 
    \end{equation}
This is indicated in the figure.\\[1ex] 
 \textit{ \big[In our example, with all sets being open intervals, both sides of the equation, when evaluated  
  on $F_1\in \mathfrak{F}(U_1)$, $F_2\in \mathfrak{F}(U_2)$ and $F_3\in \mathfrak{F}(U_3)$, yields 
  $F_1\wedge F_2\wedge F_3$ with all $F_{i}$ considered as elements in $\mathfrak{F}(W)$, 
  and so the compatibility condition is obeyed.\big]}}
  \item{For all $U,V,W$ connected and open as well as $U$ and $V$ disjoint with $U \cup V \subseteq W$, we have
  \begin{equation}
      m_{U,V}^W(a,b) = (-)^{ab}m_{V,U}^W(b,a)
  \end{equation}
  for all $a \in \mathcal{F}(U)$ and $b \in \mathcal{F}(V)$.\\[1ex] 
 \textit{\big[This is obeyed in our case, since
 \begin{equation}
     m_{I_1,I_2}^J(F,G) = m_{I_1}^J(F) \wedge m_{I_2}^J(G) = (-)^{FG}m_{I_2}^J(G) \wedge m_{I_1}^J(F) = m_{I_2,I_1}^W(G,F)
 \end{equation}
 for all $F \in \mathfrak{F}(I_1)$ and $G \in \mathfrak{F}(I_2)$. \big]}
  }
\end{itemize}
\end{defn}

\begin{figure}[H]
    \centering
    \begin{subfigure}[b]{0.45\textwidth}
    \includegraphics[width=\textwidth]{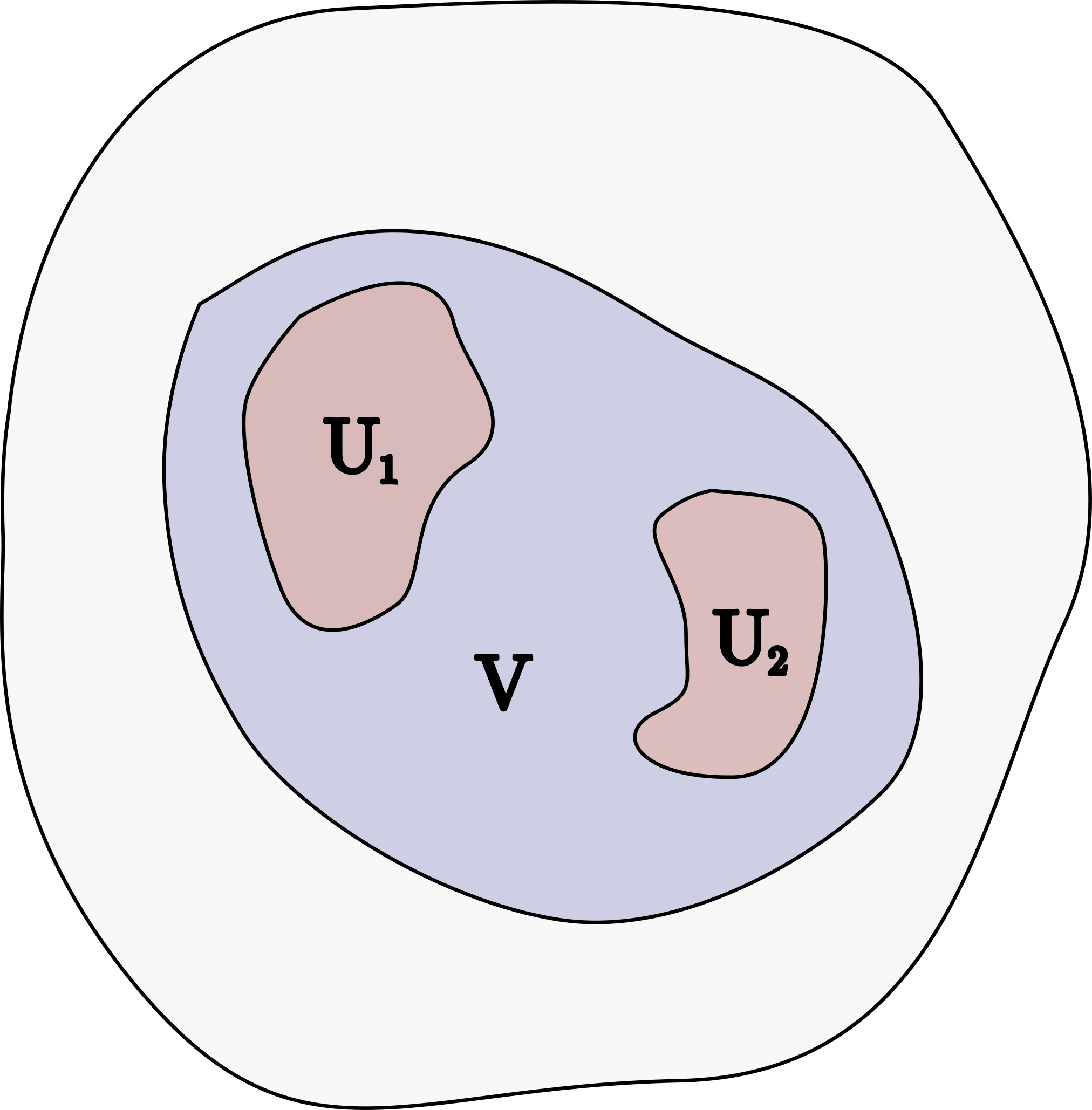}
    \caption{}
    \label{fig:prefac_map}
    \end{subfigure} 
    \hfill
    \begin{subfigure}[b]{0.45\textwidth}
    \includegraphics[width=\textwidth]{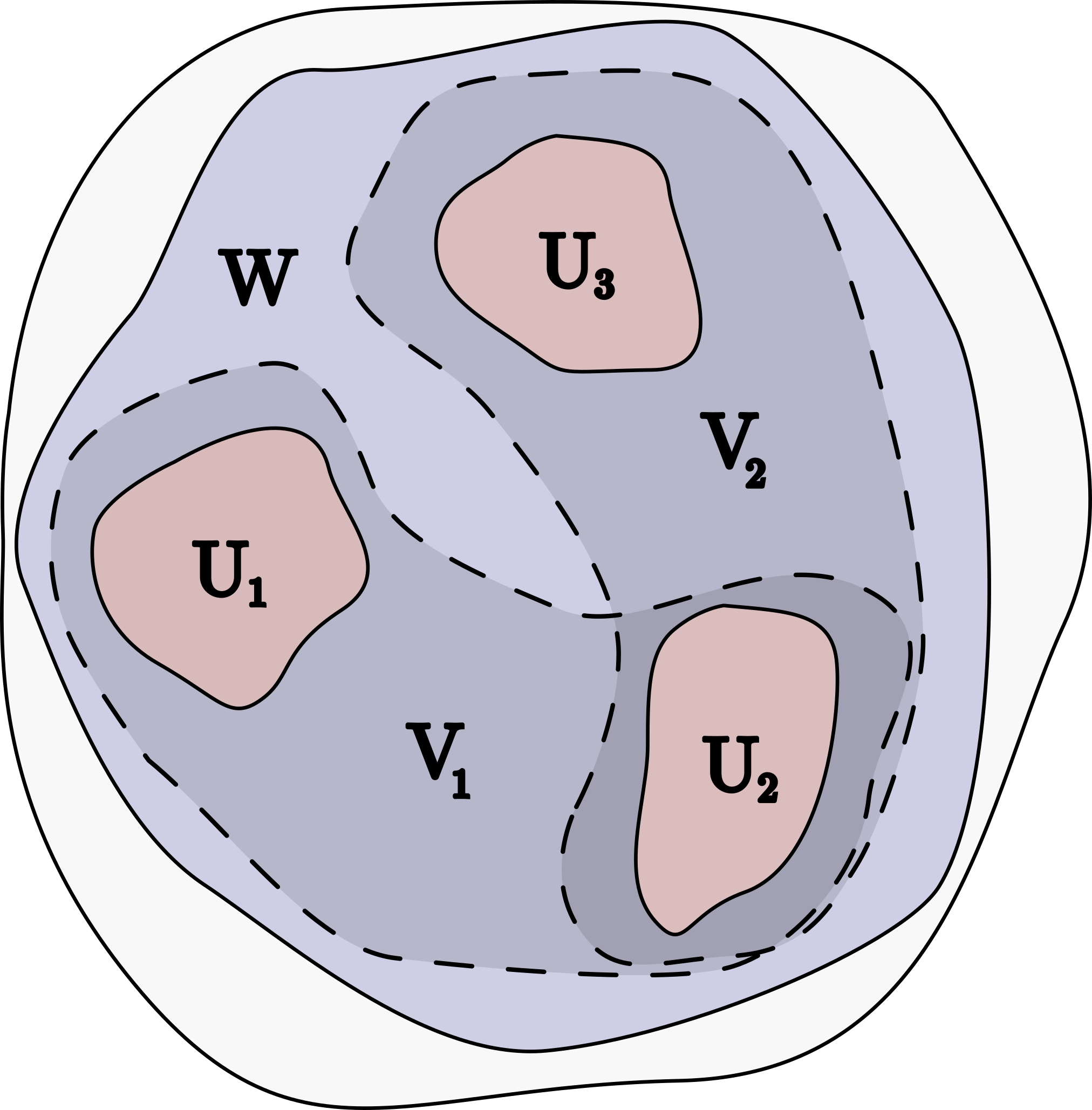}
    \caption{}
    \label{fig:prefac_ass}
    \end{subfigure}
\caption{(a): Pictorial representation of the prefactorization product map. Two elements in $\mathfrak{F}({U}_1)$ and $\mathfrak{F}({U}_2)$ are mapped into the including set $\mathfrak{F}({V})$. (b): Representation of the associativity condition. The dotted regions represent the sets ${V}_1$ and ${V}_2$. The order of inclusion (first 
${U}_1$ and ${U}_2$ into ${V}_1$ and then ${V}_1$ and ${U}_3$ into ${W}$ or first ${U}_2$ and ${U}_3$ into ${V}_2$ and then ${U}_1$ and ${V}_2$ into ${W}$) does not matter for the final answer.}
\label{fig:Ising_exchange_hexagon_flip}
\end{figure}

The standard definition of prefactorization algebras also comes with higher maps
\begin{equation}\label{higherproducts}
    m_{U_1,\ldots,U_n}^{W}:\, \mathfrak{F}(U_1) \otimes \cdots \otimes \mathfrak{F}(U_n) \longrightarrow \mathfrak{F}(W)
\end{equation}
for all (connected) open sets $U_1,\ldots,U_n \subseteq W$ with the $U_i$ pairwise disjoint. These maps are required to satisfy certain higher associativity relations. These relations imply that these maps are always determined by $m_{U,V}^W$ recursively as
\begin{equation}
  m_{U_1,\ldots,U_n}^{W} = m_{V,U_n}^W \circ (m_{U_1,\ldots,U_{n-1}}^{V} \otimes \text{id}_{\mathfrak{F}(U_n)})  
\end{equation}
for any open set $V$ containing $U_1,\ldots ,U_{n-1}$ and $V$ disjoint from $U_n$. For this reason, it is not necessary to give the higher maps \eqref{higherproducts} in the definition \cite{HomotopyPA}.

Further, note that we only defined $\mathfrak{F}$ on connected sets. The general definition of a prefactorization algebra requires $\mathfrak{F}$ to be defined on \emph{any} open set. The value of $\mathfrak{F}$ on generic open sets cannot be deduced from its value on connected open sets unless we require \emph{multiplicativity}. For this reason, we will always assume multiplicativity for all our examples. The implications of this assumption for the action of $\mathfrak{F}$ on generic open sets is discussed in appendix \ref{App:C}, where we explain how $\mathfrak{F}$ is determined on any open set from its value on the connected sets.

Note that for our above standard example  the spaces $\mathfrak{F}({U})$ have more structure 
since they are chain complexes. In general, one requires the structure maps  to be morphisms, 
and in the example of chain complexes these  morphisms must be chain maps, e.g., 
\begin{equation}\label{structuremapischain} 
    d \circ m_{{U}_1 {U}_2}^{{V}} = m_{{U}_1{U}_2}^{{V}} \circ d\;. 
\end{equation}
For the above example this follows since the differential obeys the Leibniz rule for disjoint intervals. 
Indeed, recalling the map $m_{I_1,I_2}^J:\mathfrak{F}(I_1)\otimes \mathfrak{F}(I_2)\rightarrow \mathfrak{F}(J)$ given by $m_{I_1,I_2}^J(f_1\otimes  f_2) = f_1 \wedge f_2\in \mathfrak{F}(J)$, we compute 
\begin{equation}
\begin{split} \label{m2isachainmap}
 (d\circ m_{I_1,I_2}^J) (f_1\otimes f_2) = d\big(f_1 \wedge f_2\big) &= d f_1 \wedge  f_2
 +(-1)^{f_1} f_1 \wedge d f_2\\
 &= m_{I_1,I_2}^J(df_1\otimes  f_2) + (-1)^{f_1} m^J_{I_1 I_2}(f_1\otimes  df_2) \\
 &= (m_{I_1,I_2}^J\circ d)(f_1\otimes  f_2)\;, 
\end{split} 
\end{equation} 
where we used the Leibniz rule (\ref{LEIBNIZ}). The right hand side involves the natural definition of a derivation $d$ on the tensor product $X \otimes Y$ of chain complexes $(X,d_X)$ and $(Y,d_Y)$, which is given by
\begin{equation}
d (x \otimes y) = d x \otimes y + (-)^{x} x \otimes d y \, .   
\end{equation}
The proof given in \eqref{m2isachainmap} is for the original differential $d$ but, 
as noted above, on disjoint intervals it also holds on the full $\delta_{\rm BV}$.

In this text we will exclusively work with prefactorization algebras without worrying whether our examples actually define factorization algebras. A \textit{factorization algebra}  is a special case  of a prefactorization algebra in much the same way as a cosheaf specializes a precosheaf \cite{Costello_Gwilliam_2016,costelloFactorization}. Even when considering prefactorization algebras, we will from now on call them factorization algebras for convenience.

The above discussion makes clear that assigning the symmetric algebra of quantum observables 
$\mathfrak{F}(I):= {\rm Obs}^q(I)$ to each open interval $I$ naturally yields a prefactorization algebra. 
Importantly, however, any conventional associative algebra $\pazocal{A}$ 
can also be viewed as a prefactorization algebra on the real line
\cite{Costello_Gwilliam_2016, costelloFactorization}. 
To this end one assigns to each open interval $I \subseteq \mathbb{R}$ the algebra $\pazocal{A}$: 
\begin{equation}
    \mathfrak{F}(I) = \pazocal{A}\;, 
\end{equation}
and takes the inclusion map $m_I^J$ for intervals  $I \subseteq J$ to be just  the identity: 
\begin{equation}
    m_I^J(a) = a \;\; \quad \forall a \in \pazocal{A}\;.
\end{equation}
The prefactorization product is the algebra product: 
\begin{equation}
    m_{I_1,I_2}^J(a, b) = a \cdot b \in \pazocal{A}\, , 
\end{equation} 
 so that the compatibility condition (\ref{CompaAss}) 
expresses associativity:
\begin{equation}
(a\cdot b)\cdot c = a\cdot (b\cdot c) \, .  
\end{equation} 

Since for this simple example all vector spaces are actually equal  
one  has for any  two open intervals $I$ and $J$ an isomorphism 
\begin{equation}\label{localconst}
  \mathfrak{F}(I) \cong \mathfrak{F}(J)\,,
\end{equation}
 that in fact is just  the identity. A prefactorization algebra satisfying the property given in \eqref{localconst} is called \textit{locally constant}. The prefactorization product maps, being only maps between different spaces $\mathfrak{F}(I)$, in general do not imply the existence of an algebra structure on the individual spaces $\mathfrak{F}(I)$. However, if a prefactorization algebra is locally constant on $\mathbb{R}$, then the following diagram
\begin{equation}
\begin{tikzcd}
\mathfrak{F}(I) \otimes \mathfrak{F}(I) \arrow[r,"\cong"] \arrow[d,"\mu"] & \mathfrak{F}(I) \otimes \mathfrak{F}(J) \arrow[d,"m_{I,J}^K"] \\
\mathfrak{F}(I) & \arrow[l,swap,"\cong"] \mathfrak{F}(K)
\end{tikzcd}
\end{equation}
induces an associative product $\mu$ on $\mathfrak{F}(I)$. Here, $J$ is such that $t < s$ for all $t \in I$ and $s \in J$. Therefore, any locally constant prefactorization algebra on $\mathbb{R}$ implies an algebra on a fixed space $\mathfrak{F}(I)$ (which can be taken to be $\mathfrak{F}(\mathbb{R})$).

In order to compare prefactorization algebras, we need to define morphisms between them. Recall that in the case of associative algebras, a morphism $R: (A,\mu_A) \rightarrow (B,\mu_B)$ is a linear map $R: A \rightarrow B$, such that
\begin{equation}
\begin{tikzcd}
A \otimes A \arrow[r,"\mu_A"] \arrow[d,"R \otimes R"] & A \arrow[d,"R"] \\
B \otimes B \arrow[r,"\mu_B"] & B
\end{tikzcd}
\end{equation}
commutes. A morphism of prefactorization algebras on a topological space $X$ is very similar, although it is given by a whole family of maps $R_U$, one for each open set.
\begin{defn}\label{def:FactAlgMor}
A morphism $R: \mathfrak{F} \rightarrow \mathfrak{G}$ of prefactorization algebras on $X$ is given by a family of chain maps $R_U: \mathfrak{F}(U) \rightarrow \mathfrak{G}(U)$ over the open sets $U \subseteq X$, such that
\begin{equation}
\begin{tikzcd}
    \mathfrak{F}(U) \arrow[r,"m_U^V"] \arrow[d,"R_U"] & \mathfrak{F}(V) \arrow[d,"R_V"] \\
    \mathfrak{G}(V) \arrow[r,"\bar m_U^V"] & \mathfrak{G}(V)
\end{tikzcd} \ , \qquad 
\begin{tikzcd}
    \mathfrak{F}(U_1) \otimes \mathfrak{F}(U_2) \arrow[rr,"m_{U_1,U_2}^V"] \arrow[d,"R_{U_1} \otimes R_{U_2}"] & &\mathfrak{F}(V) \arrow[d,"R_V"] \\
    \mathfrak{G}(V) \arrow[rr,"\bar m_{U_1,U_2}^V"] & & \mathfrak{G}(V)
\end{tikzcd}
\end{equation}
commute for all $U,U_1,U_2 \subseteq V$ with $U_1$ and $U_2$ disjoint. Here, the $m_U^V$, $m_{U_1,U_2}^V$ are the structure maps of $\mathfrak{F}$, while $\bar m_U^V$, $\bar m_{U_1,U_2}^V$ are the structure maps of $\mathfrak{G}$. 
$R$ is called an isomorphism (resp.~a quasi-isomorphism) if each $R_U$ is an isomorphism (resp.~a quasi-isomorphism).
\end{defn}

\subsection{Weyl algebra} 

Let us next introduce  the factorization algebra that encodes  the Weyl algebra of quantum mechanics and 
that will subsequently be shown to be quasi-isomorphic to the above `off-shell' factorization algebra based 
on the BV algebras. The Weyl algebra is the quantum mechanical algebra of polynomials in $p$ and $q$ that satisfy
\begin{equation}
    [q, p] = i\hbar\;, 
\end{equation}
or, upon the usual change of basis to annihilation and creation operators, 
 \begin{equation}\label{standaraadagger} 
    [a, a^\dagger] = 1\;. 
\end{equation}  
In the following the observables will be normal ordered polynomials in $a$ and $a^\dagger$. 

To this end let us define the Weyl algebra  as 
\begin{equation}
    {\rm Weyl} := \left({\rm Sym} (\mathbb{C}^2), \;\mu\right)\;,
\end{equation}
where we recall that $(a,a^\dagger)$ defines the basis of $\mathbb{C}^2$ we will use. Such a choice of basis allows us to identify ${\rm Sym}(\mathbb{C}^2)$ with $\mathbb{C}[a,a^\dagger]$, where the latter denotes the algebra of polynomials over two variables $a$ and $a\dagger$.
The map $\mu$ is an associative but non-commutative product 
 \begin{equation}
  \mu : \; \mathbb{C}[a,a^\dagger] \otimes  [a,a^\dagger] \longrightarrow [a,a^\dagger]\;, 
 \end{equation} 
satisfying 
\begin{equation}
\mu(\mu(a, b), c) = \mu(a, \mu(b, c))\;, 
\end{equation} 
that should  encode the commutation relations (\ref{standaraadagger}), 
\begin{equation}\label{aadaggercommmu} 
    \mu(a, a^\dagger) - \mu(a^\dagger, a) = 1\;. 
\end{equation}
It is important to recall  that on the symmetric algebra we always have 
\begin{equation}\label{Symmaadagger} 
    a^\dagger \wedge a = a \wedge a^\dagger\;. 
\end{equation}
The Weyl algebra we want to define should  be thought of in terms of  the normal ordered polynomials in $a^\dagger$ and $a$. The product $\mu$ takes two normal ordered polynomials, multiplies them, and commutes all $a^\dagger$s to the left so that the result is again normal ordered. 
We thus set 
\begin{equation}\label{aadaggerRULES} 
    \begin{aligned}
        \mu(a^\dagger, a) &= a^\dagger \wedge a\;, \\
        \mu(a, a^\dagger) &= a \wedge a^\dagger + 1 = a^\dagger \wedge a + 1\;, \\
        \mu(a,a) &= a\wedge a \;, \quad \mu(a^{\dagger}, a^{\dagger}) = a^\dagger \wedge a^\dagger \;, 
    \end{aligned}
\end{equation}
where the second equality follows with (\ref{Symmaadagger}), and 
the definition is so that it is compatible with (\ref{aadaggercommmu}). 
The extension to the full symmetric algebra is given by
\begin{equation}\label{Generalmu}
    \mu(F,G) = \wedge \circ 
    \exp\Big(\frac{\partial}{\partial a} \otimes \frac{\partial}{\partial a^{\dagger}} \Big)
    (F \otimes G)\, , 
\end{equation}
which  reproduces \eqref{aadaggerRULES}. Furthermore, one may convince oneself that for general $F$ and $G$ this formula computes the sum over all contractions between the normal ordered operator $F$ and the normal ordered operator $G$, which is exactly what one has to do when one wants  to bring expressions like
$(a^\dagger)^k a^l (a^\dagger)^m a^n$
into normal ordered form. 

We will now show that $\mu$ is associative, so that $(\mathbb{C}[a,a^\dagger],\mu)$ defines an associative algebra, which can be thought of as a deformation of the commutative wedge product.
\begin{prop}
Let $\mathrm{Sym}(\mathbb{C}^2) = \mathbb{C}[x,y]$ be the commutative algebra of polynomials in $x, y$ 
with product $\wedge$. Then
\begin{equation}
    \mu = \wedge \circ e^{t(\partial_x \otimes \partial_{y})}
\end{equation}
defines an associative deformation of $\wedge$ for all $t$.
\end{prop}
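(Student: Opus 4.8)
The plan is to prove associativity of $\mu$ by showing that both $\mu(\mu(F,G),H)$ and $\mu(F,\mu(G,H))$ equal a single symmetric expression: the ``triple contraction'' operator applied to $F\otimes G\otimes H$. Concretely, I would define the operator on $\mathbb{C}[x,y]^{\otimes 3}$ that contracts the $\partial_x$ of each factor against the $\partial_y$ of every factor to its right, namely
\begin{equation}
\mu_3 := \wedge_3 \circ \exp\Big(t\big(\partial_x\otimes\partial_y\otimes 1 + \partial_x\otimes 1\otimes\partial_y + 1\otimes\partial_x\otimes\partial_y\big)\Big)\,,
\end{equation}
where $\wedge_3$ is the total (triple) multiplication map. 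The claim then reduces to showing $\mu\circ(\mu\otimes 1)=\mu_3=\mu\circ(1\otimes\mu)$.

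The key technical input is that $\partial_x$ and $\partial_y$ commute as operators on $\mathbb{C}[x,y]$, together with the fact that the commutative product $\wedge$ is associative and that differentiation interacts with $\wedge$ via the Leibniz rule, which at the level of generating operators means $\partial_x\circ\wedge = \wedge\circ(\partial_x\otimes 1 + 1\otimes\partial_x)$, and likewise for $\partial_y$. First I would compute $\mu\circ(\mu\otimes 1)$ by inserting the definition $\mu=\wedge\circ e^{t(\partial_x\otimes\partial_y)}$ twice. The inner $\mu$ contracts factors one and two; when I then apply the outer $\mu$, its $\partial_x\otimes\partial_y$ must be pushed through the inner $\wedge$ using Leibniz. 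Pushing $\partial_x$ (acting on the already-merged first slot) back through the merge splits it as $\partial_x$ on slot one plus $\partial_x$ on slot two, producing exactly the three cross-terms $\partial_x\otimes\partial_y\otimes 1$, $\partial_x\otimes 1\otimes\partial_y$, and $1\otimes\partial_x\otimes\partial_y$ in the exponent. Because all these derivation operators mutually commute, the two exponentials combine into the single exponential defining $\mu_3$.

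The symmetric computation of $\mu\circ(1\otimes\mu)$ proceeds identically, contracting slots two and three first and then pushing the outer contraction's $\partial_y$ back through the inner merge; this again yields precisely the same three cross-terms, hence the same $\mu_3$. Since both orders of association reduce to the identical operator $\mu_3$, associativity follows. The main obstacle, and the step deserving care, is the bookkeeping of how $e^{t(\partial_x\otimes\partial_y)}$ commutes past the wedge multiplication: one must verify rigorously that conjugating the outer contraction operator through the inner $\wedge$ produces exactly the ``upper-triangular'' sum of cross-contractions with no extra terms. This is cleanest to handle by working order-by-order in $t$ (equivalently, using that $\wedge$ is a coalgebra morphism for the comultiplication dual to $\partial_x,\partial_y$, so that the contraction operators form a compatible system), and the commutativity of $\partial_x$ with $\partial_y$ is what guarantees the exponentials merge without Baker--Campbell--Hausdorff corrections. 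I would present this either via the generating-function identity or, more transparently, by checking the action on monomials $x^{a}y^{b}\otimes x^{c}y^{d}\otimes x^{e}y^{f}$, where both sides manifestly count the same contractions.
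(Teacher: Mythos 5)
Your proposal is correct and follows essentially the same route as the paper's proof: both reduce each association to the single symmetric operator $\wedge_3 \circ \exp\bigl(t(\partial_x\otimes\partial_y\otimes 1 + \partial_x\otimes 1\otimes\partial_y + 1\otimes\partial_x\otimes\partial_y)\bigr)$ by pushing the outer contraction through the inner $\wedge$ via the operator-level Leibniz rule and then merging exponentials using the mutual commutativity of the cross-contraction operators. The only cosmetic difference is that you name the triple operator $\mu_3$ up front and offer a monomial check as an alternative, while the paper carries out the operator manipulation directly.
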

\begin{proof}
We have
\begin{equation}\label{muass}
    \mu \circ (\mu \otimes 1) = \wedge \circ \exp((t \partial_{x} \otimes \partial_y)) \circ (\wedge \otimes 1) \circ (\exp(t (\partial_{x} \otimes \partial_y)) \otimes 1) \, .
\end{equation}
Since $\partial_x$ is a derivation with respect to $\wedge$, we have
\begin{equation}
(\partial_{x} \otimes \partial_y) \circ (\wedge \otimes 1) =  (\wedge \otimes 1) \circ (\partial_x \otimes 1 \otimes \partial_y + 1 \otimes \partial_x \otimes \partial_y)\,.
\end{equation}
Computing the exponential on both sides shows that
\begin{equation}
\exp(t(\partial_{x} \otimes \partial_y)) \circ (\wedge \otimes 1) = (\wedge \otimes 1) \circ \exp(t(\partial_x \otimes 1 \otimes \partial_y + 1 \otimes \partial_x \otimes \partial_y)) \, .
\end{equation}
Using this, \eqref{muass} becomes
\begin{equation}\label{muass2}
    \mu \circ (\mu \otimes 1) = \wedge_3 \circ \exp(t(\partial_x \otimes 1 \otimes \partial_y + 1 \otimes \partial_x \otimes \partial_y)) \circ (\exp(t (\partial_{x} \otimes \partial_y)) \otimes 1) \, ,
\end{equation}
where $\wedge_3 = \wedge \circ (\wedge \otimes 1) = \wedge \circ (1 \otimes \wedge)$. Note also that
\begin{equation}
(\exp(t (\partial_{x} \otimes \partial_y)) \otimes 1) = \exp(t (\partial_{x} \otimes \partial_y \otimes 1)) \, .
\end{equation}
Finally, since $\partial_x \otimes \partial_y \otimes 1$ commutes with both $\partial_x \otimes 1 \otimes \partial_y$ and $1 \otimes \partial_x \otimes \partial_y$, \eqref{muass2} becomes
\begin{equation}
\mu \circ (\mu \otimes 1) = \wedge_3 \circ \exp(t(\partial_x \otimes \partial_y \otimes 1 +\partial_x \otimes 1 \otimes \partial_y + 1 \otimes \partial_x \otimes \partial_y)) \, .
\end{equation}
A similar computation shows that $\mu \circ (1 \otimes \mu)$ gives the same expression. Therefore, $\mu \circ (\mu \otimes 1) = \mu \circ (1 \otimes \mu )$ and so $\mu$ is associative.
\end{proof}
\noindent The proposition implies that $\mu$ defined in \eqref{Generalmu} is associative.

As previously shown for a general associative algebra, we can make 
the Weyl algebra  into a prefactorization algebra on the real line, 
where to each open interval $I$ one assigns the Weyl algebra: $\mathfrak{F}(I)=\mathbb{C}[a,a^\dagger]$. 
The only subtlety is that because one wants to interpret this real line as time one has to set, 
for any two polynomials $a,b \in \mathbb{C}[a,a^\dagger]$, 
\begin{equation}\label{Defmmu}  
     m_{\mu}(a,b) := 
   \begin{cases}
    \mu(b, a) & \text{if $I_1<I_2$}\\
    \mu(a,b) &\text{if $I_1> I_2$}
  \end{cases} \  \in  \ \mathbb{C}[a,a^\dagger]\;, 
\end{equation}
for disjoint intervals $I_1,  I_2   \subset J$, where the ordering of intervals 
is defined as 
  \begin{equation}
  I_1 < I_2 \quad \Leftrightarrow \quad \forall\,  t\in I_1, \, s\in I_2:\; t< s \;, 
 \end{equation} 
and similarly for $I_1>I_2$. 

When expressing an associative algebra as a factorization algebra on $\mathbb{R}$, the order of multiplication, as for example in the expression $(ab)c$, is encoded in a sequence of embeddings of open intervals, see figure \ref{fig:threeproduct} for an explicit example in the context of the Weyl algebra.
\begin{figure}[H]
    \centering
    \includegraphics[width=0.65\textwidth]{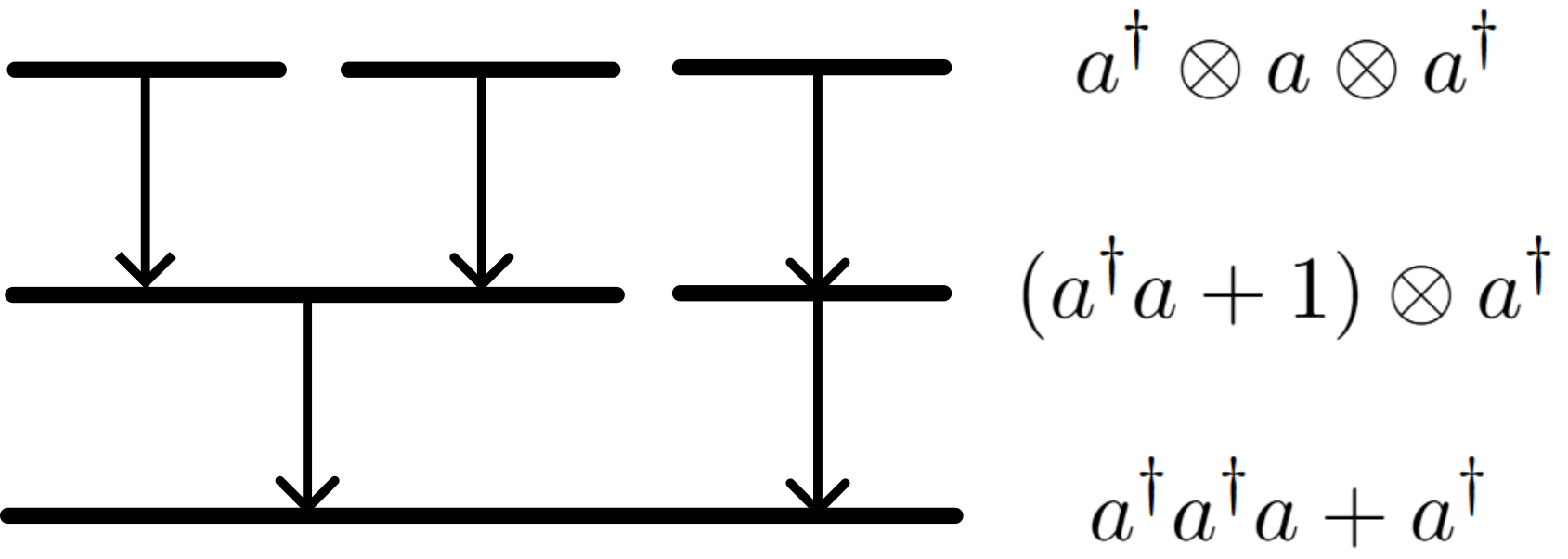}
    \caption{Prefactorization structure of the Weyl algebra. In the first step, the first two intervals are mapped $m_\mu(a^\dagger, a) = \mu(a, a^\dagger) = a^\dagger a + 1$, while the rightmost interval is mapped identically. In the second step the prefactorization product maps $m_\mu(a^\dagger a, a^\dagger) = \mu(a^\dagger, a^\dagger a) = a^\dagger a^\dagger a$ and $\mu(1, a^\dagger) = a^\dagger$.}  \label{fig:threeproduct}
\end{figure}

\subsection{Quasi-isomorphism} 

Our next goal is to relate the two prefactorization algebras discussed above: the one based on the 
`off-shell' BV algebra and the one based on the `on-shell' Weyl algebra. More precisely, we will show that 
these two prefactorization algebras are quasi-isomorphic and hence physically equivalent. 
Above we have introduced the notion of quasi-isomorphism for chain complexes: two chain complexes are 
quasi-isomorphic if there is a chain map (a degree zero map that commutes with the differential) between them
that preserves the cohomology. Here we need the  generalization of this notion to prefactorization algebras where to the 
open sets one assigns chain complexes. (Of course, any vector space can be viewed as a chain complex with 
trivial differential.) Let us recall that in this case one demands that 
the structure maps are chain maps, c.f.~(\ref{structuremapischain}), also with respect to $\delta_{\rm BV}$. 

In the previous section we have established a chain map that is in fact a quasi-isomorphism between 
the individual spaces of the two prefactrorization algebras, which for definiteness we call $\mathfrak{F}_{\rm BV}$ 
and $\mathfrak{F}_{\rm Weyl}$. For each open interval $I$ we thus have  
\begin{equation}\label{factorrizationArrow} 
    \xymatrix{
    & \mathfrak{F}_{\rm BV}(I)= {\rm Obs}^q(I) \ar[d]^\Pi\\
    & \mathfrak{F}_{\rm Weyl}(I) = \mathbb{C}[a,a^\dagger]
    }
\end{equation}
where we recall that the projection $\Pi$ is a chain map thanks to the improved formula 
(\ref{improvedPI}), 
\begin{equation}\label{improvedPI2} 
    \Pi := \Pi_0 \, e^{-C}\;, 
 \end{equation} 
with $\Pi_0$ the naive projection acting component-wise and  $C= \Delta \circ (1 \otimes h)$. 
This means that $\Pi\circ \delta_{\rm BV}=0$ as the differential on the Weyl algebra is trivial.

The above defines a quasi-isomorphism between chain complexes
but as emphasized before it does \textit{not} respect the algebra structure of each individual space. 
Rather, 
we will show that $\Pi$ defines a quasi-isomorphism for the corresponding factorization algebras. According to definition \ref{def:FactAlgMor}, we need to show that
\begin{equation}\label{FactAlMor}
\begin{tikzcd}
\mathfrak{F}_{\rm{BV}}(I)\arrow[d,"\Pi"] \arrow[r,"m_I^J"] & \mathfrak{F}_{\rm{BV}}(I)  \arrow[d,"\Pi"]\\
\mathfrak{F}_{\rm Weyl}(I) \arrow[r,"\text{id}"] & \mathfrak{F}_{\rm Weyl}(J) 
\end{tikzcd}
\ , \quad
\begin{tikzcd}
\mathfrak{F}_{\rm BV}(I_1) \otimes \mathfrak{F}_{\rm BV}(I_2) \arrow[r, "m_{I_1,I_2}^{J}"] \arrow[d,"\Pi \otimes \Pi"] & \mathfrak{F}_{\rm BV}(J) \arrow[d,"\Pi"] \\
\mathfrak{F}_{\rm Weyl}(I_1) \otimes \mathfrak{F}_{\rm Weyl}(I_2) \arrow[r,"m_\mu"] & \mathfrak{F}_{\rm Weyl}(J) \, .
\end{tikzcd}
\end{equation}
This then would establish that $\mathfrak{F}_{\rm BV}$ and $\mathfrak{F}_{\rm Weyl}$ are quasi-isomorphic, since $\Pi$ is also a quasi-isomorphism of chain complexes $\Pi: \mathfrak{F}_{\rm BV}(I) \rightarrow \mathfrak{F}_{\rm Weyl}(I)$ for all $I$. We universally call the morphism $\Pi$, although strictly speaking it should be thought of as a family of morphisms, one for each open interval $I$.

As we have defined above a multitude of spaces and corresponding $\Pi$ maps let us pause here 
for a second and review what we have. On a single compactly supported function $f$ in the dual chain complex, 
$\Pi$ acts via (\ref{eq:projector_classical}), and on a word or string of such objects (i.e.~a generic 
element of the BV algebra) $\Pi$ acts via (\ref{improvedPI2}), where $\Pi_0$ acts like a 
morphism in (\ref{Piviamorph}). In particular, due to the insertion of $e^{-C}$ in (\ref{improvedPI2}) 
the corrected $\Pi$ does \textit{not} act as a morphism. 
This is no contradiction, since we need to prove commutativity of the second diagram in \eqref{FactAlMor} with respect to the product $m_\mu$, and not with respect to the wedge product on $\mathfrak{F}_{\rm {Weyl}}(I) = \mathbb{C}[a,a^\dagger]$.

Let us then turn to the verification that the above diagram commutes, which is the statement that 
 \begin{equation}\label{commutativitydiagram} 
  \Pi\circ m_{12}  =  m_{\mu}\circ \Pi\;, 
 \end{equation} 
where we use the abbreviation $m_{12}=m_{I_1,I_2}^J$.  
We will first verify this relation acting on $f_1\otimes f_2 \in \mathfrak{F}(I_1)\otimes \mathfrak{F}(I_2)$, 
where $f_1$, $f_2$ are functions (linear monomials) with support on intervals with $I_1 < I_2$. (The other case immediately follows, since they  are related via $m_{I_1,I_2}^J(f_1,f_2) = m_{I_2,I_1}^J(f_2,f_1)$). 
The right-hand side of (\ref{commutativitydiagram}) now yields 
  \begin{equation}\label{RighthandSIDEcommcomp} 
   \begin{split}
    (m_{\mu}\circ \Pi)(f_1\otimes f_2) &= m_{\mu}(\Pi(f_1)\otimes \Pi(f_2))  \\
    &=\mu(\Pi(f_2), \Pi(f_1)) \\
    &= \mu\Big(f_{2-}a +f_{2+}a^\dagger\,, \; f_{1-}a+ f_{1+}a^\dagger\Big) \\
    &= f_{2-}f_{1-} \,a\wedge a + f_{2+} f_{1+} \,a^{\dagger} \wedge a^{\dagger} \\
    &\quad +\big(f_{2-} f_{1+} + f_{2+} f_{1-}\big) a^\dagger \wedge a + f_{2-} f_{1+} \;, 
   \end{split} 
  \end{equation} 
 where we used the definition (\ref{Defmmu})  of $m_{\mu}$, the $\Pi$ action 
 on a  function, $\Pi(f)=f_- a + f_+ a^\dagger$, c.f.~(\ref{eq:projector_classical}), 
 and the definition (\ref{aadaggerRULES}) of the $\mu$ product. 
 This we need to compare to the left-hand side of (\ref{commutativitydiagram}), 
 for which we compute 
   \begin{equation}\label{LefthandSIDEcommcomp} 
    \begin{split}
      ( \Pi\circ m_{12})(f_1\otimes f_2) &= \Pi(f_1\wedge f_2) 
      = \Pi_0 \big(1-C\big) (f_1\wedge f_2) \\
      &= \Pi_0(f_1) \wedge \Pi_0(f_2) -C(f_1\wedge f_2) \\
      &= \big(f_{1-} a + f_{1+} a^\dagger\big) \wedge \big(f_{2-} a + f_{2+} a^\dagger\big)
      -C(f_1\wedge f_2)
      \;, 
    \end{split} 
  \end{equation} 
where we used (\ref{improvedPI2}) on quadratic monomials and the $\Pi$ 
action recalled above.  
Multiplying out the quadratic terms in the last line of (\ref{LefthandSIDEcommcomp}) and subtracting 
from this (\ref{RighthandSIDEcommcomp}) we infer 
 \begin{equation}\label{intermediateRESULT} 
  ( \Pi\circ m_{12}-m_{\mu}\circ \Pi)(f_1\otimes f_2) =  -C(f_1\wedge f_2)  - f_{2-} f_{1+}  \;. 
 \end{equation} 
In order to show that the right-hand side is actually zero, thereby completing the proof, 
we compute, recalling $C= \Delta \circ (1 \otimes h)$, 
 \begin{equation}\label{CCOMPUUU} 
 \begin{split}
  C(f_1\wedge f_2) &= \Delta\big(f_1\wedge 
  h(f_2) \big) \\
  &=i\hbar \int_J dt\, f_1(t) \, h (f_2) (t) \\
  &= -\frac{\hbar}{2\omega} \int_{I_1} dt \, f_1(t) \int_{I_2} ds \, f_2(s) \, e^{-i\omega(s-t)} \\
  &=  -\frac{\hbar}{2\omega} \Big(\int_{I_1} dt \, f_1(t)e^{i\omega t }\Big) 
  \Big(\int_{I_2} ds \, f_2(s)e^{-i\omega s }\Big) \\
  &= -\langle f_1, \sigma_-\rangle \langle f_2, \sigma_+\rangle \\
  &= - f_{1+} f_{2-} \;. 
 \end{split} 
 \end{equation} 
Here we used in the third line the definition (\ref{eq:h}) of $h$,  together with $|t-s|=s-t$ on $I_1 < I_2$ and we used 
$e^{\pm i\omega t}=\sqrt{\frac{2\omega}{\hbar}} \sigma_{\mp}(t)$, c.f.~(\ref{sigmaPLUSM}).  
Using (\ref{CCOMPUUU}) back in (\ref{intermediateRESULT}) 
we infer that the terms on the right-hand side cancel, thus completing the proof 
of the compatibility property (\ref{commutativitydiagram}) for $I_1< I_2$.

For later use let us record the final result: 
 \begin{equation}
 C\big(f_1\wedge f_2\big)=
  \begin{cases}
    -f_{1+} f_{2-} & \text{if $I_1<I_2$}\\
    -f_{1-} f_{2+} &\text{if $I_1> I_2$}
  \end{cases}\;, 
 \end{equation}
where $I_1$ denotes the interval on which $f_1$ is 
compactly supported and similarly for $I_2$. 
Note that as a consequence of this definition we  have $C\big(f_1\wedge f_2\big)=C\big(f_2\wedge f_1\big)$, as it should be in order to be compatible with the symmetry of the wedge product. 
Let us also record the 
action of $C$ on the functions $f_{a}$ and $f_{a^{\dagger}}$ defined above, for which 
we have due to (\ref{fafadaggercomp})  
 \begin{equation}
  f_{a,-}=f_{a^{\dagger},+} =1\,, \qquad f_{a,+}=f_{a^{\dagger},-}= 0\,. 
 \end{equation}  
Concretely, taking 
$f_a$ to be compactly supported in $I_a$ and $f_{a^{\dagger}}$ to be compactly supported in $I_{a^{\dagger}}$, where $I_a$ and $I_{a^{\dagger}}$ have disjoint support, we then have \begin{equation}\label{Caadagger}
 C\big(f_a\wedge f_{a^{\dagger}}\big)=
 C\big(f_{a^{\dagger}}
 \wedge f_a\big) \ = \ 
  \begin{cases}
    \; \; 0 & \text{if $I_a<I_{a^{\dagger}}$}\\
    -1 &\text{if $I_a> I_{a^{\dagger}}$}
  \end{cases}\;. 
 \end{equation}
 In words, $C\big(f_a\wedge f_{a^{\dagger}}\big)=0$  if $f_a$ is associated to earlier times, while  $C\big(f_a\wedge f_{a^{\dagger}}\big)=-1$ if $f_a$ is associated to later times.

Finally we point out the following:  
Given that the factorization algebra encoding off-shell quantum mechanics is based on the \textit{symmetric} algebra, there can be no Heisenberg-type canonical commutation relation at this level. 
Rather, signs of the latter only emerge upon projecting down 
to on-shell quantum mechanics by means of  
the projector $\Pi=\Pi_0e^{-C}$. To see this fix two disjoint intervals $I_1$ and $I_2$ and fix representatives $f_a^1$, $f_{a^{\dagger}}^1$ that are compactly supported on $I_1$ and similarly $f_a^2$, $f_{a^{\dagger}}^2$ that are compactly supported on $I_2$, so that $\Pi(f_a^1)=\Pi(f_a^2)=a$
and $\Pi(f_{a^{\dagger}}^1)=\Pi(f_{a^{\dagger}}^2)=a^{\dagger}$. Using (\ref{Caadagger}) one finds:
 \begin{equation}
 \Pi\big(f_a^1\wedge f_{a^{\dagger}}^2 - 
 f_{a^{\dagger}}^1\wedge f_{a}^2\big)
  \ = \ 
  \begin{cases}
    -1 & \text{if $I_1<I_2$}\\
    \;\; \; 1 &\text{if $I_1> I_2$}
  \end{cases}\;. 
 \end{equation}
Note that while in this computation the terms $a\wedge a^{\dagger}-a^{\dagger}\wedge a$ are generated, they are of course zero in the symmetric algebra. Rather, all `non-commutativity' is now related to the order of intervals $I_1, I_2$.

\subsection{Proof }

After giving the quasi-isomorphism we are now ready to state and prove the following general theorem:  
\begin{thm}\label{thm:reals}
The family of maps $\Pi = \Pi_0 \circ e^{-C}$ over the open intervals in $\mathbb{R}$ define a quasi-isomorphism of factorization algebras $\Pi: \mathfrak{F}_{\rm BV} \rightarrow \mathfrak{F}_{\rm Weyl}$.
\end{thm}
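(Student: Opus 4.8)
The plan is to check the data of Definition~\ref{def:FactAlgMor} for the family $\Pi=\Pi_0\circ e^{-C}$. Since Definition~\ref{def:FactAlgMor} only asks for compatibility with the inclusions $m_U^V$ and with the binary products $m_{U_1,U_2}^V$ (the higher maps being determined recursively), there are exactly three things to establish: that each $\Pi_I$ is a quasi-isomorphism of chain complexes, that $\Pi$ intertwines the inclusions (the left square of \eqref{FactAlMor}), and that $\Pi$ intertwines the products (the right square of \eqref{FactAlMor}). The first is already in hand: in \eqref{improvedPI2} we saw that $\Pi=\Pi_0\circ e^{-C}$ is a chain map in which $\Pi_0$ is a quasi-isomorphism and $e^{-C}$ an isomorphism, so each $\Pi_I:\mathfrak{F}_{\rm BV}(I)\to\mathfrak{F}_{\rm Weyl}(I)$ is a quasi-isomorphism. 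The inclusion compatibility $\Pi_J\circ m_I^J=\Pi_I$ is then immediate, because $\Pi$ depends on an interval only through the pairings $\langle f,\sigma_\pm\rangle$ and the operator $h$, none of which notice whether a compactly supported function is viewed in $I$ or in a larger $J\supseteq I$, while $m_I^J$ on the BV side merely reinterprets the support and the corresponding Weyl-side map is the identity.

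\textbf{Product compatibility.} The real content is the right square of \eqref{FactAlMor}, i.e.~$\Pi\circ m_{12}=m_\mu\circ(\Pi\otimes\Pi)$ on \emph{all} of $\mathfrak{F}_{\rm BV}(I_1)\otimes\mathfrak{F}_{\rm BV}(I_2)$, extending the single-letter verification \eqref{commutativitydiagram} to arbitrary words $F_1\otimes F_2$. I would organize this around the fact that $C=\Delta\circ(1\otimes h)$ is a second-order operator with a fixed bilinear kernel, so it acts by contracting pairs of letters and, like the constant-coefficient operators $\partial_x\partial_y$ appearing in \eqref{Generalmu}, its different "slot" pieces commute. Concretely, on a wedge product one has the coproduct-type identity
\begin{equation}
 C\circ\wedge \ = \ \wedge\circ\big(C\otimes 1+1\otimes C+C_\times\big)\;,
\end{equation}
where $C_\times$ contracts one letter of the left tensor factor against one of the right. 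Because the three terms commute, exponentiating gives the factorization
\begin{equation}
 e^{-C}\circ\wedge \ = \ \wedge\circ e^{-C_\times}\circ\big(e^{-C}\otimes e^{-C}\big)\;,
\end{equation}
so that $e^{-C}(F_1\wedge F_2)$ separates cleanly into contractions internal to $F_1$, internal to $F_2$, and cross contractions linking the two.

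\textbf{Identifying the cross contraction and the obstacle.} Applying $\Pi_0$ and using that it is a morphism for $\wedge$, so $\Pi_0\circ\wedge=\wedge\circ(\Pi_0\otimes\Pi_0)$, the internal pieces reassemble into $\Pi F_1$ and $\Pi F_2$, and it remains to show that the projected cross contraction $(\Pi_0\otimes\Pi_0)\,e^{-C_\times}$ becomes exactly the contraction operator $\exp(\partial_a\otimes\partial_{a^\dagger})$ defining $\mu$ in \eqref{Generalmu}. The per-pair input for this is precisely the linear computation \eqref{CCOMPUUU}–\eqref{Caadagger}: for $I_1<I_2$ a single cross contraction $-C_\times$ pairs an $a^\dagger$ of the earlier factor with an $a$ of the later factor with weight $+1$, which is the contraction carried out by $\mu(\Pi F_2,\Pi F_1)$, and by the time-ordering prescription \eqref{Defmmu} this is exactly $m_\mu(\Pi F_1,\Pi F_2)$; the case $I_1>I_2$ follows from $C(F_1\wedge F_2)=C(F_2\wedge F_1)$ together with the opposite ordering in \eqref{Defmmu}. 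The main obstacle I anticipate is the combinatorial bookkeeping in this last step: one must check that exponentiating the second-order operator reproduces the full sum-over-contractions combinatorics of the normal-ordered product $\mu$, with all signs from permuting the degree $-1$ letters accounted for and with the two cases of \eqref{Caadagger} correctly matched to the time-ordering in $m_\mu$. Once this identification is in place, commutativity of the right square of \eqref{FactAlMor} holds for all $F_1,F_2$, and combined with the inclusion compatibility and the chain-level quasi-isomorphism this establishes that $\Pi$ is a quasi-isomorphism of factorization algebras.
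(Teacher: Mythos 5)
Your proposal is correct, and up to the key lemma it coincides with the paper's own proof: your coproduct identity $C\circ\wedge=\wedge\circ(C\otimes 1+1\otimes C+C_\times)$ with commuting pieces, and the resulting factorization $e^{-C}\circ\wedge=\wedge\circ e^{-C_\times}\circ(e^{-C}\otimes e^{-C})$, are exactly the paper's second proposition (your $C_\times$ is its $C_2$), and your reduction to chainwise quasi-isomorphism plus the two squares of \eqref{FactAlMor} matches as well. Where you diverge is the endgame. The paper never performs the combinatorial identification that you flag as the main obstacle: it instead packages $e^{-C}$ as an isomorphism onto an auxiliary prefactorization algebra $\mathfrak{F}_{\rm cBV}$ with transported product $\bar m=\wedge\circ e^{-C_2}\circ(m\otimes m)$, verifies $\Pi_0\circ\bar m=m_\mu\circ(\Pi_0\otimes\Pi_0)$ only on canonical representatives $\iota_{I_1}(F)\otimes\iota_{I_2}(G)$ built from the normalized functions $g_a,g_{a^\dagger},f_a,f_{a^\dagger}$ --- on which $C_2$ acts literally as $-\partial_{g_a}\otimes\partial_{f_{a^\dagger}}$, all other pairings vanishing by \eqref{fafadaggercomp} --- and then extends to arbitrary degree-zero elements homologically, writing $F=\iota_{I_1}\circ\Pi_0(F)+dF'$ and killing the exact corrections because $\bar m$ is a chain map and $\Pi_0\circ d=0$. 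Your direct route also works, and in fact the obstacle you anticipate largely dissolves: by \eqref{CCOMPUUU} one has $C(f,g)=-f_{+}g_{-}$ for \emph{arbitrary} degree-zero letters, not just the normalized representatives, so since $\Pi_0$ is a morphism for $\wedge$ and $\partial_a,\partial_{a^\dagger}$ are derivations, a one-line Leibniz computation gives the operator identity
\begin{equation}
(\Pi_0\otimes\Pi_0)\circ(-C_\times)\;=\;\big(\partial_{a^\dagger}\otimes\partial_{a}\big)\circ(\Pi_0\otimes\Pi_0)
\qquad (I_1<I_2)\;,
\end{equation}
which iterates to the exponentials and, after the flip built into \eqref{Defmmu}, reproduces exactly the $\partial_a\otimes\partial_{a^\dagger}$ of \eqref{Generalmu}; moreover your worry about signs from degree $-1$ letters is vacuous, since $C$ vanishes unless both arguments have degree zero and $\Pi_0$ annihilates any word containing a degree $-1$ letter, so both sides of the square vanish on such words. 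What your route buys is the elimination of the auxiliary algebra and the exactness argument; what the paper's buys is trading all contraction combinatorics for a single evaluation on representatives plus standard homological bookkeeping. As written, your proof stops just short --- you assert the identification rather than prove it --- so to complete it you should record the displayed intertwining (or adopt the paper's representative-plus-homotopy extension); with that one line added, your argument is a full and slightly more economical proof of the theorem.
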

We devide the proof  into two steps: 
\begin{prop}
The factorization product $\bar{m}$ defined via the diagram
\begin{equation}\label{Deformedmmu}
    \begin{tikzcd}
    \mathfrak{F}_{\rm BV}(I_1) \otimes \mathfrak{F}_{\rm BV}(I_2) \arrow[rr,"m_{I_1,I_2}^J"] & & \mathfrak{F}_{\rm BV}(J) \arrow[d, "e^{-C}"] \\
    \mathfrak{F}_{\rm cBV}(I_1) \otimes \mathfrak{F}_{\rm cBV}(I_2) \arrow[u,swap,"e^C \otimes e^C"] \arrow[rr,"\bar m_{I_1,I_2}^J"] & &\mathfrak{F}_{\rm cBV}(J) \, .
    \end{tikzcd}
\end{equation}
gives $\mathfrak{F}_{\rm cBV}$ the structure of a prefactorization algebra. The map $e^{-C}: \mathfrak{F}_{\rm BV}(I) \rightarrow \mathfrak{F}_{\rm cBV}(I)$ defines an isomorphism of prefactorization algebras.
\end{prop}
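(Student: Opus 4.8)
The plan is to treat this as a pure transport-of-structure argument: the deformed product $\bar m$ is \emph{defined} so that $e^{-C}$ intertwines it with the wedge product $m$, so every prefactorization axiom for $\mathfrak{F}_{\rm cBV}$ should reduce to the corresponding axiom for $\mathfrak{F}_{\rm BV}$ after conjugating by $e^{\pm C}$ and cancelling adjacent factors $e^{C}e^{-C}=\mathrm{id}$. Concretely, I would first declare all structure maps of $\mathfrak{F}_{\rm cBV}$ to be the conjugates $\bar m_I^J := e^{-C}\circ m_I^J\circ e^{C}$ and $\bar m_{I_1 I_2}^J := e^{-C}\circ m_{I_1 I_2}^J\circ (e^{C}\otimes e^{C})$, the latter being exactly the content of diagram (\ref{Deformedmmu}), with underlying spaces $\mathfrak{F}_{\rm cBV}(I)={\rm Obs}^{cl}(I)$ carrying the undeformed differential $d$. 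Since we already established $d\,e^{-C}=e^{-C}\delta_{\rm BV}$ with inverse $e^{C}$, each $e^{\pm C}$ is a bijective chain map; this immediately shows both that $\bar m$ is a chain map with respect to $d$ (the factors $e^{\pm C}$ convert $\delta_{\rm BV}\otimes 1+1\otimes\delta_{\rm BV}$ into $d\otimes 1 + 1\otimes d$, using $\delta_{\rm BV}e^{C}=e^{C}d$) and that each component $e^{-C}$ is an invertible chain map.

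The one structural fact that must be checked by hand — and which I expect to be the only real obstacle — is that $C$ commutes with the inclusion maps, i.e.\ that $C$ is \emph{interval-independent}. This holds because $C(f_1\wedge f_2)=\Delta(f_1\wedge h(f_2))=i\hbar\int f_1\,h(f_2)$ by (\ref{BVDELTA}) and (\ref{eq:h}), a number that depends only on the functions $f_1,f_2$ and not on whether they are regarded as compactly supported in $I$ or in a larger $J$; the contraction against the compactly supported $f_1$ renders the non-compact support of $h(f_2)$ irrelevant. Consequently $e^{\pm C}\circ m_I^J = m_I^J\circ e^{\pm C}$, so the conjugated inclusion $\bar m_I^J$ is again just the trivial inclusion and the inclusion-compatibility square for $e^{-C}$ in Definition \ref{def:FactAlgMor} commutes automatically.

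With these preliminaries, I would verify the axioms one representative at a time, the associativity/compatibility relation (\ref{CompaAss}) being the decisive case. Expanding $\bar m_{V_1 U_3}^{W}\circ(\bar m_{U_1 U_2}^{V_1}\otimes\mathrm{id})$ with the conjugated definitions, the $e^{C}$ that $\bar m_{V_1 U_3}^{W}$ inserts on its $V_1$-input meets the $e^{-C}$ produced by $\bar m_{U_1 U_2}^{V_1}$ and cancels; after pulling a single $e^{-C}$ to the far left and $(e^{C}\otimes e^{C}\otimes e^{C})$ to the far right, what remains is $e^{-C}\circ m_{V_1 U_3}^{W}\circ(m_{U_1 U_2}^{V_1}\otimes\mathrm{id})\circ(e^{C}\otimes e^{C}\otimes e^{C})$. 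The same manipulation on the other side of (\ref{CompaAss}) gives $e^{-C}\circ m_{U_1 V_2}^{W}\circ(\mathrm{id}\otimes m_{U_2 U_3}^{V_2})\circ(e^{C}\otimes e^{C}\otimes e^{C})$, so the two sides coincide precisely because the unconjugated relation (\ref{CompaAss}) holds in $\mathfrak{F}_{\rm BV}$. The unit axiom, the symmetry $m_{U,V}^W(a,b)=(-)^{ab}m_{V,U}^W(b,a)$, and the two inclusion–product compatibilities all follow by the identical cancellation pattern, requiring only care with the tensor-factor ordering and Koszul signs but no new input.

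Finally I would assemble the conclusion. The two squares of Definition \ref{def:FactAlgMor} commute for $e^{-C}$: the product square by the very definition (\ref{Deformedmmu}) of $\bar m$, and the inclusion square by the interval-independence of $C$ established above. Since each component $e^{-C}:\mathfrak{F}_{\rm BV}(I)\to\mathfrak{F}_{\rm cBV}(I)$ is a bijective chain map with inverse $e^{C}$, the family $e^{-C}$ is an isomorphism of prefactorization algebras, and the transported axioms exhibit $(\mathfrak{F}_{\rm cBV},\bar m)$ as a prefactorization algebra, as claimed.
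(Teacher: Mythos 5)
Your proposal is correct and follows essentially the same route as the paper: the paper's proof is exactly a transport-of-structure argument, stating as a general observation that any family of invertible chain maps $R_U$ carries the prefactorization axioms across by conjugation, then specializing to $R_I = e^{-C}$. The only difference is one of explicitness --- you spell out the $e^{C}e^{-C}$ cancellations in the associativity axiom and, usefully, you justify the interval-independence of $C$ (so that $\bar m_I^J = m_I^J$ and $e^{-C}$ commutes with inclusions), a point the paper asserts without comment.
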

\begin{proof}
We begin with the following observation. Given a prefactorization algebra $\mathfrak{F}$ together with a family of chain complexes $C_U$ for each open connected set $U$, such that there is an isomorphism $R_U: \mathfrak{F}(U) \rightarrow C_U$ (meaning that $R_U$ is an invertible chain map), then $C_U$ can be given the structure of a prefactorization algebra $\mathfrak{C}$ isomorphic to $\mathfrak{F}$. This is achieved by defining
\begin{equation}
    \bar{m}_U^V = R_V \circ m_U^V \circ R^{-1}_U \, , \qquad \bar{m}_{U_1,U_2}^V = R_V \circ m_{U_1,U_2}^V \circ (R_{U_1}^{-1} \otimes R_{U_2}^{-1}) \, .
\end{equation}
The maps $\bar{m}_U^V$ and $\bar{m}_{U_1,U_2}^V$ then give the family $C_U$ the structure of a prefactorization algebra $\mathfrak{C}$ and $R_U: \mathfrak{F}(U) \rightarrow \mathfrak{C}(U)$, where $\mathfrak{C}(U) = C_U$ defines an isomorphism of prefactorization algebras.

Recall that $R_I = e^{-C}$ defines an isomorphism $R_I: \mathfrak{F}_{\rm BV}(I) \rightarrow \text{Obs}^{cl}(I)$. Therefore, $\text{Obs}^{cl}(I)$ combine into a prefactorization algebra $\mathfrak{F}_{\rm cBV}$ with inclusion maps
\begin{equation}
\bar m_{I}^J = R_J \circ m_I^J \circ R_I^{-1} = m_I^J  
\end{equation}
and factorization product as defined in \eqref{Deformedmmu}.
\end{proof}

Our next step is to show that $\mathfrak{F}_{\rm cBV}$ and $\mathfrak{F}_{\rm Weyl}$ are quasi-isomorphic via $\Pi_0$. We already know that $\Pi_0: \mathfrak{F}_{\rm cBV} \rightarrow \mathfrak{F}_{\rm Weyl}$ defines a quasi-isomorphism on the level of chain complexes. Before proving that it defines a quasi-isomorphism, we will first simplify the formula for $\bar m_{I_1,I_2}^J$. 

\begin{prop}
The factorization product $\bar m_{I_1,I_2}^J: \mathfrak{F}_{\rm cBV}(I_1) \otimes \mathfrak{F}_{\rm cBV}(I_2) \rightarrow \mathfrak{F}_{\rm cBV}(J)$ can be written as
\begin{equation}
\bar m_{I_1,I_2}^J = \wedge \circ e^{-C_2} \circ (m_{I_1}^J \otimes m_{I_2}^J) \, ,
\end{equation}
where $C_2: \mathfrak{F}_{\rm cBV}(J) \otimes \mathfrak{F}_{\rm cBV}(J) \rightarrow \mathfrak{F}_{\rm cBV}(J) \otimes \mathfrak{F}_{\rm cBV}(J)$ is defined as
\begin{equation}\label{C2action}
    C_2(f_1\cdots f_m \otimes g_1 \cdots g_n) = \sum_{i = 1}^m \sum_{j = 1}^n C(f_i,g_j) f_1\cdots \hat f_i \cdots f_m \otimes g_1 \cdots \hat g_i \cdots g_n \, .
\end{equation}
where
\begin{equation}
    C(f,g) = -\frac{\hbar}{2 \omega}\int d t d s \, f(t)e^{-i|t-s|}g(s)
\end{equation}
for $f$ and $g$ in degree zero, and zero otherwise.
\end{prop}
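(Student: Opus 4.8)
The plan is to begin from the defining relation for $\bar m_{I_1,I_2}^J$ provided by the commuting diagram \eqref{Deformedmmu}, namely
\begin{equation}
\bar m_{I_1,I_2}^J = e^{-C}\circ m_{I_1,I_2}^J \circ (e^{C}\otimes e^{C})
= e^{-C}\circ \wedge \circ (m_{I_1}^J\otimes m_{I_2}^J)\circ (e^{C}\otimes e^{C})\;,
\end{equation}
where I used that the undeformed product $m_{I_1,I_2}^J$ is the wedge of the included functionals. Since $C=\Delta\circ(1\otimes h)$ contracts pairs of degree-zero functions through a kernel that is insensitive to whether the functions are regarded as supported in $I_k$ or in $J$, it commutes with the inclusions, $m_{I_k}^J\circ C = C\circ m_{I_k}^J$. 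Hence the $e^{C}$ factors can be pulled to the right of the inclusions, and the statement reduces to the purely algebraic operator identity
\begin{equation}\label{mainplanid}
e^{-C}\circ \wedge \circ (e^{C}\otimes e^{C}) = \wedge \circ e^{-C_2}
\end{equation}
on $\mathfrak{F}_{\rm cBV}(J)\otimes\mathfrak{F}_{\rm cBV}(J)$.

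The key step is the coproduct identity encoding that $C$ is a second-order operator,
\begin{equation}\label{coprodplan}
C\circ \wedge = \wedge \circ \big(C\otimes 1 + 1\otimes C + C_2\big)\;,
\end{equation}
which I would verify by evaluating both sides on a monomial $f_1\cdots f_m\otimes g_1\cdots g_n$: the pairs of letters contracted by $C$ split into those internal to the left factor, those internal to the right factor, and the cross pairings, the last of which are exactly $C_2$ from \eqref{C2action}. Since $C$ removes only even (degree-zero) letters, no signs obstruct this separation. The second ingredient is that the three operators $C\otimes 1$, $1\otimes C$ and $C_2$ mutually commute, each being a constant-coefficient second-order operator built from commuting letter-annihilations acting on disjoint slots. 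Granting \eqref{coprodplan} and this commutativity, an induction on powers gives $C^n\circ\wedge = \wedge\circ(C\otimes 1 + 1\otimes C + C_2)^n$, whence
\begin{equation}
e^{-C}\circ \wedge = \wedge\circ e^{-(C\otimes 1 + 1\otimes C + C_2)}
= \wedge\circ (e^{-C}\otimes e^{-C})\circ e^{-C_2}\;,
\end{equation}
using commutativity to factor the exponential together with $e^{-(C\otimes 1)}e^{-(1\otimes C)}=e^{-C}\otimes e^{-C}$; all series truncate because $C$ and $C_2$ are nilpotent on polynomials. Composing on the right with $e^{C}\otimes e^{C}$ cancels the $e^{-C}\otimes e^{-C}$ factor and establishes \eqref{mainplanid}, and reinstating the inclusions yields $\bar m_{I_1,I_2}^J = \wedge\circ e^{-C_2}\circ (m_{I_1}^J\otimes m_{I_2}^J)$, as claimed.

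I expect the main obstacle to be the careful bookkeeping in the second-order coproduct identity \eqref{coprodplan}, in particular confirming that the intra-factor contractions reassemble precisely into $C\otimes 1 + 1\otimes C$ and that the cross-contractions match the stated $C_2$, including the placement of the removed letters in their respective tensor slots. Once \eqref{coprodplan} is secured, the mutual commutativity of the three operators and the telescoping of the exponentials are routine, and the cancellation against $e^{C}\otimes e^{C}$ is immediate.
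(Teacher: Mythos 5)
Your proposal is correct and follows essentially the same route as the paper's proof: the same coproduct identity $C \circ \wedge = \wedge \circ (C\otimes 1 + 1\otimes C + C_2)$, the mutual commutativity of $C\otimes 1$, $1\otimes C$ and $C_2$, and the exponentiation that lets $e^{C}\otimes e^{C}$ cancel after commuting past the inclusions. The only (welcome) addition is that you justify explicitly that $C$ commutes with the inclusion maps $m_{I_k}^J$, a step the paper uses tacitly when reordering $(e^{C}\otimes e^{C})$ and $(m_{I_1}^J\otimes m_{I_2}^J)$.
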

\begin{proof}
Let $C$ be the second order operator defining $e^{-C}$. It is straightforward to show that it satisfies
\begin{equation}\label{Bialgebra}
    C \circ \wedge = \wedge \circ (C \otimes 1 + 1 \otimes C + C_2) \, ,
\end{equation}
where $C_2$ is given in \eqref{C2action}. Furthermore, we also have 
\begin{equation}\label{OperatorsCommute}
[C_2,C \otimes 1] = [C_2,1 \otimes C] = [C \otimes 1, 1 \otimes C] = 0 
\end{equation}
as linear maps on $\mathfrak{F}_{\rm cBV}(J) \otimes \mathfrak{F}_{\rm cBV}(J)$. From these two properties we deduce that
\begin{equation}
\begin{split}
\exp(-C) \circ \wedge &= \wedge \circ \exp(-(C \otimes 1 + 1 \otimes C + C_2)) \\
&= \wedge \circ \exp(-C_2) \circ \exp(-(C \otimes 1)) \circ \exp(-(1 \otimes C))  \\
&= \wedge \circ \exp(-C_2) \circ (\exp(-C) \otimes 1) \circ (1 \otimes\exp(-C))
\\
&= \wedge \circ \exp(-C_2) \circ (\exp(-C) \otimes \exp(-C)) \, .
\end{split}
\end{equation}
The first equation follows from \eqref{Bialgebra}, while the second equation is due to \eqref{OperatorsCommute}. We can use this in the definition of the $\bar  m^J_{I_1,I_2}$:
\begin{equation}
\begin{split}
    \bar m^J_{I_1,I_2} &= e^{-C} \circ m^J_{I_1,I_2} \circ (e^{-C} \otimes e^{-C}) = e^{-C} \circ \wedge \circ (m^J_{I_1} \otimes m^J_{I_2}) \circ (e^{C} \otimes e^{C}) \\
    &= e^{-C} \circ \wedge  \circ (e^{C} \otimes e^{C}) \circ (m^J_{I_1} \otimes m^J_{I_2}) = \wedge \circ e^{-C_2} \circ (m^J_{I_1} \otimes m^J_{I_2}) \, .
\end{split}
\end{equation}
This is what we had to show.
\end{proof}

On $\mathbb{C}[a,a^\dagger]$, the product $\mu$ of the Weyl algebra takes a very similar form, see \eqref{Generalmu}. There, instead of $C_2$, we take the exponential of the operator
\begin{equation}
    -\partial_{a} \otimes \partial_{a^\dagger}: \mathbb{C}[a,a^\dagger] \otimes \mathbb{C}[a,a^\dagger] \longrightarrow \mathbb{C}[a,a^\dagger] \otimes \mathbb{C}[a,a^\dagger] \, 
\end{equation}
before applying the wedge product. We argued that $\mu$ defines a deformation of the commutative product $\text{Sym}(\mathbb{C}^2)$. In a similar way, we can think of the factorization product $\bar m_{I_1,I_2}^J$ as a deformation of the product $m_{I_1,I_2}^J$ on $\mathfrak F_{\rm {cBV}}(I)$.

Recall that we want to establish that $\Pi$ defines a quasi-isomorphism from $\mathfrak{F}_{\rm BV}$ to $\mathfrak{F}_{\rm Weyl}$. Since $\mathfrak{F}_{\rm BV}$ and $\mathfrak{F}_{\rm cBV}$ are isomorphic via $e^{-C}$, it remains to prove that $\mathfrak{F}_{\rm cBV}$ is quasi-isomorphic to $\mathfrak{F}_{\rm Weyl}$ with quasi-isomorphism $\Pi_0$.
\begin{prop}
The family of maps $\Pi_0: \mathfrak{F}_{\rm cBV}(I) \rightarrow \mathfrak{F}_{\rm Weyl}(I)$ defines a quasi-isomorphism.
\end{prop}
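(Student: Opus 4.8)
The plan is to verify the two conditions of Definition~\ref{def:FactAlgMor}. Each $\Pi_0 : \mathfrak{F}_{\rm cBV}(I) \to \mathfrak{F}_{\rm Weyl}(I)$ is already known to be a quasi-isomorphism of chain complexes, so it remains only to check compatibility of $\Pi_0$ with the two kinds of structure maps, i.e.\ that the inclusion square and the product square in Definition~\ref{def:FactAlgMor} commute. The inclusion square is immediate: the cBV inclusion is the bare map $\bar m_I^J = m_I^J$, which merely re-reads a function supported in $I$ as a function supported in $J$, while the Weyl inclusion is the identity. Since $\Pi_0(f)=f_- a + f_+ a^\dagger$ depends only on the pairings $\langle f,\sigma_\pm\rangle$, and not on which interval $f$ is regarded as living in, and since $\Pi_0$ acts as an algebra morphism, we get $\Pi_0\circ m_I^J = \mathrm{id}\circ \Pi_0$. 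The entire content therefore lies in the product square, namely the identity
\begin{equation}
    \Pi_0\circ \bar m_{I_1,I_2}^J = m_\mu \circ (\Pi_0\otimes\Pi_0)\,.
\end{equation}

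To attack this I would use the simplified formula $\bar m_{I_1,I_2}^J = \wedge\circ e^{-C_2}\circ(m_{I_1}^J\otimes m_{I_2}^J)$ from the preceding proposition, together with the fact that $\Pi_0$ is an algebra morphism, $\Pi_0\circ\wedge = \wedge\circ(\Pi_0\otimes\Pi_0)$. Substituting gives $\Pi_0\circ\bar m_{I_1,I_2}^J = \wedge\circ(\Pi_0\otimes\Pi_0)\circ e^{-C_2}\circ(m_{I_1}^J\otimes m_{I_2}^J)$. The crux is then the intertwining relation
\begin{equation}
    (\Pi_0\otimes\Pi_0)\circ e^{-C_2} = e^{-D}\circ(\Pi_0\otimes\Pi_0)\,,
\end{equation}
where $D$ is the interval-ordered Weyl contraction operator characterised by $\wedge\circ e^{-D} = m_\mu$ (for $I_1<I_2$ it is the contraction of $\partial_{a^\dagger}$ on the first factor with $\partial_{a}$ on the second, and the opposite order for $I_1>I_2$, the swap being dictated by $m_\mu(X_1,X_2)=\mu(X_2,X_1)$).

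Because $C_2$ is the bi-derivation extending the pairing $C(\cdot,\cdot)$ across the two tensor factors, and because $\Pi_0$ is generated by its action on single functions, it suffices to check the intertwining at the level of a single contraction: one must verify that for $f$ supported in $I_1$ and $g$ supported in $I_2$ the pairing $C(f,g)$ equals the single Weyl contraction appearing in $m_\mu(\Pi_0 f,\Pi_0 g)$. This is exactly the computation already carried out in \eqref{CCOMPUUU}: for $I_1<I_2$ one finds
\begin{equation}
    C(f,g) = -\langle f,\sigma_-\rangle\,\langle g,\sigma_+\rangle\,,
\end{equation}
and since $\Pi_0 f = \langle f,\sigma_+\rangle\, a + \langle f,\sigma_-\rangle\, a^\dagger$ this reproduces precisely the single contraction of $\Pi_0 f$ against $\Pi_0 g$ taken in the order fixed by $m_\mu(\Pi_0 f,\Pi_0 g)=\mu(\Pi_0 g,\Pi_0 f)$; the case $I_1>I_2$ is analogous.

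Given the single-contraction match, the full intertwining follows by exponentiation: $\Pi_0\otimes\Pi_0$ carries the bi-derivation $C_2$ to the bi-derivation $D$ term by term on words, and since both operators strictly lower the total monomial degree they are nilpotent on each element, so the series terminates and the intertwining of $C_2$ with $D$ lifts to $e^{-C_2}$ with $e^{-D}$. Finally, absorbing $(\Pi_0\otimes\Pi_0)\circ(m_{I_1}^J\otimes m_{I_2}^J) = \Pi_0\otimes\Pi_0$ via the inclusion compatibility, we reach $\Pi_0\circ\bar m_{I_1,I_2}^J = \wedge\circ e^{-D}\circ(\Pi_0\otimes\Pi_0) = m_\mu\circ(\Pi_0\otimes\Pi_0)$, which is the desired product square; together with the known quasi-isomorphism on each complex this gives the proposition. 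The main obstacle I anticipate is making the exponentiation fully rigorous: one must confirm that the interval ordering consistently supplies the swap turning the symmetric kernel $C$ into the order-dependent Weyl contraction $D$, and that the bi-derivation property of $C_2$ really does propagate the match of single contractions to all higher contractions, so that no cross-terms within a single factor are ever generated.
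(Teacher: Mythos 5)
Your proof is correct, but it takes a genuinely different route from the paper's. The paper never proves your operator intertwining $(\Pi_0\otimes\Pi_0)\circ e^{-C_2}=e^{-D}\circ(\Pi_0\otimes\Pi_0)$; instead it constructs an explicit quasi-inverse $\iota_I$ sending $a\mapsto f_a$, $a^\dagger\mapsto f_{a^\dagger}$ (canonical representatives obeying \eqref{fafadaggercomp}), observes that on such representatives the contraction values collapse to $C(g_a,f_{a^\dagger})=-1$ with all others zero, so that $C_2$ literally becomes $-\partial_{g_a}\otimes\partial_{f_{a^\dagger}}$ and the product square commutes on the image of $\iota_{I_1}\otimes\iota_{I_2}$; it then extends to arbitrary degree-zero elements by writing $F=\iota_{I_1}\Pi_0(F)+dF'$, $G=\iota_{I_2}\Pi_0(G)+dG'$ and killing the exact remainder using that $\bar m^J_{I_1,I_2}$ is a chain map and $\Pi_0\circ d=0$. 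Your argument replaces this representative-plus-homotopy bookkeeping with a single global computation: since $\Pi_0$ is a morphism (eq.~\eqref{Piviamorph}) and both $C_2$ and $D$ are bi-derivations contracting only across the tensor factors, the intertwining reduces to the one-contraction check $-C(f,g)=f_+g_-$ for $I_1<I_2$, which is exactly \eqref{CCOMPUUU}, and then exponentiates because both operators strictly lower word length and are hence nilpotent on each element. The concerns you flag at the end are in fact unproblematic: your identification $-D=\partial_{a^\dagger}\otimes\partial_a$ for $I_1<I_2$ correctly encodes the swap $m_\mu(X,Y)=\mu(Y,X)$ (commutativity of $\wedge$ converts $e^{\partial_a\otimes\partial_{a^\dagger}}$ on $Y\otimes X$ into $e^{\partial_{a^\dagger}\otimes\partial_a}$ on $X\otimes Y$), and no Koszul signs arise in propagating the single-contraction match to words, since $C$ only contracts degree-zero (even) letters while any word containing a degree $-1$ letter is annihilated on both sides of the intertwining because $\Pi_0$ vanishes on degree $-1$. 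What each approach buys: yours establishes $\Pi_0\circ\bar m^J_{I_1,I_2}=m_\mu\circ(\Pi_0\otimes\Pi_0)$ strictly on the entire complex, with no choice of representatives and no restriction to degree zero, whereas the paper's route localizes the whole verification to a trivial combinatorial computation on monomials in $f_a$, $f_{a^\dagger}$, reusing machinery ($\iota$ and the homotopy decomposition) it has already built, at the cost of only obtaining the identity after quotienting out exact terms and then arguing they drop.
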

\begin{proof}
We need to prove commutativity of the diagrams
\begin{equation}\label{DeformQI}
    \begin{tikzcd}
    \mathfrak{F}_{\rm cBV}(I) \arrow[r,"\bar m_{I}^J"] \arrow[d," \Pi_0"] & \mathfrak{F}_{\rm cBV}(J) \arrow[d,"\Pi_0"]\\
    \mathfrak{F}_{\rm Weyl}(I) \arrow[r,"\text{id}"] &  \mathfrak{F}_{\rm Weyl}(J)
    \end{tikzcd}
    \ , \qquad 
    \begin{tikzcd}
    \mathfrak{F}_{\rm cBV}(I_1) \otimes \mathfrak{F}_{\rm cBV}(I_2) \arrow[rr,"\bar m_{I_1{,}I_2}^J"] \arrow[d,"\Pi_0 \otimes \Pi_0"] & & \mathfrak{F}_{\rm cBV}(J) \arrow[d,"\Pi_0"]\\
    \mathfrak{F}_{\rm Weyl}(I_1) \otimes \mathfrak{F}_{\rm Weyl}(I_2) \arrow[rr,"m_\mu"] & &  \mathfrak{F}_{\rm Weyl}(J)
    \end{tikzcd}
\end{equation} 
Since $\bar m_I^{J} = m_I^J$ is just the subspace inclusion $C^\infty_c(I) \rightarrow C^\infty_c(J)$, commutativity of the first diagram is trivial. For the second diagram, we assume without loss of generality that $I_2 > I_1$. The other case follows from $m_{I_1,I_2}^J(f,g) = m_{I_1,I_2}^J(g,f)$.

The strategy to prove commutativity of the second diagram in \eqref{DeformQI} is to first prove it for the case when we replace $\Pi_0 \otimes \Pi_0$ with a quasi-inverse $\iota_{I_1} \otimes \iota_{I_2}$. This quasi-inverse is defined by associating to $G(a,a^\dagger) \in \mathfrak{F}_{\rm Weyl}(I_2)$ the object
\begin{equation}
\iota_{I_2}(G(a,a^\dagger)) = G(f_a,f_{a^\dagger}) \in \text{Sym}(C^\infty_c(I_2)) \, ,
\end{equation} 
where $f_a$ and $f_a^\dagger$ are compactly supported in $I_2$ with normalization conditions given in \eqref{fafadaggercomp}. Explicitly, if $G$ is of the form
\begin{equation}
    G(a,a^\dagger) = \lambda (a^\dagger)^m a^n
\end{equation}
for some $\lambda \in \mathbb{C}$, $G(f_a,f_{a^\dagger})$ is given by
\begin{equation}
    G(f_a,f_{a^\dagger}) = \lambda \underbrace{f_{a^\dagger} \wedge \cdots \wedge f_{a^\dagger}}_\text{$m$ times} \wedge \underbrace{f_{a} \wedge \cdots \wedge f_{a}}_\text{$n$ times} \, .
\end{equation}
Similarly, given $F(a,a^\dagger) \in \mathfrak{F}_{\rm Weyl}(I_1)$, we define 
\begin{equation}
\iota_{I_1}(F(a,a^\dagger)) = F(g_a,g_{a^{\dagger}}) \, ,  
\end{equation}
where $g$ and $g_{a^{\dagger}}$ satisfy the conditions in \eqref{fafadaggercomp} with compact support in $I_1$. 
We have
\begin{equation}
   (\Pi_0 \otimes \Pi_0) (F(g_a,g_{a^{\dagger}}) \otimes G(f_a,f_{a^\dagger})) = F(a, a^\dagger) \otimes G(a, a^\dagger) \, .
\end{equation}

We now want to show that
\begin{equation}\label{mmufrommbar}
    \begin{tikzcd}
    \mathfrak{F}_{\rm cBV}(I_1) \otimes \mathfrak{F}_{\rm cBV}(I_2) \arrow[rr,"\bar m_{I_1{,}I_2}^J"] & & \mathfrak{F}_{\rm cBV}(J) \arrow[d,"\Pi_0"]\\
    \mathfrak{F}_{\rm Weyl}(I_1) \otimes \mathfrak{F}_{\rm Weyl}(I_2) \arrow[rr,"m_\mu"] \arrow[u,swap,"\iota_{I_1} \otimes \iota_{I_2}"] & &  \mathfrak{F}_{\rm Weyl}(J)
    \end{tikzcd}
\end{equation}
commutes. Note that we have
\begin{equation}
C(g_a,f_{a^\dagger}) = -1 \, , \qquad C(g_{a^\dagger},f_a) = C(g_a,f_a) = C(g_{a^\dagger},f_{a^\dagger}) = 0 \, .
\end{equation}
From this together with \eqref{C2action} it follows that $C_2$ acts on $F(g_a,g_{a^{\dagger}}) \otimes G(f_a,f_{a^\dagger})$ as $-\frac{\partial}{\partial g_a} \otimes \frac{\partial}{\partial f_{a^\dagger}}$. Applying $\Pi_0 \circ \wedge$ then implies \eqref{mmufrommbar}.

We use this to show \eqref{DeformQI}, which is non-trivial only in degree zero. Note that $I_{\iota_1}$ and $I_{\iota_2}$ are quasi-isomorphisms. Therefore, any degree zero $F \in \mathfrak{F}_{\rm cBV}(I_1)$ is of the form $F = \iota_{I_1} \circ \Pi_0(F) + d F'$. Similarly, any $G \in \mathfrak{F}_{\rm cBV}(I_2)$ is of the form $G = \iota_{I_2} \circ \Pi_0(G) + d G'$. This implies that
\begin{equation}
    F \otimes G = \iota_{I_1} \circ \Pi_0(F) \otimes \iota_{I_2} \circ \Pi_0(G) + d H \, ,
\end{equation}
where
\begin{equation}
    H = (F' \otimes  (dG' + \iota_{I_2} \circ \Pi_0(G)) + (\iota_{I_1} \circ \Pi_0(F) + d F') \otimes G') \, .
\end{equation}
With this we can compute
\begin{equation}\label{OkuptoExact}
\begin{split}
m_\mu \circ (\Pi_0(F) \otimes \Pi_0(G)) &= \Pi_0 \circ \bar m_{I_1,I_2}^J \circ (\iota_{I_1}\Pi_0(F) \otimes \iota_{I_2}\Pi_0(G)) \\
&= \Pi_0 \circ \bar m_{I_1,I_2}^J(F\otimes G) - \Pi_0 \circ \bar m_{I_1,I_2}^J(dH) \, .
\end{split}
\end{equation}
Since $\bar m_{I_1,I_2}^J$ is a chain map with respect to $d$ and $\Pi_0 \circ d = 0$, the second term in \eqref{OkuptoExact} drops. The remaining terms then tell us that
\begin{equation}
m_\mu (\Pi_0 \otimes \Pi_0) = \Pi_0 \circ \bar m_{I_1,I_2}^J
\end{equation}
also in degree zero. Therefore, the second diagram in $\eqref{DeformQI}$ is also commutative and so $\Pi_0: \mathfrak{F}_{\rm cBV} \rightarrow \mathfrak{F}_{\rm Weyl}$ is a quasi-isomorphism of prefactorization algebras.
\end{proof}

\section{Hilbert spaces associated to intervals with boundary}

In the previous section we have shown that the factorization algebra $\mathfrak{F}_{\rm BV}$ based on the BV algebra 
of quantum observables is quasi-isomorphic to the factorization algebra $\mathfrak{F}_{\rm Weyl}$ encoding 
the operator algebra of raising and lowering operators of the harmonic oscillator. 
In order to have a full-fledged `off-shell' version of the quantum mechanics of the harmonic oscillator it 
remains to encode the space of states that enables the  computation of  quantum expectation values. 
In this section we will  encode quantum mechanics as a factorization algebra on the closed interval 
$[t_i,t_f]$, for which we consider open subsets of three types:\footnote{This construction is motivated by 
the observation of Costello and Gwilliam that viewing an associative algebra $\mathcal{A}$ as a prefactorization algebra on open sets, the boundary corresponds to an $\mathcal{A}$-module --- a vector space on 
which multiplication with elements of $\mathcal{A}$ from the right (or left) is defined.}
 \textit{i)} open intervals $(a,b)$  to which the 
factorization algebra assigns the operator algebra as above;  \textit{ii)} half-open intervals $[t_i, a)$ or 
$(b, t_f]$ to which the factorization algebra assigns the space of states or its dual; \textit{iii)} the closed interval 
$[t_i,t_f]$ to which the factorization algebra assigns complex numbers $\mathbb{C}$ giving 
the `amplitude'.\footnote{Note that as subsets of $[t_i,t_f]$ all these sets are indeed open, since the whole topological space is always both open and closed by definition.}
Moreover, we will present an off-shell factorization algebra that is quasi-isomorphic to this, 
using versions of the  chain complex of the harmonic oscillator with different boundary 
conditions.

\subsection{More general intervals and states }

Our goal is to define a BV algebra corresponding on connected open subsets of the closed interval $[t_i,t_f]$. This includes the half open intervals $[t_i,a), (b,t_f]$ as well as the interval $[t_i,t_f]$ itself.

We could define the complex to consist of compactly supported functions on the connected open sets, just like when working on $\mathbb{R}$. It turns out that this is not exactly the correct thing to do. Rather, we need to put some boundary conditions on the end points of the closed interval $[t_i,t_f]$. We define the differential operators
\begin{equation}\label{delpm} 
\partial_{\pm} := \partial_t \pm i\omega \; ,
\end{equation}
which we use to define the subspace of smooth functions 
\begin{equation}
C^\infty_p([t_i,t_f]) = \{f \in C^\infty([t_i,t_f]) \; | \; \partial_- f(t_i) = \partial_+ f(t_f) = 0  \} \, . 
\end{equation}
Given an open subset $I \subseteq [t_i,t_f]$, we then write $C^\infty_{cp}(I) \subseteq C^\infty_p([t_i,t_f])$ for the space of functions in $C^\infty_p([t_i,t_f])$ with compact support in $I$. We define the space of linear observables on $I$ as the complex
\begin{equation}
\begin{tikzcd}
\text{Obs}^{\rm lin}(I): \quad 
0 \arrow[r] & C^\infty_{cp}(I) \arrow[r,"d"] & C^\infty_{c}(I) \arrow[r] & 0 \, ,
\end{tikzcd}
\end{equation}
where $d = \partial_t^2 + \omega^2$. 

If $I = (a,b)$ is an open interval, the cohomology of $\text{Obs}^{\rm lin}(I)$ is still $\mathbb{C}^2$, since in that case $C_{cp}^\infty(I) = C_{c}^\infty(I)$. Any function with compact support in $(a,b)$ trivially satisfies $\partial_-f(t_i) = \partial_+f(t_f) = 0$. In the case of the half-open or closed intervals, the cohomology is no longer $\mathbb{C}^2$. For example, when $I = [t_i,a)$ we can define the chain map
\begin{equation}\label{ketcohomology}
\begin{tikzcd}
C^\infty_{cp}([t_i,a)) \arrow[d] \arrow[r] & 
C^\infty_c([t_i,a)) 
\arrow[d,"\Pi_0^{+}"] \\
0 \arrow[r ]& \mathbb{C} 
\end{tikzcd} \, ,
\end{equation}
with $\Pi_0^+(f) = \langle f, \sigma_- \rangle a^\dagger$, where we identify $a^\dagger$ with a basis element of $\mathbb{C}$. Indeed, this is a chain map, since
\begin{equation}\label{Piplusischain}
\Pi_0^+(\ddot f + \omega^2 f) = \sqrt{\frac{\hbar}{2 \omega}} \int_{t_i}^{t_f} d t \; (\ddot f(t) \omega^2 f(t))e^{i \omega t} = \sqrt{\frac{\hbar}{2 \omega}} (i \omega f(t_i) -\dot f(t_i))e^{i \omega t_i} = 0 \, .
\end{equation}
In the last step, we used the fact that $f \in C^\infty_{cp}([t_i,a))$. We will show below that $\Pi_0^+$ is in fact a quasi-isomorphism.

Similarly, when $I = (b,t_f]$, we define the chain map
\begin{equation}\label{bracohomology}
\begin{tikzcd}
C^\infty_{cp}((b,t_f]) \arrow[d] \arrow[r] & C^\infty_c((b,t_f]) \arrow[d,"\Pi_0^-"] \\
0 \arrow[r ]& \mathbb{C} 
\end{tikzcd}
\end{equation}
with $\Pi_0^-(f) = \langle f, \sigma_+ \rangle a$. Again, $a$ is some basis vector in $\mathbb{C}$. Performing a computation very similar to \eqref{Piplusischain} shows that $\Pi_0^-$ a chain map due to the boundary conditions we place on an $f$ of degree $-1$ at $t = t_f$. 

Before discussing the cohomology the complex $\text{Obs}^{\rm lin}([t_i,t_f])$, we consider the non-linear observables, for the cases considered above. Similar to the case on $\mathbb{R}$, we define
\begin{equation}
\text{Obs}^q(I) = (\text{Sym}(\text{Obs}^{\rm lin})(I),\delta_{\rm BV})
\end{equation}
with $\delta_{\rm BV} = d + \Delta$. We write $\text{Obs}^{cl}(I)$ for the same complex with $\Delta = 0$. The quasi-isomorphisms $\Pi_0^\pm$ induce a quasi-isomorphism
\begin{equation}
\Pi_0: \text{Obs}^{cl}(I) \longrightarrow \text{Sym}(\mathbb{C}) \; ,
\end{equation}
where we identify $\text{Sym}(\mathbb{C})$ as the algebra of polynomials in either $a$ (when $I = (b,t_f]$) or $a^\dagger$ (when $ I = [t_i,a)$), which we denote by $\mathbb{C}[a]$ or $\mathbb{C}[a^\dagger]$. When $I = (a,b)$, we can use the map $\Pi_0$ defined in the previous section to obtain a quasi-isomorphism into $\text{Sym}(\mathbb C^2)$, the algebra of polynomials in $a$ and $a^\dagger$, denoted by $\mathbb{C}[a,a^\dagger]$.

To obtain a quasi-isomorphism
\begin{equation}
\Pi: \text{Obs}^{q}(I) \longrightarrow \text{Sym}(\mathbb{C}) \; ,
\end{equation}
we can define $\Pi = \Pi_0 e^{-C}$, where we continue to use the contraction operator $C$ defined with the Green's function. As a map into quantum objects, we will find that $\mathbb{C}[a]$ is identified with the Fock space of the harmonic oscillator, interpreted as `outgoing states'
\begin{equation}
\bra 0 \; , \bra 0 a \; , \bra 0 a a \; , \ldots \, .
\end{equation}
Simiarly, $\mathbb{C}[a^\dagger]$ will become the Fock space in terms of `incoming states'
\begin{equation}
\ket 0 \; , a^\dagger \ket 0 \; , a^\dagger a^\dagger\ket 0 \; , \ldots \, .
\end{equation}

We end this section by proving that $\Pi$ is a quasi-isomorphisms from $\text{Obs}^q(I)$ to $\text{Sym}(\mathbb{C})$ when $I$ is half open.

\begin{prop}
For $I = [t_i,a)$ and $I = (b,t_f]$, the projections $\Pi: \text{Obs}^q(I) \rightarrow \text{Sym}(\mathbb{C})$ define quasi-isomorphisms of chain complexes.
\end{prop}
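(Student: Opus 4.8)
The plan is to factor the claimed map as $\Pi = \Pi_0\circ e^{-C}$, exactly as for open intervals, and to reduce the statement to two independent facts: that $e^{-C}\colon \text{Obs}^q(I)\to\text{Obs}^{cl}(I)$ is an \emph{isomorphism} of chain complexes, and that $\Pi_0\colon \text{Obs}^{cl}(I)\to\text{Sym}(\mathbb{C})$ is a quasi-isomorphism. Granting both, the composite $\Pi$ is a quasi-isomorphism. For the second fact I would use that $\text{Obs}^{cl}(I)=\text{Sym}\big(\text{Obs}^{\rm lin}(I)\big)$ and that $\Pi_0$ is the symmetric-algebra extension of the linear map $\Pi_0^{\pm}$; since over a field of characteristic zero the graded-symmetric algebra functor sends quasi-isomorphisms to quasi-isomorphisms (K\"unneth, with $H(\text{Sym}(V))\cong\text{Sym}(H(V))$), it suffices to prove that $\Pi_0^{\pm}$ is a quasi-isomorphism of the two-term linear complex. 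This is the same reduction already used for open intervals, so the genuinely new content sits entirely at the linear level.

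Thus the heart of the proof is to show, for $I=[t_i,a)$, that $\Pi_0^{+}\colon\text{Obs}^{\rm lin}(I)\to\mathbb{C}$ induces an isomorphism on cohomology. In degree $-1$ the cohomology is $\ker\big(d\colon C^\infty_{cp}(I)\to C^\infty_c(I)\big)$, which vanishes since the only solution of $\ddot g+\omega^2 g=0$ with compact support in $I$ is $g=0$; this matches the trivial degree-$(-1)$ part of the target. In degree $0$, surjectivity of $\Pi_0^{+}$ onto $\mathbb{C}$ is immediate, so the only nontrivial point is injectivity on $H_0$: given $f\in C^\infty_c(I)$ with $\Pi_0^{+}(f)=\langle f,\sigma_-\rangle\,a^\dagger=0$, I must exhibit $g\in C^\infty_{cp}(I)$ with $dg=f$. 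Here I would use the advanced propagator $g(t):=\int_t^{t_f}\tfrac{\sin\omega(s-t)}{\omega}\,f(s)\,ds$. A direct check gives $\ddot g+\omega^2 g=f$, and $g$ vanishes for $t$ beyond the support of $f$, so $g$ has compact support in $[t_i,a)$. The decisive computation is that $\partial_- g(t_i)=-e^{-i\omega t_i}\int f(s)\,e^{i\omega s}\,ds$, which is proportional to $\langle f,\sigma_-\rangle$, so the boundary condition $\partial_- g(t_i)=0$ defining $C^\infty_{cp}(I)$ holds \emph{precisely} when $f\in\ker\Pi_0^{+}$. Hence $g\in C^\infty_{cp}(I)$, giving $[f]=0$ and injectivity. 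The case $I=(b,t_f]$ is identical after time reversal, using the retarded propagator and $\sigma_+$ with $\Pi_0^-$.

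For the first fact I would verify $[d,C]=-\Delta$ on $\text{Obs}^{cl}(I)$, which as on $\mathbb{R}$ forces $d\circ e^{-C}=e^{-C}\circ\delta_{\rm BV}$ and hence makes $e^{-C}$ a chain map; it is invertible, with inverse $e^{C}$, because $C$ lowers monomial degree by two and is therefore locally nilpotent, so $e^{-C}$ is a chain isomorphism. The only care needed is in the propagator entering $C=\Delta\circ(1\otimes h)$: one takes $h$ to be the interval-adapted Green's function above, for which the boundary conditions guarantee $h\circ d=\mathrm{id}$ on the degree-$(-1)$ space $C^\infty_{cp}(I)$, which is exactly the identity that $[d,C]=-\Delta$ requires.

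I expect the main obstacle to be precisely the interplay between compact support and the boundary condition at the interior endpoint. On $\mathbb{R}$ one inverts $d$ freely with the symmetric propagator, but here a naive inverse produces homogeneous tails that spoil either compact support or the condition $\partial_- g(t_i)=0$. The content of the argument is that the \emph{one-sided} (advanced/retarded) propagator controls the support automatically, while its single surviving boundary term at the interior endpoint is governed by the same integral $\langle f,\sigma_\mp\rangle$ that defines $\Pi_0^{\pm}$; establishing this alignment — and checking that it is compatible with the $\Delta$-deformation through $h\circ d=\mathrm{id}$ — is where the real work lies.
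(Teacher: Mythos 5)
Your proposal is correct and follows the same architecture as the paper's proof: factor $\Pi=\Pi_0\circ e^{-C}$, show $e^{-C}$ is a chain isomorphism via $[d,C]=-\Delta$ (which reduces to $h(dg)=g$ on degree-$(-1)$ elements), and reduce the quasi-isomorphism of $\Pi_0$ to the two-term linear complex (the paper asserts the lift of $\Pi_0$ to the symmetric algebra without spelling out the K\"unneth/char-zero argument you cite, but the content is identical). The genuine difference sits in the linear-level inversion. The paper inverts $d$ on $\ker\Pi_0$ using the Feynman propagator $h$ of (\ref{eq:h}), for which the roles are: the boundary condition $\partial_- h(f)(t_i)=0$ holds for \emph{every} $f$, while the support requirement ($h(f)$ vanishing above $\mathrm{supp}(f)$, hence compactly supported in $[t_i,a)$) holds precisely when $f\perp\sigma_-$, i.e.\ $f\in\ker\Pi_0^+$. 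Your advanced propagator $g(t)=\int_t^{t_f}\tfrac{\sin\omega(s-t)}{\omega}f(s)\,ds$ swaps the two roles: compact support is automatic, and the single surviving boundary term $\partial_- g(t_i)=-e^{-i\omega t_i}\int f(s)e^{i\omega s}\,ds$ vanishes exactly on $\ker\Pi_0^+$. Both alignments are valid (your computations of $dg=f$ and of $\partial_- g(t_i)$ check out), and your version makes especially transparent that $\ker\Pi_0=\mathrm{im}\,d$ is forced by a boundary term proportional to the very integral defining $\Pi_0^+$.

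One point needs care, however. The proposition is about the \emph{specific} map $\Pi=\Pi_0 e^{-C}$, where the paper insists that $C$ is built from the Feynman propagator (``we continue to use the contraction operator $C$ defined with the Green's function''). If your ``interval-adapted Green's function'' inside $C$ means your advanced propagator, you do get a chain isomorphism — $h_{\mathrm{adv}}(dg)=g$ holds on $C^\infty_{cp}([t_i,a))$ even without the boundary condition at $t_i$ — and hence \emph{a} quasi-isomorphism, but not the paper's $\Pi$: for $I_1<I_2$ the advanced kernel gives $C'(f_1\wedge f_2)=f_{1-}f_{2+}-f_{1+}f_{2-}$ rather than $-f_{1+}f_{2-}$, and the extra term would destroy the compatibility with $m_\mu$ and the structure maps (\ref{m2strucutremaps}) used in theorem \ref{thm:interval} and in the computations (\ref{moerphismsmmmsmss})--(\ref{LeftHANdSIDEEE}). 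To prove the statement for the paper's $\Pi$, run your boundary-term computation for the Feynman kernel instead: by (\ref{fullhboundary}), $h(dg)=g-\tfrac{i}{2\omega}e^{-i\omega(t-t_i)}\partial_- g(t_i)+\tfrac{i}{2\omega}e^{i\omega(t-t_f)}\partial_+ g(t_f)$, and both terms vanish for $g\in C^\infty_{cp}([t_i,a))$ — the first by the defining boundary condition, the second by compact support. That is a two-line repair, not a flaw in your method, but as written the $e^{-C}$ step either proves the proposition for a different map or silently relies on this unverified identity for the Feynman $h$.
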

\begin{proof}
We will only prove the case $I = [t_i,a)$, since the case  $I = (b,t_f]$ is equivalent.

We first show that the chain map $\Pi_0: \text{Obs}^{\rm lin}([t_i,a)) \rightarrow \mathbb{C}$ is a quasi-isomorphism of the linear complex 
\begin{equation}
\begin{tikzcd}
C^\infty_{cp}([t_i,a)) \arrow[d] \arrow[rr,"d = \partial_t^2 + \omega^2"] & &  C^\infty_c([t_i,a)) \arrow[d,"\Pi_0"]\\
0 \arrow[rr] & & \mathbb{C}
\end{tikzcd} \, .
\end{equation}
This amounts to establishing  the following two statements: 
\begin{itemize}
\item $\ker d = 0$. This means that $C^\infty_{pm}([t_i,a)) \rightarrow 0$ becomes an isomorphism on $\ker d$.
\item $\ker \Pi_0 = \text{im} d$. This implies that $\Pi_0: C^\infty_c([t_i,a)) \rightarrow \mathbb{C}$ becomes an isomorphism on the quotient $C^\infty_c([t_i,a))/\text{im} \, d$.
\end{itemize}
The first point follows immediately from the fact that $\ddot f + \omega^2 f = 0$ does not have any non-zero compactly supported solutions on $[t_i,a)$.\footnote{Note that the boundary condition on 
$C_{cp}^\infty([t_i,a))$ is irrelevant here.}

To prove the second statement, first note that $\text{im} \, d \subseteq \Pi_0$, since $\Pi_0$ is a chain map. It remains to show that $\ker \Pi_0 \subseteq \text{im} \, d$. For this, assume that $f \in \ker \Pi_0$. Recall from the appendix that if $f \in \ker \Pi_0$, i.e.~$f \perp e^{i\omega t}$, the propagator $h(f)$ vanishes for $t > \text{supp}(f)$. Furthermore, $\partial_- h(f)(t_i) = 0$ for any $f$. It follows that 
\begin{equation}
    h: \ker \Pi_0 \longrightarrow C^\infty_{cp}([t_i,a))
\end{equation}
is well defined. Using $d h(f) = f$, we see that $\ker \Pi_0^+ \subseteq \text{im} \, d$ and therefore $\ker \Pi_0 = \text{im} \, d$.

The map $\Pi_0$ lifts to a quasi-isomorphism 
\begin{equation}
\Pi_0: \text{Obs}^{cl}([t_i,a)) \longrightarrow \text{Sym}(\mathbb{C}) \, .
\end{equation}
We will next show that $e^{-C}: \text{Obs}^{q}([t_i,a)) \rightarrow \text{Obs}^{cl}([t_i,a))$ is an isomorphism, which then implies that $\Pi = \Pi_0 e^{-C}$ is a quasi-isomorphism
\begin{equation}
\Pi: \text{Obs}^{q}([t_i,a)) \longrightarrow \text{Sym}(\mathbb C) \, .
\end{equation}

As for the case of open intervals the fact that $e^{-C}: \text{Obs}^{q}([t_i,a)) \rightarrow \text{Obs}^{cl}([t_i,a))$ is an isomorphism amounts to showing that
\begin{equation}
    [d,C] = - \Delta
\end{equation}
on quadratic observables $f \wedge g$. The only non-trivial case is the one when $f \in C^\infty_c([t_i,a))$ and $g \in C^\infty_{cp}([t_i,a))$. In this case we have that
\begin{equation}
  [d,C](f \wedge g) = - \Delta (f \wedge h d g) \, .  
\end{equation}
Therefore, if $hd (g) = g$ we are done. Indeed, in the appendix we show that for $g \in C^\infty_{cp}([t_i,a))$ this is the case.
\end{proof}

\subsubsection*{Closed intervals and inner product }

We now turn to the closed interval $I=[t_i, t_f]$ with linear observables given by the complex
\begin{equation}\label{LinObsClosed}
\begin{tikzcd}
\text{Obs}^{\rm lin}([t_i,t_f]):\quad 
0 \arrow[r] & C_{cp}^\infty([t_i,t_f]) \arrow[r,"d"] & C_{c}^\infty([t_i,t_f]) \arrow[r] & 0 \, .
\end{tikzcd}
\end{equation}
The non-linear observables are then given by the symmetric algebra
\begin{equation}\label{LambdaBV}
    \text{Obs}^{q}([t_i,t_f]):= ({\rm Sym}(\text{Obs}^{\rm lin}(I)), 
    \delta_{\rm BV}) \; ,
\end{equation}
with differential $\delta_{\rm BV} = d + \Delta$. We write $\text{Obs}^{cl}([t_i,t_f])$ for the case $\Delta = 0$.

When $I=[t_i, t_f]$, the complex $\text{Obs}^{\rm lin}(I)$ has trivial cohomology, which we will now show.
\begin{prop}
The complex $\rm{Obs}^{\rm lin}([t_i,t_f])$ given in \eqref{LinObsClosed} has trivial cohomology.
\end{prop}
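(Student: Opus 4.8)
The plan is to read off the cohomology directly from the two-term complex \eqref{LinObsClosed}. Since $[t_i,t_f]$ is compact, every smooth function on it has compact support, so the degree-zero space $C^\infty_c([t_i,t_f])$ equals all of $C^\infty([t_i,t_f])$, while the degree-$(-1)$ space $C^\infty_{cp}([t_i,t_f])$ consists of those smooth functions obeying the boundary conditions $\partial_- f(t_i)=\partial_+ f(t_f)=0$. As the complex is concentrated in degrees $-1$ and $0$, triviality of the cohomology is equivalent to the single statement that the differential $d=\partial_t^2+\omega^2$ is a bijection $C^\infty_{cp}([t_i,t_f])\to C^\infty([t_i,t_f])$: injectivity gives $H_{-1}=\ker d=0$ and surjectivity gives $H_0=\operatorname{coker} d=0$. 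I would establish these two properties separately.

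For injectivity the key observation is how the operators \eqref{delpm} act on the homogeneous solutions. One has $\partial_-e^{i\omega t}=0$ and $\partial_+e^{-i\omega t}=0$ identically, whereas $\partial_-e^{-i\omega t}=-2i\omega\,e^{-i\omega t}$ and $\partial_+e^{i\omega t}=2i\omega\,e^{i\omega t}$ vanish nowhere. Any $f\in\ker d$ is of the form $f=c_1e^{i\omega t}+c_2e^{-i\omega t}$, so the left boundary condition $\partial_-f(t_i)=0$ sees only the $e^{-i\omega t}$ term and forces $c_2=0$, while the right condition $\partial_+f(t_f)=0$ sees only the $e^{i\omega t}$ term and forces $c_1=0$. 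Hence $f=0$ and $\ker d=0$. In the language of \eqref{sigmaPLUSM} this simply says that the two boundary conditions remove the two basis solutions $\sigma_\pm$, one at each endpoint.

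For surjectivity I would take any $g\in C^\infty([t_i,t_f])$ and first produce the particular solution $h(g)$ furnished by the propagator \eqref{eq:h}, which is smooth on the closed interval and satisfies $d(h(g))=g$ by \eqref{dhinvers}. Then I would correct it by a homogeneous solution, setting $f=h(g)+c_1e^{i\omega t}+c_2e^{-i\omega t}$ and fixing $c_1,c_2$ through the boundary conditions. By the same computation as above the two requirements decouple completely: $\partial_-f(t_i)=0$ involves only $c_2$ via the nonvanishing coefficient of $e^{-i\omega t}$, and $\partial_+f(t_f)=0$ involves only $c_1$ via the nonvanishing coefficient of $e^{i\omega t}$. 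Each condition therefore fixes one constant uniquely, the resulting $f$ is smooth and obeys both boundary conditions, so $f\in C^\infty_{cp}([t_i,t_f])$ with $df=g$, proving $d$ surjective.

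I expect the only real subtlety to be the worry that the two boundary conditions might not be simultaneously satisfiable by a mere two-parameter family of corrections; this is exactly what the decoupling observation $\partial_-e^{i\omega t}=\partial_+e^{-i\omega t}=0$ dissolves, reducing the matching to a diagonal, manifestly invertible system. The remaining point to keep honest is that $h(g)$ is genuinely smooth up to the endpoints, which I would cite from the appendix discussion of the propagator; everything else is routine.
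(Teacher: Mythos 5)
Your proof is correct, and its degree-$(-1)$ half coincides with the paper's: both reduce $\ker d$ to the two-dimensional solution space $c_1 e^{i\omega t} + c_2 e^{-i\omega t}$ and use $\partial_- e^{i\omega t} = 0$, $\partial_+ e^{-i\omega t} = 0$ from \eqref{delpm} to see that each boundary condition kills exactly one basis solution. The surjectivity half takes a genuinely different, if modest, route. The paper appeals to the sharper fact, proved in appendix \ref{App:Propagator}, that the propagator \eqref{eq:h} satisfies $\partial_- h(g)(t_i) = 0$ and $\partial_+ h(g)(t_f) = 0$ for \emph{every} source $g$, so $h$ already maps $C^\infty_c([t_i,t_f])$ into $C^\infty_{cp}([t_i,t_f])$ and $f = h(g)$ works with no correction at all; in your language, the constants $c_1, c_2$ would both evaluate to zero had you computed them. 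You instead treat $h(g)$ as merely \emph{some} particular solution and restore the boundary conditions by a homogeneous correction, with the same decoupling observation showing that the resulting $2\times 2$ matching system is diagonal with nonvanishing entries, hence uniquely solvable. Your variant buys robustness: it works verbatim for any right inverse of $\partial_t^2 + \omega^2$ that is smooth up to the endpoints, with no knowledge of its boundary behavior. The paper's version buys economy and reusability: the boundary properties of $h$ it invokes are exactly the ones needed again when showing $h: \ker \Pi_0 \rightarrow C^\infty_{cp}$ is well defined on half-open intervals and when verifying $[d,C] = -\Delta$ there, which is why they are isolated in the appendix. The one analytic input in your argument, smoothness of $h(g)$ up to the endpoints of the closed interval, you correctly flag and cite from the appendix, so there is no gap.
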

\begin{proof}
We first show that $\text{Obs}^{\rm lin}([t_i,t_f])$ has trivial cohomology in degree zero. We need to show that any $g \in C^\infty_c([t_i,t_f])$ can be written as $df$ with $f \in C^\infty_{cp}([t_i,t_f])$. In the appendix \ref{App:Propagator} we show that $h$ in \eqref{eq:h}  has boundary conditions so that it defines a map
\begin{equation}
h: C^\infty_c([t_i,t_f]) \rightarrow C^\infty_{cp}([t_i,t_f])
\end{equation}
and further satisfies $d h = 1$. Therefore, setting $f = h(g)$ shows that any $g \in C^\infty_{cp}(I)$ is of the form $g = d f$. Hence, $\text{Obs}^{\rm lin}([t_i,t_f])$ has trivial cohomology in degree $0$.

In degree $-1$ the cohomology is given by the functions $f \in C^\infty_{cp}(I)$ satisfying $\ddot f + \omega^2 f = 0$. Note that any smooth function on $I = [t_i,t_f]$ has compact support since $I$ itself is compact. The solutions to $\ddot f + \omega^2 f = 0$ for any $f \in C^\infty_c(I)$ are given by oscillators, which we can write as
\begin{equation}
f(t) = a e^{-i \omega t} + b e^{i\omega t}\;, 
\end{equation}
with $a,b \in \mathbb{C}$. The condition $\partial_- f(t_i) = 0$ on $C_{cp}^\infty(I)$ then forces $a = 0$, while the condition $\partial_+ f(t_f) = 0$ forces $b = 0$. Therefore, the equation $\ddot f + \omega^2 f = 0$ only has trivial solutions on $C^\infty_{cp}(I)$ and hence $\text{Obs}^{\rm lin}(I)$ has trivial cohomology in degree $-1$.
\end{proof}

The above proposition implies that the zero map
\begin{equation}
\Pi_0: \text{Obs}^{\rm lin}(I) \longrightarrow 0
\end{equation}
is a quasi-isomorphism. It therefore lifts to a quasi-isomorphism of symmetric algebras
\begin{equation}
\Pi_0: \text{Obs}^{cl}(I) \rightarrow \text{Sym}(0) \, ,
\end{equation}
where $\text{Sym}(0) = \mathbb{C}$ by definition. We have along the same lines as before:
\begin{prop}
The map 
\begin{equation}\label{Pi0box}
\Pi:= \Pi_0 e^{-C}:\;  {\rm Obs}^{q}([t_i,t_f]) \longrightarrow \mathbb{C}
\end{equation} 
is a quasi-isomorphism.
\end{prop}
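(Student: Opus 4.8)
The plan is to mirror the three-step strategy used in the two preceding propositions (for open and half-open intervals): factor $\Pi = \Pi_0 \circ e^{-C}$ and treat the two factors separately. First I would invoke the proposition immediately above: since ${\rm Obs}^{\rm lin}([t_i,t_f])$ has trivial cohomology, the zero map is a quasi-isomorphism on the linear complex, and it lifts to a quasi-isomorphism $\Pi_0: {\rm Obs}^{cl}([t_i,t_f]) \to {\rm Sym}(0) = \mathbb{C}$ of symmetric algebras, as already recorded in the text. It therefore remains only to show that $e^{-C}: {\rm Obs}^{q}([t_i,t_f]) \to {\rm Obs}^{cl}([t_i,t_f])$ is an isomorphism, after which $\Pi$ is a quasi-isomorphism as the composite of an isomorphism with a quasi-isomorphism.

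Second, to show $e^{-C}$ is an isomorphism I would argue exactly as in the open-interval case: the operator $C$ lowers the monomial degree by two, so on each (finite) polynomial $e^{-C}$ is a terminating sum and is invertible with inverse $e^{C}$; it thus suffices to check that $e^{-C}$ is a chain map, i.e.\ $d\,e^{-C} = e^{-C}\,\delta_{\rm BV}$. As established earlier, this follows from the single operator identity $[d,C] = -\Delta$ together with $[C,\Delta]=0$, since these give $[d,e^{-C}] = e^{-C}\Delta$ and hence $d\,e^{-C} = e^{-C}(d+\Delta)$. Because $[d,C]$ and $\Delta$ are both of order at most two, it suffices to verify $[d,C] = -\Delta$ on quadratic monomials. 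The only non-trivial case is a monomial $f \wedge g$ with $f \in C^\infty_c([t_i,t_f])$ in degree zero and $g \in C^\infty_{cp}([t_i,t_f])$ in degree $-1$; here $C(f\wedge g) = 0$ while $d(f\wedge g) = f \wedge dg$, so that $[d,C](f\wedge g) = -C(f \wedge dg) = -\Delta\big(f \wedge h(dg)\big)$. This reduces the whole claim to the identity $h(dg) = g$ for every $g \in C^\infty_{cp}([t_i,t_f])$.

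The main obstacle is precisely this identity $h(dg) = g$ on the closed interval, which is where the boundary conditions $\partial_- f(t_i) = \partial_+ f(t_f) = 0$ defining $C^\infty_{cp}$ become essential. Unlike on $\mathbb{R}$, the propagator $h$ of \eqref{eq:h} need not land in the compactly supported functions, and the left/right inversion relations \eqref{dhinvers} require care at the endpoints. The preceding proposition already supplies half of what is needed, namely that $h: C^\infty_c([t_i,t_f]) \to C^\infty_{cp}([t_i,t_f])$ with $dh = 1$; the complementary statement $hd = 1$ on $C^\infty_{cp}([t_i,t_f])$ is exactly the boundary-condition-sensitive fact deferred to appendix \ref{App:Propagator}. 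Granting it, the computation above yields $[d,C] = -\Delta$, hence $e^{-C}$ is an isomorphism and $\Pi = \Pi_0\,e^{-C}$ is a quasi-isomorphism, completing the proof.
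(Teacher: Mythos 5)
Your proposal is correct and is essentially the paper's own argument: the paper proves this proposition only by the remark ``along the same lines as before,'' meaning exactly your factorization --- $\Pi_0:{\rm Obs}^{cl}([t_i,t_f])\to{\rm Sym}(0)=\mathbb{C}$ is a quasi-isomorphism by the trivial-cohomology proposition, and $e^{-C}$ is an isomorphism of complexes via $[d,C]=-\Delta$ checked on the single non-trivial quadratic case, which reduces to $h(dg)=g$ on $C^\infty_{cp}([t_i,t_f])$. That last identity is precisely what the boundary-term computation \eqref{fullhboundary} in appendix \ref{App:Propagator} supplies, since the conditions $\partial_- g(t_i)=\partial_+ g(t_f)=0$ kill both boundary contributions, so your reconstruction fills in the deferred details faithfully.
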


Note that the map $\Pi_0: \text{Obs}^{cl}([t_i,t_f]) \rightarrow \mathbb{C}$ is non-zero only on constant observables. On the other hand, the precomposition with $e^{-C}$ makes $\Pi$ also non-zero on any non-linear observable.\footnote{More precisely, it is non-zero on elements in $\text{Obs}^{q}([t_i,t_f])$ with an even number of functions.} It yields a non-zero number after a polynomial in the functions is fully contracted by the operator $e^{-C}$. When introducing the factorization algebra associated to the $\text{Obs}^q(I)$, we will see that $\mathbb{C}$ is the space of expactation values and $\Pi$ computes these expectation values of observables via Wick contractions generated by $e^{-C}$. Even better, it will also compute the overlap of states.


\subsection{The  factorization algebras}

We now describe the complete factorization algebra $\mathfrak{F}_{\rm QM}$ of 
the off-shell quantum mechanics of the harmonic oscillator. It is defined on the topological space $I=[t_i, t_f]$, and we define chain complexes to its connected open sets as 
\begin{equation} 
\mathfrak{F}_{\rm QM}(I) = \text{Obs}^q(I) \, .
\end{equation}

In order to complete the definition of $\mathfrak{F}_{\rm QM}$ we have to 
give the structure maps of this factorization algebra. The inclusion map $m_{I}^{J}$ for intervals $I\subseteq J$ is well-defined: 
All complexes $\text{Obs}^q(I)$ are defined as subcomplexes of $\text{Obs}^q([t_i,t_f])$ and $\text{Obs}^q(I)$ is a subcomplex of $\text{Obs}^q(J)$ for $I \subseteq J$. Similarly, the structure map $m_{I_1 I_2}^{J}: \mathfrak{F}(I_1) \otimes \mathfrak{F}(I_2) \longrightarrow \mathfrak{F}(J)$, 
given  by 
\begin{equation}
        m_{I_1 I_2}^{J}(F_1\otimes F_2) := m_{I_1}^J(F_1) \wedge m_{I_2}^J(F_2)\,,
\end{equation}
are well-defined. 

\medskip

The above factorization algebra should now be compared (i.e.~shown to be quasi-isomorphic) 
to standard quantum mechanics of 
the harmonic oscillator, which we will also describe as a factorization algebra as follows. 
This factorization algebra, denoted  $\mathfrak{F}_{\rm H}$ (short for Heisenberg)  in the following, 
is also defined on the closed interval $I=[t_i, t_f]$. Denoting the 
state space of the harmonic oscillator, spanned by ket states $\ket{\psi} = (a^{\dagger})^n\ket{0}$, $n\in \mathbb{N}_0$, by ${\cal H}$ and its dual, 
spanned by bra states $\bra{\psi}=\bra{0}a^n$, $n\in \mathbb{N}_0$, by ${\cal H}^*$, 
the factorization algebra $\mathfrak{F}_{\rm H}$ assigns the following vector spaces 
to the open sets: 
\begin{equation}\label{FactHeisenberg} 
     \mathfrak{F}_{\rm H}(I)   = 
   \begin{cases}
    {\rm Weyl} & \text{for $I= (a,b)$}\\
    {\cal H} & \text{for $I= [t_i,a)$}\\
     {\cal H}^* & \text{for $I=(b, t_f]$}\\
     \mathbb{C}  &\text{for $I=[t_i, t_f]$}
  \end{cases} \;, 
\end{equation}
where in the first line we associate to open intervals 
the Weyl algebra ${\rm Sym}(\mathbb{C}^2)$ of polynomials in $a, a^\dagger$ 
with product $\mu$ described above.

In order to complete the definition of $\mathfrak{F}_{\rm H}$ we have to 
give the structure maps of this factorization algebra. The inclusion maps 
$m_I^J : \,\mathfrak{F}_{\rm H}(I)\rightarrow \mathfrak{F}_{\rm H}(J)$ 
for intervals $I\subseteq J$ is just the identity if $I$ and $J$ belong to the same 
of the four classes of open sets in (\ref{FactHeisenberg}). Denoting generic elements of ${\cal H}$ by $\ket{\psi}$
and generic (normal ordered) operators of the Weyl algebra by ${\cal O}$, the remaining  cases are defined as follows: 
 \begin{equation}\label{QMInclusions} 
 \begin{split}
  &m_{(a,b)}^{[t_i, b)} : \,\mathfrak{F}_{\rm H}((a,b))\rightarrow \mathfrak{F}_{\rm H}([t_i, b))\;, \qquad\;\;\,
   m_{(a,b)}^{[t_i, b)}({\cal O}) =  {\cal O}\ket{0} \in {\cal H}  \;, \\
   &m_{(a,b)}^{(a,t_f]} : \,\mathfrak{F}_{\rm H}((a,b) )\rightarrow \mathfrak{F}_{\rm H}((a,t_f])\;, \qquad
   m_{(a,b)}^{(a,t_f]}({\cal O}) = \bra{0}{\cal O}  \in {\cal H}^* \;,  \\
   &m_{(b, t_f]}^{[t_i,t_f]} : \,\mathfrak{F}_{\rm H}((b, t_f] )\rightarrow \mathfrak{F}_{\rm H}([t_i,t_f])\;, \qquad
   m_{(b, t_f]}^{[t_i,t_f]} (\bra{\psi} ) =\bra{\psi}0\rangle  \in \mathbb{C}   \;, \\
   &m_{[t_i, a)}^{[t_i,t_f]} : \,\mathfrak{F}_{\rm H}([t_i, a) )\rightarrow \mathfrak{F}_{\rm H}([t_i,t_f])\;, \qquad
   m_{[t_i, a)}^{[t_i,t_f]} (\ket{\psi} ) =\bra{0}\psi\rangle  \in \mathbb{C}   \;,  \\
   &m_{(a,b)}^{[t_i,t_f]} : \,\mathfrak{F}_{\rm H}((a,b) )\rightarrow \mathfrak{F}_{\rm H}([t_i,t_f])\;, \qquad
   m_{(a,b)}^{[t_i,t_f]}({\cal O}) = \bra{0}  {\cal O} \ket{0} \in \mathbb{C}  \;, 
 \end{split} 
 \end{equation} 
where $\ket{0}$ denotes the ground state of the harmonic oscillator. 
 These definitions are natural and such that 
for $I\subseteq I' \subseteq J$ we have 
 \begin{equation}\label{HeisenbergInclCompatible}
  m_{I'}^{J} \circ m_I^{I'} = m_I^{J} \,: \; \mathfrak{F}_{\rm H}(I) \rightarrow \mathfrak{F}_{\rm H}(J)\;. 
 \end{equation} 
For instance, for $I=(a,b)$, $I'=[t_i, b)$ and $J=[t_i,t_f]$ the left-hand side yields 
 \begin{equation}
  m_{[t_i, b)}^{[t_i,t_f]}\Big(m_{(a,b)}^{[t_i,b)} ({\cal O})\Big)
   =  m_{[t_i, b)}^{[t_i,t_f]}\big(\,{\cal O}\ket{0}  \, \big) = 
   \bra{0} {\cal O}\ket{0} = m_{(a,b)}^{[t_i,t_f]}({\cal O}) \;, 
 \end{equation} 
 which in turn fixes the map in the last line of (\ref{QMInclusions}). 
 The other cases follow similarly.

In \eqref{QMInclusions}, we gave the inclusion maps in the form most well known to physicists. Since the class of operators we consider is always polynomial in $a$ and $a^\dagger$, we can make the indentification ${\rm Weyl} \cong \mathbb{C}[a,a^\dagger]$. Similarly, we can identify $\mathcal{H} \cong \mathbb{C}[a^\dagger]$, $\mathcal{H}^* \cong \mathbb{C}[a]$. The inclusions are then given by
 \begin{equation}
 \begin{aligned}
 \mathfrak F_{\rm H}((a,b)) &\longrightarrow \mathfrak F_{\rm H}((c,d)) \; , \quad &\mathfrak F_{\rm H}((a,b)) &\longrightarrow \mathfrak F_{\rm H}([t_i,t_f]) \; , \\
 f(a,a^\dagger) &\longmapsto f(a,a^\dagger) \; , \quad &f(a,a^\dagger) &\longmapsto f(0,0) \; , \\
 \mathfrak F_{\rm H}((a,b)) &\longrightarrow \mathfrak F_{\rm H}([t_i,b)) \; , \quad &\mathfrak F_{\rm H}((a,b)) &\longrightarrow \mathfrak F_{\rm H}((a,t_f]) \; ,  \\
 f(a,a^\dagger) &\longmapsto f(0,a^\dagger) \; , \quad &f(a,a^\dagger) &\longmapsto f(a,0) \; , \\
 \mathfrak F_{\rm H}([t_i,b)) &\longrightarrow \mathfrak F_{\rm H}([t_i,t_f]) \; , \quad &\mathfrak F_{\rm H}((a,t_f]) &\longrightarrow \mathfrak F_{\rm H}([t_i,t_f]) \; ,  \\
 f(a^\dagger) &\longmapsto f(0) \; , \quad &f(a) &\longmapsto f(0) \;.
\end{aligned}
\end{equation}
In this presentation it is immediately obvious that \eqref{HeisenbergInclCompatible} is satisfied. On the other hand, the definitions in \eqref{QMInclusions} apply for any algebra of operators and states, which we further elaborate at the end of this section.

 We next have to define the structure maps $m_{I_1I_2}^{J}:\mathfrak{F}_{\rm H}(I_1) \otimes \mathfrak{F}_{\rm H}(I_2) 
 \rightarrow \mathfrak{F}_{\rm H}(J)$ for disjoint intervals $I_1, I_2$ inside $J$. If all three intervals are open this is the structure map of $\mathfrak{F}_{\rm Weyl}$ defined in the previous section. 
 The remaining cases are defined as follows: 
  \begin{equation}\label{m2strucutremaps} 
  \begin{split}
   m_{[t_i,a), (b,c)}^{[t_i, c)}\big(\ket{\psi},{\cal O}\big) &= {\cal O}\ket{\psi} \in {\cal H}\;, \\
   m_{(a,b),(c,t_f]}^{(a,t_f]}\big({\cal O},\bra{\psi}\big) &= \bra{\psi}{\cal O} \in {\cal H}^* \;, \\
   m_{[t_i, a), (b,t_f]}^{[t_i, t_f]}\big( \ket{\psi_2}, \bra{\psi_1}\big) &= \langle\psi_1\ket{\psi_2} \in \mathbb{C}\;, \\ 
   m_{(a, b), (c,t_f]}^{[t_i, t_f]}\big( {\cal O}, \bra{\psi} \big) &= \bra{0} {\cal O} \ket{\psi} \in \mathbb{C}\;, \\ 
   m_{(a, c), (c',b)}^{[t_i, t_f]}\big( {\cal O}_1, {\cal O}_2 \big) &= \bra{0} {\cal O}_1{\cal O}_2 \ket{0} \in \mathbb{C}\;, \\
   m_{(a,c), (c', b)}^{(a,t_f]} ({\cal O}_1, {\cal O}_2) &= \bra{0}  {\cal O}_1 {\cal O}_2 
   \;  \in\, {\cal H}^*\;, \\
   m_{[t_i, c), (c', b)}^{[t_i, t_f]} \big(\ket{\psi}, {\cal O}\big) &=\bra{0} {\cal O}\ket{\psi} \ \in  \ 
   \mathbb{C}\;. 
  \end{split}
  \end{equation} 
 These formulas 
 are compatible with the above consistency conditions such as:  
 for $I_1, I_2 \subseteq I' \subseteq J$ we have 
  \begin{equation}
    m_{I'}^{J} \circ m_{I_1 I_2}^{I'} = m_{I_1 I_2}^{J}\;. 
  \end{equation}

  \subsubsection*{A quasi-isomorphism to standard quantum mechanics }

  Having thus completed the definition of the factorization algebra $\mathfrak{F}_{\rm H}$ of the quantum 
  mechanics of the harmonic oscillator we can now check that it is quasi-isomorphic to $\mathfrak{F}_{\rm QM}$. 
  The projection map from $\mathfrak{F}_{\rm QM}$ to $\mathfrak{F}_{\rm H}$ is given by $\Pi: \text{Obs}^q(I) \rightarrow \text{Sym}(V)$, where $\Pi$ and $V = \mathbb{C}^2, \mathbb{C}, 0$ depending on the type of interval $I$. For convenience, we recall that $\Pi$ is defined on a function $f$ on the interval $I$ via the projection $\Pi_0$ on $\text{Obs}^{\rm lin}(I)$ as 
   \begin{equation}\label{pi0intervals} 
     \Pi_0(f)    = 
   \begin{cases}
    f_- a + f_+ a^\dagger & \text{for $I= (a,b)$}\\
   f_+ a^\dagger\ket{0}  & \text{for $I= [t_i,a)$}\\
     f_- \bra{0} a  & \text{for $I=(b, t_f]$}\\
     0 &\text{for $I=[t_i, t_f]$}
  \end{cases} \;, 
\end{equation}
where $f_{\pm} = \langle f, \sigma_{\mp}\rangle$, and $\Pi_0 = 0$ on degree $-1$ functions. The extension to the symmetric algebra of polynomials in $f$ acts as a morphism, c.f.~(\ref{Piviamorph}). For instance, for  degree zero functions $f_1, \ldots,  f_n\in C_c^{\infty}([t_i,a))$ we set
 \begin{equation}
  \Pi_0(f_1\wedge \ldots \wedge f_n) = f_{1+}\cdots f_{n+} (a^{\dagger})^n \in \mathbb{C}[a^\dagger] \;, 
 \end{equation}
which we identify with the state $f_{1+}\cdots f_{n+} (a^{\dagger})^n \ket 0 \in \cal{H}$. The action $\Pi$ on the full BV algebra based $\mathfrak{F}_{\rm QM}$ is then given by $\Pi=\Pi_0 e^{-C}$, 
where $C$ denotes the Wick contraction operator in (\ref{improvedPI}).

In order to show that $\Pi$ defines a quasi-isomorphism we have to verify  that 
 \begin{equation}
  \Pi\circ m_I^J = m_I^J \circ \Pi\;, 
 \end{equation} 
where  $m_I^J$ on the left-hand side  refers to $\mathfrak{F}_{\rm QM}$ and $m_I^J$ on the right-hand side  
refers to $\mathfrak{F}_{\rm H}$. This condition is straightforwardly verified. 
Moreover, we have to show that the following diagram commutes:  
\begin{equation}\label{QMHCommut} 
    \xymatrix{
        \mathfrak{F}_{\rm QM}(I_1) \otimes \mathfrak{F}_{\rm QM}(I_2) \ar[r]^-{m} \ar[d]^\Pi & \mathfrak{F}_{\rm QM}(J) \ar[d]^\Pi\\
        \mathfrak{F}_{\rm H}(I_1) \otimes \mathfrak{F}_{\rm H}(I_2) \ar[r]^-{m} & \mathfrak{F}_{\rm H}(J)
    } 
\end{equation}  
In the case that all intervals  are open this has been verified in the proof of theorem \ref{thm:reals}.

For the other cases, consider for instance $I_1=[t_i, a)$, $I_2=(b,t_f]$ and $J=[t_i, t_f]$ and, correspondingly,  functions (linear monomials) 
$f_1\in \text{Obs}^{\rm lin}([t_i, a))$, $f_2\in \text{Obs}^{\rm lin}((b,t_f])$, 
we have 
 \begin{equation}\label{moerphismsmmmsmss} 
  (\Pi\circ m)(f_1\otimes f_2) = \Pi(f_1\wedge f_2) = \Pi_0e^{-C}(f_1\wedge f_2) = -C(f_1\wedge f_2) 
  = f_{1+} f_{2-}\;, 
 \end{equation}  
where we used that in this case $\Pi_0$ is non-zero only on numbers. 
Moreover, we used (\ref{CCOMPUUU}) in the last step. 
The above diagram commuting requires that this equals $m\circ \Pi$, 
which follows by a direct computation: 
 \begin{equation}
 \begin{split} 
  (m\circ \Pi)(f_1\otimes f_2) &=m(\Pi(f_1)\otimes \Pi(f_2)) =m(\pi_0(f_1)\otimes \pi_0(f_2)) \\
  &= m\big(f_{1+} a^{\dagger}\ket{0}, f_{2-}\bra{0}a\big)
  = f_{1+}f_{2-} \bra{0} a a^{\dagger}\ket{0}  = f_{1+}f_{2-} \;, 
 \end{split}  
 \end{equation} 
where we used (\ref{pi0intervals}) and the third of the structure maps in (\ref{m2strucutremaps}). 

For a slightly more subtle example consider $I_1=[t_i, a)$, $I_2=(b,c)$ and $J=[t_i, c)$, 
with $f_1\in {\rm Obs}^{\rm lin}(I_1)$ and $f_2\in {\rm Obs}^{\rm lin}(I_2)$. One then computes 
\begin{equation}\label{LeftHANdSIDEEE} 
\begin{split} 
 (\Pi\circ m)(f_1\otimes f_2) &= \Pi(f_1\wedge f_2) = \Pi_0 e^{-C} (f_1\wedge f_2) \\
 &= \Pi_0(f_1\wedge f_2) -C(f_1\wedge f_2)\ket{0}  \\
 &= f_{1+}f_{2+} (a^{\dagger})^2\ket{0} + f_{1+}  f_{2-}\ket{0} \;, 
\end{split} 
\end{equation}  
where we used 
$f_1\wedge f_2\in \mathfrak{F}_{\rm QM}([t_i,c))$ 
and 
(\ref{CCOMPUUU}). On the other hand, 
 \begin{equation}
 \begin{split} 
  (m\circ \Pi)(f_1\otimes f_2) &= m(\pi_0(f_1)\otimes  \pi_0(f_2)) 
  = m\big(f_{1+}a^{\dagger} \ket{0}\,,\; f_{2-} a + f_{2+} a^{\dagger}\big)  \\
  &= (f_{2-} a + f_{2+} a^{\dagger}) f_{1+} a^{\dagger} \ket{0} \\
  &= f_{2-} f_{1+} \ket{0} + f_{1+} f_{2+} (a^{\dagger})^2 \ket{0} \;, 
 \end{split}  
 \end{equation} 
where we used in the second line the first of the structure maps (\ref{m2strucutremaps}). 
This agrees with (\ref{LeftHANdSIDEEE}), hence proving for this special case 
that (\ref{QMHCommut}) commutes. The general proof works in a very similar way as the proof for $\mathfrak{F}_{\rm BV}$ on $\mathbb{R}$: 
\begin{thm}\label{thm:interval}
The map $\Pi = \Pi_0 \circ e^{-C}: \mathfrak{F}_{\rm BV}(I) \rightarrow \mathfrak{F}_{\rm H}(I)$, which is defined for any connected open set $I \subseteq [t_i,t_f]$, defines a morphism of factorization algebras 
\begin{equation}
    \Pi: \mathfrak{F}_{\rm BV} \longrightarrow \mathfrak{F}_{\rm H} \; .
\end{equation}
\end{thm}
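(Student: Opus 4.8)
The plan is to follow the strategy of the proof of Theorem~\ref{thm:reals}, reducing the statement to a family of local Wick-contraction computations, one for each pair of interval types appearing in the structure maps \eqref{m2strucutremaps}. By Definition~\ref{def:FactAlgMor}, a morphism requires that each $\Pi_I$ be a chain map---which the preceding propositions have already established for open, half-open and closed $I$---and that the inclusion diagram and the product diagram \eqref{QMHCommut} both commute. The inclusion diagram is immediate, since on the off-shell side $m_I^J$ is the tautological inclusion of subcomplexes and $\Pi$ is defined compatibly on all nested intervals; thus the entire content lies in the product diagram \eqref{QMHCommut}.

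First I would transport the off-shell structure through the isomorphism $e^{-C}$. The propositions of this section show that $[d,C]=-\Delta$ holds on quadratic observables $f\wedge g$ for every interval type, including the half-open and closed cases, so $e^{-C}:\mathfrak{F}_{\rm BV}(I)\to\mathfrak{F}_{\rm cBV}(I)$ is an isomorphism of complexes in each case, where $\mathfrak{F}_{\rm cBV}(I):=\text{Obs}^{cl}(I)$ carries the transported factorization structure. Since $\Pi=\Pi_0\circ e^{-C}$, it then suffices to prove that $\Pi_0$ commutes with the transported product $\bar m_{I_1,I_2}^J$. Exactly as before, this product simplifies to $\bar m_{I_1,I_2}^J=\wedge\circ e^{-C_2}\circ(m_{I_1}^J\otimes m_{I_2}^J)$, with $C_2$ given by \eqref{C2action} contracting one function of the left factor against one function of the right factor through $C(f,g)$; the decisive contraction values are those recorded in \eqref{Caadagger}, namely $C(f_a\wedge f_{a^\dagger})=0$ when the $a$-interval is the earlier one and $-1$ when it is the later one, together with the vanishing of all like-type contractions.

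Next I would run the quasi-inverse argument as in the last proposition of Section~3. For each interval type I define $\iota_I$ by splicing a Fock/Weyl generator back into an admissible representative: for $[t_i,a)$ one sends a polynomial in $a^\dagger$ to the corresponding wedge word in $f_{a^\dagger}$, for $(b,t_f]$ a polynomial in $a$ to a word in $f_a$, and for open intervals as in Section~3. One then checks, for each admissible pair $(I_1,I_2)$ in \eqref{m2strucutremaps}, that on these spliced representatives $C_2$ acts precisely as the contraction operator prescribed by the corresponding Heisenberg structure map---$-\partial_a\otimes\partial_{a^\dagger}$ implementing the composition $\mu$, or the analogous half-contraction implementing $\mathcal{O}\ket{\psi}$, $\bra{\psi}\mathcal{O}$ and $\braket{\psi_1}{\psi_2}$---so that applying $\Pi_0\circ\wedge$ matches the analogue of diagram \eqref{mmufrommbar}; the cases already computed in \eqref{moerphismsmmmsmss} and \eqref{LeftHANdSIDEEE} are the representative instances of this matching. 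Finally, since each $\iota_I$ is quasi-inverse to $\Pi_0$, every degree-zero representative can be written as $\iota_I\Pi_0(\cdot)+d(\cdot)$, and because $\bar m_{I_1,I_2}^J$ is a chain map with $\Pi_0\circ d=0$, the exact remainder is annihilated; this promotes the identity from spliced representatives to all of $\mathfrak{F}_{\rm cBV}(I_1)\otimes\mathfrak{F}_{\rm cBV}(I_2)$, exactly as in Section~3.

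The main obstacle is bookkeeping rather than conceptual: one must carry out the $C_2$-computation separately for each of the seven structure maps in \eqref{m2strucutremaps}, and in particular confirm that the vacuum saturations built into $\Pi_0$ (the $\ket{0}$ and $\bra{0}$ of \eqref{pi0intervals}) combine with the cross-contraction values of \eqref{Caadagger} to reproduce the normal-ordered matrix elements, such as $\bra{0}\mathcal{O}_1\mathcal{O}_2\ket{0}$ for the closed interval, where $\Pi_0$ is nonzero only on the fully contracted constant term. The essential input throughout is the asymmetry of \eqref{Caadagger} under the ordering of intervals, which is precisely what converts the graded-symmetric wedge product on the off-shell side into the non-commutative, time-ordered operator algebra on the Heisenberg side. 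Since each $\Pi_I$ is moreover a quasi-isomorphism by the earlier propositions, establishing the product diagrams upgrades the morphism to a quasi-isomorphism of factorization algebras.
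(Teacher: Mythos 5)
Your proposal is correct and takes essentially the same route as the paper: the paper's own proof is just a sketch that reduces Theorem~\ref{thm:interval} to the argument of Theorem~\ref{thm:reals} after observing that the Heisenberg structure maps $\rho_L$, $\rho_R$ and the pairing $s$ can be written as $\wedge|_{a=0}$, $\wedge|_{a^\dagger=0}$ and $\wedge|_{a=a^\dagger=0}$ composed with $e^{\partial_a\otimes\partial_{a^\dagger}}$, which is exactly your ``half-contraction plus vacuum saturation'' description of how $C_2$ and the $\ket{0}$, $\bra{0}$ saturations in $\Pi_0$ reproduce the structure maps of \eqref{m2strucutremaps}. The $e^{-C}$-transport, the simplification $\bar m_{I_1,I_2}^J=\wedge\circ e^{-C_2}\circ(m_{I_1}^J\otimes m_{I_2}^J)$, and the $\iota$-splicing with $d$-exact remainders that you spell out are precisely the ingredients the paper imports wholesale from the proof of Theorem~\ref{thm:reals}.
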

\begin{proof}
We will not give the general proof, since it is very similar to the case on $\mathbb{R}$ after making the following observation. The map
\begin{equation}
    \rho_L: \text{Weyl} \otimes \mathcal{H} \longrightarrow \mathcal{H}
\end{equation}
defining the action of a normal ordered operator $\mathcal{O} \in \text{Weyl}$ on a state $\ket \psi \in \mathcal{H}$ can be written as
\begin{equation}
    \rho_L(f(a,a^\dagger), g(a^\dagger)) = \wedge|_{a = 0} \circ e^{\partial_a \otimes \partial_{a^\dagger}} (f(a,a^\dagger) \otimes g(a^\dagger)) \; ,
\end{equation}
where we identify $\mathcal{O} = f(a,a^\dagger)$ and $\ket \psi = g(a^\dagger)$ with polynomials and $\wedge|_{a = 0}$ means multiplication of polynomials and setting $a = 0$ in the end. Observe that the map $\rho_L$ therefore takes a very similar form as $\mu$ in $\eqref{Generalmu}$. Along the same lines, the map
\begin{equation}
    \rho_R: \mathcal{H}^* \otimes \text{Weyl} \longrightarrow \mathcal{H}^*
\end{equation}
that multiplies an operator on a bra-state takes the form
\begin{equation}
    \rho_R(g(a), f(a,a^\dagger)) = \wedge|_{a^\dagger = 0} \circ e^{\partial_a \otimes \partial_{a^\dagger}} (g(a) \otimes f(a,a^\dagger)) \; .
\end{equation}
Finally, the inner product
\begin{equation}
    s: \mathcal{H}^* \otimes \mathcal{H} \longrightarrow \mathbb{C}
\end{equation}
between states can be written as
\begin{equation}
s(f(a),g(a^\dagger)) = \wedge|_{a^\dagger = a = 0} \circ e^{\partial_a \otimes \partial_{a^\dagger}} (f(a) \otimes g(a^\dagger)) \; .
\end{equation}
With these definitions of $\rho_L,\rho_R,s$ together with the operator product $\mu$, the proof follows along the same lines as the proof of theorem \ref{thm:reals}.
\end{proof}

\subsection{Operators and  expectation values}

In this subsection we aim to illustrate the above result by giving examples of how to relate to familiar computations in ordinary quantum mechanics.

\subsubsection*{Computing expectation values}

We will now illustrate how the above quasi-isomorphism allows one to express  the computation 
of quantum expectation values at the level of the off-shell quantum mechanics encoded 
in the factorization algebra $\mathfrak{F}_{\rm QM}$. Consider the following bra and ket states 
of the harmonic oscillator: 
 \begin{equation}
  \ket{\psi_1}:= (a^{\dagger})^2\ket{0}\;, \qquad \bra{\psi_2}:= \bra{0}a^2\;. 
 \end{equation} 
Our goal is to express the computation of the overlap $\bra{\psi_2}\psi_1\rangle$ in terms 
of the structure maps of $\mathfrak{F}_{\rm QM}$.\footnote{Of course, $\bra{\psi_2}$ is just the dual  
state corresponding to $\ket{\psi_1}$, so one could skip the labels $1, 2$, but we find it notationally 
advantageous for the following discussion to keep them.}
Since bra and ket states are associated to half-open 
intervals, which here we take to be $I_1=[t_i,a) < I_2 = (b, t_f]$,  we consider the functionals 
 \begin{equation}\label{functrepaa} 
 \begin{split} 
  F^1 &:= f_{a^{\dagger}}^1 \wedge f_{a^{\dagger}}^1  \ \in \ \mathfrak{F}_{\rm QM}([t_i,a)) \,, \\
   F^2 &:= f_{a}^2 \wedge f_{a}^2 \;\,\,  \ \in \ \mathfrak{F}_{\rm QM}((b,t_f]) \;, 
 \end{split} 
 \end{equation} 
where $f_a$ and $f_{a^{\dagger}}$ are defined as above, so that they obey the integral conditions
 \begin{equation}\label{ffdaggerintegrals} 
  f_{a,-}=f_{a^{\dagger},+} =1\,, \qquad f_{a,+}=f_{a^{\dagger},-}= 0\,,
 \end{equation}  
and the superscripts $1,2$ indicate that the functions are compactly supported on the intervals 
$I_1$ and $I_2$, respectively. One can always find such functions, but they are far from 
unique. Here we simply assume again that a specific choice obeying the above conditions has been made. 
Despite this  ambiguity, these functionals are projected to the desired  states $\ket{\psi_1}$ and $\bra{\psi_2}$, 
as follows with a quick computation: 
 \begin{equation}\label{PiofF1} 
 \begin{split} 
  \Pi(F_1) &= \Pi_0\, e^{-C} (f_{a^{\dagger}}^1 \wedge f_{a^{\dagger}}^1) = 
  \pi_0(f_{a^{\dagger}}^1) \wedge \pi_0(f_{a^{\dagger}}^1) \\
  &= (f_{a^{\dagger},+}^1)^2(a^{\dagger})^2\ket{0}
  =(a^{\dagger})^2\ket{0} = \ket{\psi_1} \;. 
 \end{split}  
 \end{equation}  
 Here we used that by (\ref{CCOMPUUU}) and (\ref{ffdaggerintegrals}) the operator $e^{-C}$ acts trivially 
 on $F_1$, and we used the explicit form (\ref{pi0intervals}) of the projector for $I_1=[t_i,a)$, 
 together with (\ref{ffdaggerintegrals}) again. Similarly, using the form of the projector for the interval 
 $ I_2 = (b, t_f]$ one obtains 
  \begin{equation}\label{PiofF2} 
  \begin{split} 
   \Pi(F_2) &=   \Pi_0\, e^{-C}( f_{a}^2 \wedge f_{a}^2) 
 =   \pi_0(f_{a}^2) \wedge \pi_0(f_{a}^2) \\
    &=(f_{a,-}^2)^2 \bra{0} a^2 = \bra{0} a^2
    =\bra{\psi_2} \;.  
  \end{split}   
  \end{equation}

Next we compute the overlap $\bra{\psi_2}\psi_1\rangle$. In fact, we compute it twice,  by evaluating both sides of the quasi-isomorphism condition 
 \begin{equation}\label{morhoismDCVsdpfsanv} 
  \big(\Pi \circ m_{I_1I_2}^{[t_i, t_f]}\big)(F_1\otimes F_2)  = \big( m_{I_1I_2}^{[t_i, t_f]} \circ \Pi\big) (F_1\otimes F_2)\;.
 \end{equation} 
On the right-hand side we have 
 \begin{equation}\label{RHSslvnflvbel} 
 \begin{split}
  \big( m_{I_1I_2}^{[t_i, t_f]} \circ \Pi\big) (F_1\otimes F_2) &= m_{I_1I_2}^{[t_i, t_f]}\big(\Pi(F_1)\otimes \Pi(F_2)\big) \\
  &= m_{I_1I_2}^{[t_i, t_f]}\big(\ket{\psi_1}\,, \,  \bra{\psi_2} \big)\\
  &= \bra{\psi_2}\psi_1\rangle 
   \;, 
 \end{split}   
 \end{equation}
where we used (\ref{PiofF1}) and (\ref{PiofF2}), and we recalled 
that this structure map of $\mathfrak{F}_{\rm H}$, encoded in the third line of (\ref{m2strucutremaps}), 
is just the pairing of a bra and a ket state. 
This computation shows  what the right-hand side of (\ref{morhoismDCVsdpfsanv}) does: it projects 
the functionals $F_1$ and $F_2$ down to the familiar quantum mechanical states $\ket{\psi_1}$ and $\bra{\psi_2}$
 in $\mathfrak{F}_{\rm H}$, after which, by definition, the structure map just yields  the standard overlap. 
 Thus, 
 \begin{equation}\label{standardoverlappppp} 
  \big( m_{I_1I_2}^{[t_i, t_f]} \circ \Pi\big) (F_1\otimes F_2) = 
   \bra{0} a^2(a^{\dagger})^2\ket{0} =  \bra{0} a[a,(a^{\dagger})^2]\ket{0} 
   = 2  \bra{0} a a^{\dagger}\ket{0} = 2 \;. 
 \end{equation}

 Let us now contrast this with the left-hand side of (\ref{morhoismDCVsdpfsanv}), 
 where the structure map $m_{I_1I_2}^{[t_i, t_f]}$ is just the wedge product: 
  \begin{equation}
  \begin{split} 
     \big(\Pi \circ m_{I_1I_2}^{[t_i, t_f]}\big)(F_1\otimes F_2)  &= \Pi(F_1\wedge F_2) 
     =\Pi_0\,e^{-C}\big( f_{a^{\dagger}}^1 \wedge f_{a^{\dagger}}^1 \wedge 
      f_{a}^2 \wedge f_{a}^2  \big)\\
      &= \Pi_0\big(1-C+\tfrac{1}{2}C^2+\cdots\big) \big( f_{a^{\dagger}}^1 \wedge f_{a^{\dagger}}^1 \wedge 
      f_{a}^2 \wedge f_{a}^2  \big)\\
      &= \tfrac{1}{2}C^2 \big( f_{a^{\dagger}}^1 \wedge f_{a^{\dagger}}^1 \wedge f_{a}^2 \wedge f_{a}^2  \big)\\
      &= \tfrac{1}{2}C \big(4\, C\big(f_{a^{\dagger}}^1\wedge f_a^2\big)\wedge  f_{a^{\dagger}}^1\wedge  f_{a}^2  \big)\\
      &= 2\,  \big[C\big(f_{a^{\dagger}}^1\wedge f_a^2\big)\big]^2\\
      &=2\;, 
  \end{split} 
  \end{equation} 
 giving the same result as (\ref{standardoverlappppp}). 
 Here we used that $F_1\wedge F_2\in \text{Obs}^q([t_i,t_f])$ for which $\Pi_0$ is non-trivial only on numbers, 
 so that only the $C^2$ term contributes. Moreover, we used 
 $C\big(f_{a^{\dagger}}^1\wedge f_a^2\big)=-f^1_{a^{\dagger},+} f^2_{a,-}=-1$ following from 
(\ref{CCOMPUUU}) together with (\ref{ffdaggerintegrals}).\footnote{Note that even though the functions 
$f_{a^{\dagger}}^1$ and $ f_a^2$ have support on disjoint intervals, $C$ acting on their wedge product 
is still non-zero, as is evident from the computation in (\ref{CCOMPUUU}).}

 \subsubsection*{Operators and their action on states}

After discussing ket and bra states and their overlap in terms of factorization algebras we next 
illustrate how to represent familiar operators and their action on states in such terms.  
We begin with the number operator of the harmonic oscillator, 
 \begin{equation}\label{numberop} 
  \Hat{N} = a^\dagger a\,, 
 \end{equation}  
and ask which objects of the off-shell quantum mechanics encoded in the factorization algebra 
$\mathfrak{F}_{\rm QM}$ project to this operator. To this end let $f_{a}^1$ have the properties above and be compactly 
supported on $I_1=(a,b)$ and similarly let $f_{a^{\dagger}}^2$ be compactly supported on $I_2=(c,d)$ 
with $I_1<I_2$. We then claim that the functional 
 \begin{equation}\label{numberopFunctional} 
  F_N := f_a^1\wedge f_{a^{\dagger}}^2 \ \in \ \mathfrak{F}_{\rm QM}((a,d))\;, 
 \end{equation} 
 projects to the number operator (\ref{numberop}). This follows as above by a direct computation: 
  \begin{equation}
  \begin{split}
   \Pi(F_N) &= \Pi_0\, e^{-C} \big(f_a^1\wedge f_{a^{\dagger}}^2\big)\\
   &= \pi_0(f_a^1)\wedge \pi_0(f_{a^{\dagger}}^2) - C\big(f_a^1\wedge f_{a^{\dagger}}^2\big) \\
   &= a\wedge a^{\dagger}= a^{\dagger}\wedge a \;,  
  \end{split} 
  \end{equation} 
 where we used $C(f_a^1\wedge f_{a^{\dagger}}^2)=-f_{a,+}^{1} f_{a^{\dagger},-}^2=0$ by 
(\ref{ffdaggerintegrals}). Since a word in $a, a^{\dagger}$ in the symmetric algebra is 
identified with the corresponding \textit{normal ordered} quantum-mechanical operator, 
the above establishes that (\ref{numberopFunctional}) is projected to the number 
operator (\ref{numberop}).

Finally, let us illustrate how the action of an operator on a ket state is encoded in the 
off-shell factorization algebra. Concretely, we want to re-derive that the number operator 
acts as expected: 
 \begin{equation}
  \ket{n} := (a^{\dagger})^n\ket{0} \qquad \Rightarrow \qquad \Hat{N}\ket{n} = n\ket{n}\;. 
 \end{equation} 
As in (\ref{functrepaa}) we represent $\ket{n}$ by the functional 
 \begin{equation}\label{nstatefunctional} 
  F_{\ket{n}} \ := \ \underbrace{f_{a^{\dagger}}^1\wedge \cdots \wedge f_{a^{\dagger}}^1}_{\text{$n$ times}}
   \ \in \ \mathfrak{F}_{\rm QM}([t_i, a)) \;, 
 \end{equation} 
for which one may verify as above: $\Pi(F_{\ket{n}})=\ket{n}\in {\cal H}$.  
The action of $\Hat{N}$ on $ \ket{n}$ is now represented by the functional that is the wedge 
product of (\ref{numberopFunctional}) with (\ref{nstatefunctional}). More precisely, renaming the 
intervals as $I_1=[t_i,a)$, $I_2=(a,b)$ and $I_3=(c,d)$, where $I_1<I_2<I_3$, we write 
 (\ref{numberopFunctional}) as $f_a^2\wedge f_{a^{\dagger}}^3$ and build 
 the functional 
  \begin{equation}
   F_{\hat{N}\ket{n}} := F_{\ket{n}}\wedge f_a^2\wedge f_{a^{\dagger}}^3 \ \in \ \mathfrak{F}_{\rm QM}([t_i, d))\;. 
  \end{equation} 
 We then compute 
  \begin{equation}
  \begin{split} 
   \Pi(F_{\hat{N}\ket{n}}) &= \Pi_0\,e^{-C} \Big(
   \underbrace{f_{a^{\dagger}}^1\wedge \cdots \wedge f_{a^{\dagger}}^1}_{\text{$n$ times}} 
   \ \wedge \ f_a^2\wedge f_{a^{\dagger}}^3\Big) \\
   &= -n\,\Pi_0\Big(C(f_{a^{\dagger}}^1\wedge f_a^2) \, 
    \underbrace{f_{a^{\dagger}}^1\wedge \cdots \wedge f_{a^{\dagger}}^1}_{\text{$n-1$ times}} \,
    \wedge\,  f_{a^{\dagger}}^3\Big) \\
    &= n  \Big(  \underbrace{\pi_0(f_{a^{\dagger}}^1)\wedge \cdots \wedge \pi_0(f_{a^{\dagger}}^1)}_{\text{$n-1$ times}}
    \,\wedge\,  \pi_0(f_{a^{\dagger}}^3) \Big)\\
    &= n(a^{\dagger})^{n}\ket{0}\;. 
  \end{split} 
  \end{equation} 
Here we used that only the term linear in $C$ is non-zero, which follows since  $\pi_0(f_{a}^2)=0$ and 
$C(f_{a}^2\wedge f_{a^{\dagger}}^3)=0$, and so there are only the $n$ Wick contractions displayed above.

More generally, it should now be clear that familiar quantum-mechanical operations like operator 
products, operators acting on kets or bras and overlaps, can be expressed at the level of the 
off-shell quantum mechanics encoded in the factorization algebra $\mathfrak{F}_{\rm H}$.

\section{Spin-\(\frac{1}{2}\) System}

\subsection{Off-shell spin-\(\frac{1}{2}\) quantum mechanics }

We begin by constructing the factorization algebra encoding 
 the off-shell quantum mechanics of the spin-$\frac{1}{2}$ system. 
 To this end we need to define its chain complex, pass over to its dual and 
 define a BV algebra on the space of observables, which then will be used 
 to construct the factorization algebra $\mathfrak{F}_{{\rm spin}-\frac{1}{2}}$ that is quasi-isomorphic 
 to the standard formulation in terms of Pauli matrices acting on the Hilbert space $\mathbb{C}^2$. 
 
Let us then define  the chain complex encoding the classical dynamics.  
It is natural to seek some kind of one-dimensional Dirac operator to define the differential. 
The higher-dimensional Dirac operator is characterized by being a `square root' of 
the Klein-Gordon operator and acting on a multi-component vector (spinor). 
The remnant  of the Klein-Gordon operator in one dimension 
is $\partial_t^2+\omega^2$, which factorizes as 
 \begin{equation}
  \partial_t^2+\omega^2 = (\partial_t+ i\omega )(\partial_t -i\omega) \equiv  \partial_+ \partial_-\;, 
 \end{equation}  
where we used   the abbreviations (\ref{delpm}). 
Since we expect fermions to correspond to spinors, which carry at least two components, 
we double the function spaces in degree zero and one and define the chain complex
 \begin{equation}
 \begin{tikzcd}
 C^\infty(\mathbb{R})\oplus  C^\infty(\mathbb{R}) \arrow[rr,"d_0 = (\partial_+{,}\partial_-)"] & & C^\infty(\mathbb{R})\oplus  C^\infty(\mathbb{R})
 \end{tikzcd}
 \end{equation} 
where the notation indicates that the differential acts as 
 \begin{equation}
  d_0(\psi_1\oplus \psi_2) \equiv  d_0(\psi_1,\psi_2) = (\partial_+ \psi_1, \partial_- \psi_2) \;. 
 \end{equation} 

We next discuss its cohomology. As above, the cohomology in degree zero is just 
${\rm ker}(d_0)$, i.e., the space of classical solutions to the 
equations $(\partial_+ \psi_1, \partial_- \psi_2) =(0,0)$. In terms of 
 \begin{equation}
  \sigma_+(t) :=  e^{-i\omega t}\;, \qquad \sigma_-(t) :=   e^{i\omega t} \;, 
 \end{equation} 
satisfying $\partial_+\sigma_+=\partial_-\sigma_-=0$,  
we can parameterize a general solution as 
 \begin{equation}
 (\psi_1(t) ,\psi_2(t)) =  (\psi_1(0) \sigma_+(t) ,\, \psi_2(0) \sigma_-(t))\;. 
 \end{equation} 
Since the equations are first order in time derivatives, the solution is uniquely determined if 
the pair of functions $(\psi_1, \psi_2)$ is specified at time zero. On the total complex 
we can define the projector that acts as 
 \begin{equation}
  \pi(\psi_1,\psi_2) = ( \psi_1(0), \psi_2(0))\in \mathbb{C}^2\;, 
 \end{equation} 
in degree zero and is zero in degree one.  (Recall that the functions are complex valued.)   
Displaying this projection as a diagram we we have 
\begin{equation}
    \begin{tikzcd}
    C^\infty(\mathbb{R})\oplus C^\infty(\mathbb{R}) \arrow[d,"\pi"] \arrow[r,"\partial_+  +  \partial_-"] & 
    C^\infty(\mathbb{R})\oplus C^\infty(\mathbb{R})  \arrow[d] \\
    \mathbb{C}^2 \arrow[r] & 0
    \end{tikzcd} \, .
\end{equation}
As above it is easy to see that the cohomology in degree one is trivial, and hence the above 
presents the projection to the cohomology $\mathbb{C}^2$ concentrated in degree zero.

Let us next turn to the `dual' 
complex of linear functionals. Denoting a pair of functions 
as $\Psi =(\psi_1,\psi_2)$ we consider the functionals 
 \begin{equation}
  F[\Psi] = \int_I dt \big( {{f_1(t)}}  \psi_1(t) +{f_2(t)} \psi_2(t) \big) \ \equiv \ 
  \langle{\bf f}, \Psi\rangle\;, 
 \end{equation} 
where $f_1, f_2$ are compactly supported functions on the interval $I$, 
and we sometimes use the abbreviation ${\bf f}=(f_1,f_2)$, together with the 
notation $\langle \;,\, \rangle$ for the pairing defined by the integral above. 
Demanding $F[d\Psi]=dF[\Psi]$ for $\Psi$ of degree zero one finds: 
 \begin{equation}\label{fermiond-1} 
  d_{-1}({\bf f}) = d_{-1}(f_1, f_2) = -(\partial_-f_1, \partial_+  f_2) \,, 
 \end{equation}  
for ${\bf f}$ of degree $-1$, where we used that under integration by parts $\partial_+\rightarrow -\partial_-$ 
and $\partial_-\rightarrow -\partial_+$. 
The dual chain complex of linear observables is thus given by 
 \begin{equation}\label{xdcom}
 (X(I) ,d) \ :=
 \begin{tikzcd}
C_c^\infty({I})\oplus  C_c^\infty({I}) \arrow[rrr,"d_{-1} = (-\partial_-{,}-\partial_+)"] & & & C_c^\infty({I})\oplus  C_c^\infty({I})\; . 
 \end{tikzcd}
 \end{equation} 
We can also define the projector onto the cohomology of this  complex. In degree $-1$ the 
projector acts trivially, and in degree zero it acts as 
 \begin{equation}\label{Pibffff} 
  \Pi_0({\bf f}) = \Pi(f_1, f_2) = \big(\langle f_1, \sigma_+\rangle , \langle f_2, \sigma_-\rangle\big)
  \equiv f_{1-} a +f_{2+} a^{\dagger} \ \in \ \mathbb{C}^2\;, 
 \end{equation} 
where $\langle f_1, \sigma_-\rangle=\int_I dt {f_1(t)}\sigma_-(t)$ and similarly 
for the second component, and $a, a^{\dagger}$ some basis of $\mathbb{C}^2$.
Thus, diagramatically we have 
\begin{equation}
\begin{tikzcd}
C^\infty_c(I) \oplus C^\infty_c(I) \arrow[d] \arrow[r,"d_{-1}"] & \arrow[d,"\Pi_0"] C^\infty_c(I) \oplus C^\infty_c(I) \\
0 \arrow[r] & \mathbb C^2
\end{tikzcd}
\; .
\end{equation}
With (\ref{fermiond-1}) and (\ref{Pibffff}) it quickly follows that $\Pi$ is a chain map, i.e., 
that for ${\bf f}$ in degree $-1$ we have $(\Pi\circ d)({\bf f})=0$.

\medskip

We next pass over to the BV algebra on the symmetric space of polynomials. However, since  we are 
dealing with fermions the underlying wedge product will be  anticommutative 
for objects in degree zero, which is implemented by a degree shift by $+1$ in 
the graded commutativity. More precisely, the resulting space based on the chain complex $(X,d)$ 
defined in (\ref{xdcom}), which we denote  by 
 \begin{equation}\label{classicalfermionobs} 
 {\rm Obs}^{cl}_{\rm ferm}(I) := {\rm Sym}(X(I) ,d)\;,  
 \end{equation} 
 consists of polynomials 
 \begin{equation}
  F = {\bf f}_1\wedge {\bf f}_2\wedge \cdots \wedge {\bf f}_n\;, 
 \end{equation} 
subject to 
 \begin{equation}\label{newedgesigns}  
  {\bf f}_1\wedge {\bf f}_2 = (-1)^{({\bf f}_1+1)({\bf f}_2+1) } {\bf f}_2\wedge {\bf f}_1\;. 
 \end{equation}
With this convention, the wedge product of two degree zero objects is antisymmetric, 
while the wedge product of a degree zero and a degree $-1$ object is symmetric. 
The differential $d$ can be extended to the full space (\ref{classicalfermionobs}) by demanding 
the Leibniz rule 
 \begin{equation}
   d({\bf f}_1\wedge {\bf f}_2) = d{\bf f}_1 \wedge {\bf f}_2 + (-1)^{{\bf f}_1+1} {\bf f}_1\wedge d{\bf f}_2\;, 
 \end{equation} 
 upon which this space becomes a (degree-shifted) differential graded associative 
algebra. In order to define a BV algebra we next have to deform this differential by the 
de Rham differential, which acts trivially on numbers and linear monomials and on quadratic monomials 
acts as 
 \begin{equation}
   \Delta\big({\bf f}\wedge {\bf g}\big) 
   := 
   - \int_I dt\,\big(f_1(t) g_2(t) +f_2(t) g_1(t)\big)  
   \;, 
 \end{equation} 
where ${\bf f}$ has degree zero, and ${\bf g}$ has degree $-1$, thus making $\Delta$ a degree $+1$ operator. 
(Note the off-diagonal pairing and the absence of the $i\hbar$ factor, i.e., the naive expression has effectively been 
multiplied by $\frac{i}{\hbar}$. This will be explained below.)
The BV operator is extended to the full symmetric algebra by (\ref{fullDeltaAction}), where the signs 
generated are now those implied by (\ref{newedgesigns}). 
We now define the algebra of quantum observables 
\begin{equation}
    {\rm Obs}_{\rm ferm}^q(I) := \Big({\rm Sym} (X(I)),\,\delta_{\rm BV}\Big)\;, 
    \label{eq:obs_q_defFERM}
\end{equation}
based on the same space as in (\ref{xdcom}) but with the 
differential having the quantum deformation given by the BV operator: $\delta_{\rm BV} := d + \Delta$.

Next, we turn to the projector $\Pi$ from the full algebra of quantum observables down to its cohomology. 
As in (\ref{chainmapproppp}) one finds that the naive extension acting as a morphism does not define a chain map
and hence needs to be modified using $\Delta$ and a homotopy map of degree $-1$. To define the latter we 
set 
 \begin{equation}
  {\bf h}({\bf f}) =  {\bf h}(f_1, f_2) = -\big(\partial_+(h(f_1)),\, \partial_-(h(f_2))\big)\;,  
 \end{equation} 
where $h$ was defined in (\ref{eq:h}) and satisfies $(\partial_t^2+\omega^2)(h(f))=f$. 
Recalling $\partial_t^2+\omega^2= \partial_+ \partial_-$ it then follows with (\ref{fermiond-1}) that 
\begin{equation}\label{bfhinverse} 
 d({\bf h}({\bf f})) = {\bf f}\;. 
\end{equation} 
With the original projection acting as a morphism by $\Pi_0$, the full projection is now given by 
 \begin{equation}
     \Pi := \Pi_0 e^{-C}\;, \quad \text{where}\quad C:= \Delta \circ (1 \otimes {\bf h})\;. 
 \end{equation} 
For ${\bf f}$ of degree zero and ${\bf g}$ of degree $-1$ we then have 
 \begin{equation}
 (\Pi\circ \delta_{\rm BV})({\bf f}\wedge {\bf g}) = \Pi_0 e^{-C} (d+\Delta)({\bf f}\wedge {\bf g})
 = \Pi_0 e^{-C}\left(-{\bf f}\wedge d{\bf g}+ \Delta ({\bf f}\wedge {\bf g})\right) \;, 
 \end{equation} 
where we used the Leibniz rule and $d{\bf f}=0$. We observe that there are two quantum 
contributions: $\Delta ({\bf f}\wedge {\bf g})$ and 
 \begin{equation}
  C({\bf f}\wedge d{\bf g}) = -\Delta({\bf f}\wedge {\bf h}(d{\bf g})) = -\Delta({\bf f}\wedge {\bf g})\;, 
 \end{equation}  
 which cancels the first quantum contribution. Here we used (\ref{bfhinverse}) and the convention that 
 the odd ${\bf h}$ creates a minus sign when moved passed an object ${\bf f}$ where $|{\bf f}|+1$ is odd, 
 i.e., $(1\otimes {\bf h})({\bf f}\otimes {\bf g})=(-1)^{{\bf f}+1} {\bf f} \otimes {\bf h} ({\bf g})$. 
 This shows that $\Pi\circ \delta_{\rm BV}=0$ and hence that $\Pi$ is a chain map. 
 
 For later use we record that for two degree zero objects ${\bf f}^1=(f^1_1, f^1_2)$ and ${\bf f}^2=(f^2_1, f^2_2)$ we have 
  \begin{equation}\label{fermionicCaction} 
 C\big({\bf f}^1\wedge {\bf f}^2\big)=
  \begin{cases}
    f^2_{1-} f^1_{2+} & \text{if $I_1<I_2$}\\
    -f^1_{1-} f^2_{2+} &\text{if $I_1> I_2$}
  \end{cases}\;, 
 \end{equation}
where $I_1$ denotes the interval on which ${\bf f}^1$ is 
compactly supported and similarly for $I_2$. 
This follows by a direct computation analogous to (\ref{CCOMPUUU}). Note that this result is such that 
 \begin{equation}
  C\big({\bf f}^1\wedge {\bf f}^2\big)= - C\big({\bf f}^2\wedge {\bf f}^1\big)\;, 
 \end{equation} 
as it should be since degree zero objects anticommute in our convention.  
 
 Finally,  the factorization algebra $\mathfrak{F}_{{\rm spin}-\frac{1}{2}}$  that encodes the off-shell quantum mechanics 
 of the spin$-\frac{1}{2}$ particle assigns to each open interval 
   \begin{equation}
    \mathfrak{F}_{{\rm spin}-\frac{1}{2}}(I) =  {\rm Obs}_{\rm ferm}^q(I) = {\rm Sym} \Big(X(I),\,\delta_{\rm BV}\Big)\;, 
   \end{equation} 
 and the structure maps are given as above by inclusion and wedge product. 
 
 \subsubsection*{Derivation of the chain complex from the Dirac action}

 The form of the differential $\delta_{\rm BV}$ can be motivated by the Dirac action in (1+0) dimensions. In that case, the Dirac action is given by
 \begin{equation}
S_D[\psi_2,\psi_1] =  \int_I d t \, \psi_2(i \hbar\partial_t - m)\psi_1 \, , 
 \end{equation}
 where $m$ is a constant with energy dimension. Defining $\omega = \hbar m$, we obtain a quantity that has the dimension of a frequency. The action in terms of $\omega$ is then
\begin{equation}
S_D[\psi_2,\psi_1] = i\hbar \int_I d t \,  \psi_2(\partial_t +i \omega)\psi_1 \, ,  
\end{equation}
Note that this action has equations of motion
\begin{equation}
(\partial_t +i \omega)\psi_1 = (\partial_t - i \omega) \psi_2 = 0 \, . 
\end{equation}

In the textbook version of the BV formalism, the BV Laplacian is
\begin{equation}
\Delta = \int \text d t \, \Big(\frac{\delta^2}{\delta \psi_1(t) \delta \psi_1^*(t)} + \frac{\delta^2}{\delta \psi_2(t) \delta \psi_2^*(t)} \Big) \, .
\end{equation}
The action then induces the differential
\begin{equation}
\begin{split}
d(\psi^*_1(t)) &:= \{S,\psi^*_1(t)\} := \Delta(S\psi^*_1(t)) = i \hbar (\partial_t -i \omega)\psi_2(t) \; , \\
d(\psi^*_2(t)) &:= \{S,\psi^*_1(t)\} := \Delta(S\psi^*_1(t)) = i\hbar(\partial_t + i \omega) \psi_1(t) \; .
\end{split}
\end{equation}
Note that the differential on the anti-field $\psi_1^*$ produces the equation of motion for the field $\psi_2$. For this reason, we make a redefinition $(\psi_1^*,\psi_2^*) \rightarrow (\psi_2^*,\psi_1^*)$. The Laplacian and the differential then take the form
\begin{equation}
\begin{split}\label{StandardDiracBVdifferential}
\Delta &= \int \text d t \, \Big(\frac{\delta^2}{\delta \psi_1(t) \delta \psi_2^*(t)} + \frac{\delta^2}{\delta \psi_2(t) \delta \psi_1^*(t)} \Big) \; , \\
\quad d(\psi^*_1(t)) &= i\hbar(\partial_t + i \omega) \psi_1(t) \;, \\ 
d(\psi^*_2(t)) &= i\hbar(\partial_t - i \omega) \psi_2(t) \, .
\end{split}
\end{equation}

The full quantum BV differential takes the form
\begin{equation}
\delta_{BV} = \{S,-\} - i \hbar \Delta = d - i\hbar \Delta \, .
\end{equation}
Note that since $d$ has a itself a global factor of $i\hbar$, we can rescale $\delta_{\rm BV}$ by this factor to obtain
\begin{equation}
\delta_{\rm BV} = d - \Delta \, ,
\end{equation}
where now $d(\psi_1^*,\psi_2^*) = (\partial_+ \psi_1,\partial_- \psi_2)$.  Note that any rescaling of $\delta_{\rm BV}$ is harmless, since it does not alter the cohomology.

In order to relate this to our BV algebra of smooth, compactly supported functions, we define the linear observables to be
\begin{equation}
F_{\textbf f}[\psi_1,\psi_2] := \int_I d t \, (f_1(t) \psi_1(t) + f_2 \psi_2(t))
\end{equation}
in degree zero and
\begin{equation}
F_{\textbf g}[\psi_1^*,\psi_2^*] := \int_I d t \, (g_1(t) \psi_1^*(t) + g_2 \psi_2^*(t))
\end{equation}
in degree $-1$, where $f_1,f_2,g_1,g_2$ are all compactly supported. The action of $d = \{S,-\}$ then is given by  
\begin{equation}
d(g_1,g_2) = (-\partial_- g_1,-\partial_+ g_2) \, ,
\end{equation}
while $\Delta$ acts on $\textbf f = (f_1,f_2)$ and $\textbf g = (g_1,g_2)$ as
\begin{equation}
\Delta(\textbf f \wedge \textbf g) := \Delta(F_{\textbf f} F_{\textbf g}) = \int_I d t \, (f_1(t) g_2(t) + f_2(t) g_1(t)) \, .
\end{equation}
After making a final redefinition $\Delta \rightarrow -\Delta$, we find that
$\delta_{\rm BV} = d + \Delta$ 
is the differential on $\text{Obs}^q_{\rm ferm}(I)$ in \eqref{eq:obs_q_defFERM}. The chain complex $\text{Obs}^q_{\rm ferm}(I)$ therefore is by no means any chain complex, but it can derived by applying the textbook BV formalism to the Dirac action in (1+0) dimensions. Furthermore, the fact that the Dirac action comes with a global factor of $i\hbar$ explains the absence of this factor in our definition of $\delta_{BV}$. We also learn that there is no classical limit in sending $\hbar \rightarrow 0$.\footnote{On the other hand, there is another kind of classical limit obtained by just setting $\Delta = 0$. This recovers the complex \eqref{classicalfermionobs}.}
 
 \subsection{Quasi-isomorphism to standard  spin-$\frac{1}{2}$ quantum mechanics }
 
 We begin by describing the factorization algebra $\mathfrak{F}_{\rm Pauli}$ that encodes the 
 standard formulation of spin-$\frac{1}{2}$ quantum mechanics. The operator algebra of the latter 
 is conventionally given by Pauli matrices $\sigma_i=(\sigma_x, \sigma_y, \sigma_z)$ satisfying 
 \begin{equation}\label{sigmamatricesrel} 
    \begin{gathered}
        [\sigma_i, \sigma_j] = 2i\epsilon_{ijk}\sigma_k\;, \qquad 
        \{\sigma_i, \sigma_j\} = 2\delta_{ij}\;, 
    \end{gathered}
\end{equation}
but here it is more convenient to express them in terms of a fermionic Weyl algebra of operators 
$a, a^{\dagger}$ obeying 
\begin{equation}
    \begin{gathered}
        \{a, a^\dagger\} = 1\;, \qquad 
        \{a, a\} = \{a^\dagger, a^\dagger\} = 0\;, 
    \end{gathered}
\end{equation}
where $\{\,,\}$ denotes the anticommutator. Note that 
these relations imply  in particular $a^2=(a^{\dagger})^2=0$. 
Defining then
\begin{align}
    \sigma_x &:= a+ a^\dagger \;, \\
    \sigma_y &:= {i}(a-a^\dagger)\;, \\
    \sigma_z &:= {i}[a, a^\dagger]\;, 
\end{align} 
one finds that these operators satisfy the desired relations (\ref{sigmamatricesrel}). 

We consider now the fermionic Weyl algebra 
 \begin{equation}\label{Weylferm} 
  {\rm Weyl}_{\rm ferm} := \mathbb{C}[a,a^\dagger] \;, 
 \end{equation} 
consisting of polynomials in odd variables $a, a^{\dagger}$ over $\mathbb{C}^2$. We denote the antisymmetric concatenation again by 
the wedge symbol $\wedge$, so that  $a\wedge a=a^{\dagger}\wedge a^{\dagger}=0$. The algebra $\mathbb{C}[a,a^\dagger]$ is then finite-dimensional with basis 
 \begin{equation}
  1\;, \quad a\;, \quad a^{\dagger} \;, \quad a\wedge a^{\dagger}=-a^{\dagger}\wedge a\;, 
 \end{equation}
i.e., ${\rm Weyl}_{\rm ferm}$ is four-dimensional as a vector space over the complex numbers. (This is nothing but the algebra of 
complex $2\times 2$ matrices.) We next define the non-commutative product 
$\mu: {\rm Weyl}_{\rm ferm}\otimes {\rm Weyl}_{\rm ferm}\rightarrow {\rm Weyl}_{\rm ferm}$ by 
 \begin{equation}
 \begin{split} 
  \mu(a^{\dagger},a) &= a^{\dagger}\wedge a = - a\wedge a^{\dagger}\;, \\
  \mu(a, a^{\dagger})  &=a\wedge a^{\dagger} +1 = -a^{\dagger} \wedge a +1\;, \\
  \mu(a,a) &= \mu(a^{\dagger}, a^{\dagger}) = 0\; .
 \end{split} 
 \end{equation} 
The general formula for $\mu$ is similar to the bosonic case:
\begin{equation}\label{RecursiveActionFERM} 
\mu = \wedge \circ e^{\partial_a \otimes \partial_{a^\dagger}} \, .
\end{equation}
Note that in this case, the exponential is necessarily finite.

The factorization algebra $\mathfrak{F}_{\rm Pauli}$ assigns to every open interval $I$ 
the same space (\ref{Weylferm}). The structure maps are the identity, 
and  for any two words $a,b$ we set 
 \begin{equation}\label{DefmmuFERM}  
     m_{\mu}(a,b) := 
   \begin{cases}
    -\mu(b, a) & \text{if $I_1<I_2$}\\
    \mu(a,b) &\text{if $I_1> I_2$}
  \end{cases} \  \in  \ \mathbb{C}[a,a^\dagger]\;. 
\end{equation}

With these definitions and (\ref{fermionicCaction}) we can now verify the quasi-isomorphism property 
for the special case 
 \begin{equation}
  (\Pi\circ m)({\bf f}^1\otimes {\bf f}^2) = (m\circ \Pi)({\bf f}^1\otimes {\bf f}^2)\;, 
 \end{equation} 
 for both orderings of the respective (disjoint) intervals $I_1$ and $I_2$ on which 
 ${\bf f}^1$ and ${\bf f}^2$ are supported.

\subsection{Half-open and closed intervals and the Hilbert space \texorpdfstring{$\mathbb C^2$}{C\texttwosuperior}}

As with the harmonic oscillator, we can incorporate states by by working on the closed interval $[t_i,t_f]$. In order to obtain the correct cohomologies, we impose the following boundary conditions.
\begin{equation}
\begin{split}
C^\infty_{i}([t_i,t_f]) &:= \{f \in C^\infty([t_i,t_f]) \; | \; f(t_i) = 0\} \; ,    \\
C^\infty_{f}([t_i,t_f]) &:= \{f \in C^\infty([t_i,t_f]) \; | \; f(t_f) = 0\} \; .
\end{split}
\end{equation}
Given a connected open subset $I \subseteq [t_i,t_f]$, we set $C^\infty_{ci}(I) \subseteq C^\infty_{i}([t_i,t_f])$ for the space of smooth functions $f \in  C^\infty_{i}([t_i,t_f])$ with compact support in $I$. Similarly, we write $C^\infty_{cf}(I) \subseteq C^\infty_{f}([t_i,t_f])$ for the space of smooth functions $f \in  C^\infty_{f}([t_i,t_f])$ with compact support in $I$.

Given a connected and open set $I \subseteq [t_i,t_f]$, we define the following complex of linear observables
\begin{equation}\label{Fermlinearobs}
\text{Obs}^{\rm lin}_{\rm ferm}(I) \ := 
\begin{tikzcd}
 C^\infty_{cf}(I) \oplus C^\infty_{ci}(I) \arrow[rr,"(-\partial_-{,}-\partial_+)"] & & C^\infty_c(I) \oplus C^\infty_c(I)\; .
\end{tikzcd}
\end{equation}


To compute the cohomologies of these chain complexes, we first give two partial results.
\begin{prop}
The chain complexes
\begin{equation}
\begin{tikzcd}\label{TrivCom1}
0 \arrow[r] & C^\infty_{cf}([t_i,a)) \arrow[r,"-\partial_-"] & C^\infty_c([t_i,a)) \arrow[r] & 0 \, ,
\end{tikzcd}
\end{equation}
\begin{equation}\label{TrivCom2}
\begin{tikzcd}
0 \arrow[r] & C^\infty_{ci}((b,t_f]) \arrow[r,"-\partial_+"] & C^\infty_c((b,t_f]) \arrow[r] & 0
\end{tikzcd}
\end{equation}
have vanishing cohomology. 
\end{prop}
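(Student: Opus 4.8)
The plan is to exploit that each displayed complex is a two-term complex with a single nontrivial arrow, so that vanishing of its cohomology is equivalent to that arrow being an \emph{isomorphism}. Thus for the first complex it suffices to prove that $-\partial_-\colon C^\infty_{cf}([t_i,a))\to C^\infty_c([t_i,a))$ is injective and surjective, and likewise for $-\partial_+$ in the second. Moreover, the second complex is obtained from the first by the reflection $(Rf)(t):=f(t_i+t_f-t)$: one checks $R\partial_t R=-\partial_t$, so $R$ intertwines $-\partial_-$ with $-\partial_+$ up to an overall sign (immaterial for kernels and images), it sends $[t_i,a)$ to $(b,t_f]$ with $b=t_i+t_f-a$, and it interchanges the boundary conditions since $(Rf)(t_i)=f(t_f)$ carries $C^\infty_{cf}$ to $C^\infty_{ci}$. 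Hence I would prove the first case in detail and obtain the second by this symmetry.

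For injectivity I would solve the homogeneous equation $\partial_- f=(\partial_t-i\omega)f=0$, whose only solutions are the multiples $f(t)=c\,e^{i\omega t}$. Such a function can have compact support inside $[t_i,a)$ only for $c=0$, so $\ker(-\partial_-)=0$; note the boundary condition defining $C^\infty_{cf}$ plays no role here.

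For surjectivity, given $g\in C^\infty_c([t_i,a))$ I would exhibit the explicit preimage
\begin{equation}
f(t)\ :=\ e^{i\omega t}\int_t^{t_f} e^{-i\omega s}\,g(s)\,ds\,,
\end{equation}
which is smooth and satisfies $(\partial_t-i\omega)f=-g$, i.e.\ $-\partial_- f=g$, by a one-line differentiation. Because $g$ is supported in some $[t_i,a']$ with $a'<a$, the integrand vanishes for $s\ge a'$, so $f(t)=0$ for $t\ge a'$; hence $f$ has compact support in $[t_i,a)$, and the condition $f(t_f)=0$ defining $C^\infty_{cf}$ is then automatic. Thus $f\in C^\infty_{cf}([t_i,a))$ and $-\partial_-$ is onto, so the image exhausts $C^\infty_c([t_i,a))$.

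The one point I would treat most carefully — and the only real subtlety — is the \emph{direction} of integration. Integrating the first-order operator toward the \emph{open} endpoint $a$ is precisely what forces the primitive to vanish there and thus land in the compactly supported subspace; the opposite choice would produce a solution generically nonzero near $a$ and fail to be compactly supported. Once this support behaviour is verified, injectivity and surjectivity together show $-\partial_-$ is an isomorphism, both cohomology groups vanish, and the reflection argument above disposes of the second complex (with the mirror-image propagator $f(t)=-e^{-i\omega t}\int_{t_i}^t e^{i\omega s}g(s)\,ds$ integrating toward the open endpoint $b$).
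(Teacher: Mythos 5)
Your proof is correct, but it takes a genuinely different route from the paper's. You reduce everything to the observation that a two-term complex is acyclic iff its single differential is an isomorphism, then verify injectivity (no nonzero multiple of $e^{i\omega t}$ has compact support) and surjectivity via the explicit first-order Green's operator $f(t)=e^{i\omega t}\int_t^{t_f}e^{-i\omega s}g(s)\,ds$, correctly isolating the real point: anchoring the primitive at $t_f$ forces $f$ to vanish on $[a',t_f]$ once $\mathrm{supp}(g)\subseteq[t_i,a']$ with $a'<a$, so the preimage lies in the compactly supported subspace (and, as you note, the condition $f(t_f)=0$ is vacuous there, i.e.\ $C^\infty_{cf}([t_i,a))=C^\infty_c([t_i,a))$); your reflection $t\mapsto t_i+t_f-t$ disposing of \eqref{TrivCom2}, and the mirror propagator, both check out. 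The paper argues homologically instead: it builds a short exact sequence of chain complexes using the evaluation map $\pi_{-1}(f)=e^{-i\omega t_i}f(t_i)$ and the pairing $\pi_0(f)=\langle f,\sigma_+\rangle$, contracts the kernel column with the second-order Feynman propagator ($-\partial_+h$, relying on the boundary-condition analysis of appendix B), and invokes the zig-zag lemma. Your argument is shorter and more elementary — a first-order ODE integration replaces both the propagator and the homological machinery — whereas the paper's setup is engineered for reuse: essentially the same diagram handles the companion propositions where the cohomology is $\mathbb{C}$ rather than $0$, with the maps $\pi_{-1},\pi_0$ exhibiting the generator. Your method extends there too, but the clean "isomorphism" criterion would have to be replaced by computing $\ker$ and $\mathrm{coker}$ explicitly, where the failure of surjectivity is measured precisely by the pairings $\langle g,\sigma_\pm\rangle$.
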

\begin{proof}
We will only prove that \eqref{TrivCom1} has vanishing cohomology, since the proof for \eqref{TrivCom2} is essentially the same.

We will use the following fact from homological algebra \cite{weibel1994introduction}, usually called \emph{zig-zag lemma}. Loosely speaking, it says that a short exact sequence of chain complexes induces a long exact sequence in (co-)homology. This means the following. Let $A, B, C$ be chain complexes, $f: A \rightarrow B^\bullet$ an injective chain map and $g: B \rightarrow C$ a surjective chain map such that $\ker g = \text{im} \; f$. In this case, we say that
\begin{equation}
\begin{tikzcd}
0 \arrow[r] & A \arrow[r,"f"] & B \arrow[r,"g"] & C \arrow[r] & 0        
\end{tikzcd}
\end{equation}
is a short exact sequence.\footnote{In general, a sequence of composable maps is called exact, if the kernel of one map is the image of the previous one. A sequence is called short exact, if there are only two non-zero maps.} The zig-zag lemma then states that there are linear maps $s_i: H^{i}(C) \rightarrow H^{i+1}(A)$ between cohomology spaces, such that
\begin{equation}
\begin{tikzcd}
\ldots \arrow[r,"H^i(g)"] & H^{i}(C) \arrow[r,"s_i"] & H^{i+1}(A) \arrow[r,"H^{i+1}(f)"] & H^{i+1}(B) \arrow[r,"H^{i+1}(g)"] & H^{i+1}(C) \arrow[r,"s_{i+1}"] & \ldots
\end{tikzcd}
\end{equation}
is an exact sequence. Here, $H^{i}(f)$ and $H^{i}(g)$ denote the linear maps induced by $f$ and $g$ on the $i$th cohomology spaces. We have the following immediate corollaries: Iff $A$ has trivial cohomology, then $g$ is a quasi-isomorphism. Similarly, if $C$ has trivial cohomology, then $f$ is a quasi-isomorphism. If at least two out of the three complexes have trivial cohomology, then also the third one has trivial cohomology.

Coming back to the proof of the proposition, we claim that there is a short exact sequence of chain complexes
\begin{equation}\label{Trivialinitialcomplex}
\begin{tikzcd}
& 0 \arrow[d] & 0 \arrow[d] & 0 \arrow[d] & \\
0 \arrow[r] & C^\infty_{ci}([t_i,a)) \arrow[d,"-\partial_-"]\arrow[r,"\iota_{-1}"] & C^\infty_{cf}([t_i,a)) \arrow[d,"-\partial_-"] \arrow[r,"\pi_{-1}"] & \mathbb{C} \arrow[d,"1"] \arrow[r] & 0 \, , \\
0 \arrow[r] & \ker \pi_0 \arrow[r,"\iota_{0}"] \arrow[d] & C^\infty_c([t_i,a)) \arrow[r,"\pi_{0}"] \arrow[d]& \mathbb{C} \arrow[r] \arrow[d] & 0 \, . \\
& 0 & 0 & 0 &
\end{tikzcd}
\end{equation}
Here, the complexes are drawn vertically and we define the first non-trivial row to be of degree $-1$ and the second non-trivial row to be of degree zero. The map $-\partial_+ h$ contracts the complex in the first column to zero, while the the last column obviously also has trivial cohomology. If the claim is true (i.e.~the sequence in \eqref{Trivialinitialcomplex} is short exact), then the zig-zag lemma implies that the cohomology of the middle complex is also zero.

We now prove the claim. We first construct the chain map
\begin{equation}
\begin{tikzcd}
0 \arrow[r] & C^\infty_{cf}([t_i,a)) \arrow[d,"\pi_{-1}"] \arrow[r,"-\partial_+"] & C^\infty_{c}([t_i,a)) \arrow[d,"\pi_0"] \arrow[r] & 0 \\
0 \arrow[r] & \mathbb C \arrow[r,"1"] & \mathbb{C} \arrow[r] & 0 
\end{tikzcd} \ .
\end{equation}
where $\pi_{-1}(f) = e^{-i\omega t_i}f(t_i)$ and $\pi_0 = \langle f,\sigma_+\rangle$. We check that this defines a chain map. Indeed,
\begin{equation}
\pi_0(-\partial_- f) = \langle - \partial_- f, \sigma_+\rangle = f(t_i) e^{-i \omega t_i} + \langle f,\partial_+\sigma_+ \rangle = f(t_i) e^{-i \omega t_i} = \pi_{-1}(f) \, .
\end{equation}
Next, we obviously have $\ker \pi_{-1} = C^\infty_{cf}([t_i,a))$. Therefore, the kernels $\ker \pi_{-1} $ and $\ker \pi_0$ constitute exactly the chain complex
\begin{equation}
\begin{tikzcd}
0 \arrow[r] & C^\infty_{ci}([t_i,a)) \arrow[r,"-\partial_-"] & \ker \pi_0 \arrow[r] & 0 \, .
\end{tikzcd}
\end{equation}
This proves that \eqref{Trivialinitialcomplex} is a short exact sequence of chain complexes.
\end{proof}
\begin{prop}
The chain complexes
\begin{equation}
\begin{tikzcd}\label{SpinCom1}
0 \arrow[r] & C^\infty_{ci}([t_i,a)) \arrow[r,"-\partial_+"] & C^\infty_c([t_i,a)) \arrow[r] & 0 \, ,
\end{tikzcd}
\end{equation}
\begin{equation}\label{SpinCom2}
\begin{tikzcd}
0 \arrow[r] & C^\infty_{cf}((b,t_f]) \arrow[r,"-\partial_-"] & C^\infty_c((b,t_f]) \arrow[r] & 0
\end{tikzcd}
\end{equation}
have cohomology isomorphic to $\mathbb{C}$.
\end{prop}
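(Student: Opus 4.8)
The plan is to prove both statements by a direct computation of the kernel and cokernel of the single nontrivial differential, closely paralleling the half-open interval propositions established earlier for $\Pi_0^{\pm}$. Since the reflection $t \mapsto t_i + t_f - t$ exchanges $t_i \leftrightarrow t_f$, $[t_i,a) \leftrightarrow (b,t_f]$, $\partial_+ \leftrightarrow -\partial_-$ and $\sigma_+ \leftrightarrow \sigma_-$ (up to a constant), it carries \eqref{SpinCom1} on $[t_i,a)$ to \eqref{SpinCom2} on $(b,t_f]$ up to an irrelevant sign of the differential; hence it suffices to treat \eqref{SpinCom1}, and \eqref{SpinCom2} follows, with the functional $\langle\,\cdot\,,\sigma_-\rangle$ replaced by $\langle\,\cdot\,,\sigma_+\rangle$.

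For \eqref{SpinCom1}, place $C^\infty_{ci}([t_i,a))$ in degree $-1$ and $C^\infty_c([t_i,a))$ in degree $0$, so that $H^{-1} = \ker(-\partial_+)$ and $H^0 = \text{coker}(-\partial_+)$. First I would dispatch $H^{-1}$: the equation $\partial_+ f = \dot f + i\omega f = 0$ has only the solutions $f = c\,\sigma_+ = c\,e^{-i\omega t}$, none of which is compactly supported unless $c=0$, so $\ker(-\partial_+)=0$ and $H^{-1}=0$. For $H^0$ I would introduce the functional $\Phi\colon C^\infty_c([t_i,a)) \to \mathbb{C}$, $\Phi(g) := \langle g,\sigma_-\rangle = \int_{t_i}^{a} e^{i\omega t} g(t)\,dt$, and show it is a chain map into $\mathbb{C}$ whose kernel is precisely $\text{im}(-\partial_+)$. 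The inclusion $\text{im}(-\partial_+) \subseteq \ker\Phi$ follows by integration by parts: using $\partial_-\sigma_-=0$, only the boundary terms survive and one finds $\Phi(-\partial_+ f) = e^{i\omega t_i} f(t_i)$, where the contribution at $a$ drops because $f$ vanishes near $a$; this is zero exactly because the boundary condition $f(t_i)=0$ defining $C^\infty_{ci}$ is imposed.

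For the reverse inclusion $\ker\Phi \subseteq \text{im}(-\partial_+)$ I would exhibit an explicit primitive: given $g \in \ker\Phi$, set $f(t) := -e^{-i\omega t}\int_{t_i}^{t} e^{i\omega s} g(s)\,ds$. A direct differentiation gives $-\partial_+ f = g$, and $f(t_i)=0$ is immediate from the lower limit. Moreover, for $t$ beyond $\text{supp}(g)$ one has $f(t) = -e^{-i\omega t}\Phi(g) = 0$, so $f$ has compact support away from $a$ and therefore lies in $C^\infty_{ci}([t_i,a))$. Smoothness is automatic. Since $\Phi$ is a nonzero functional it is surjective, so $H^0 = C^\infty_c([t_i,a))/\ker\Phi \cong \mathbb{C}$, and the total cohomology is $\mathbb{C}$ concentrated in degree $0$.

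The main obstacle is precisely this reverse inclusion, where one must check that the constructed primitive $f$ genuinely lands in the constrained space $C^\infty_{ci}([t_i,a))$; this is where the two ingredients interlock. The boundary condition $f(t_i)=0$ is enforced by the lower limit of the integral, while the single normalization $\Phi(g)=\langle g,\sigma_-\rangle=0$ is exactly what makes $f$ vanish past $\text{supp}(g)$ and hence be compactly supported away from $a$. It is this codimension-one condition that produces the one-dimensional cohomology; dropping $f(t_i)=0$ would instead make $-\partial_+$ surjective and the complex acyclic. Alternatively, one could package the argument through the zig-zag lemma exactly as in the preceding proposition, via the short exact sequence of complexes whose middle term is the unconstrained complex $C^\infty_c([t_i,a)) \xrightarrow{-\partial_+} C^\infty_c([t_i,a))$ (acyclic by the same primitive, now needing no condition at $t_i$) and whose quotient is $\mathbb{C}$ concentrated in degree $-1$; the long exact sequence then yields $H^{-1}=0$ and $H^0\cong\mathbb{C}$. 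I regard the direct computation above as the most transparent route.
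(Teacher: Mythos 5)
Your proof is correct, and it reaches the result by a more direct route than the paper. The paper dispatches this proposition by reusing the zig-zag machinery of the preceding proposition: it exhibits a short exact sequence of complexes whose sub-column is $C^\infty_{ci}([t_i,a)) \xrightarrow{-\partial_+} \ker \pi_0$ (acyclic, with the contraction supplied by the propagator of appendix \ref{App:Propagator}), whose middle column is the complex \eqref{SpinCom1} itself, and whose quotient is $\mathbb{C}$ concentrated in degree zero, concluding that $\pi_0$ (which for the differential $-\partial_+$ must be $\langle\,\cdot\,,\sigma_-\rangle$, i.e.\ your $\Phi$) is a quasi-isomorphism; its own footnote concedes this is ``just a fancy way'' of saying $\ker \pi_0 = \operatorname{im}(-\partial_+)$. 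Your kernel/cokernel computation proves exactly that unpacked statement: $\ker(-\partial_+)=0$ in degree $-1$, $\operatorname{im}(-\partial_+) \subseteq \ker\Phi$ via the boundary term $\Phi(-\partial_+ f)=e^{i\omega t_i}f(t_i)$, and $\ker\Phi \subseteq \operatorname{im}(-\partial_+)$ via the explicit primitive. It is worth noting that your primitive $f(t) = -e^{-i\omega t}\int_{t_i}^{t} e^{i\omega s}g(s)\,ds$ is precisely $-\partial_- h(g)$ for the Feynman propagator $h$ of appendix \ref{App:Propagator}, whose boundary properties ($\partial_- h[g](t_i)=0$ for all $g$, and vanishing past $\operatorname{supp}(g)$ when $g \perp e^{i\omega t}$) are exactly what the paper's contraction invokes — so the analytic content is identical, but your version is self-contained where the paper delegates. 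Your reflection reduction of \eqref{SpinCom2} to \eqref{SpinCom1} is also sound and matches the paper's choice to treat only the first complex. What the paper's packaging buys is uniformity across all the boundary-condition propositions of this section; what your direct computation buys is transparency about where the one-dimensional cohomology comes from, namely the single codimension-one condition $\Phi(g)=0$ interacting with the constraint $f(t_i)=0$.

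One small inaccuracy in your (unused) alternative sketch: the unconstrained complex $C^\infty_c([t_i,a)) \xrightarrow{-\partial_+} C^\infty_c([t_i,a))$ is indeed acyclic, but not ``by the same primitive.'' For arbitrary $g$ your primitive equals $-e^{-i\omega t}\Phi(g)$ beyond $\operatorname{supp}(g)$, hence is compactly supported only when $\Phi(g)=0$; one must instead integrate from the future end, $f(t) = e^{-i\omega t}\int_{t}^{a} e^{i\omega s} g(s)\,ds$, which vanishes past $\operatorname{supp}(g)$ with no condition at $t_i$. Note also that your proposed short exact sequence differs from the paper's — your sub is the complex of interest with quotient $\mathbb{C}$ in degree $-1$, while the paper's sub is the acyclic column with quotient $\mathbb{C}$ in degree $0$ — and both are valid. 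Since your main argument never uses this aside, the proof itself is unaffected.
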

\begin{proof}
We will not give the proof of this proposition, since it is almost the same as the previous one. For example, to compute the cohomology of $\eqref{SpinCom1}$, one can show that 
\begin{equation}
\begin{tikzcd}
& 0 \arrow[d] & 0 \arrow[d] & 0 \arrow[d] & \\
0 \arrow[r] & C^\infty_{ci}([t_i,a)) \arrow[d,"-\partial_+"]\arrow[r,"1"] &  C^\infty_{ci}([t_i,a)) \arrow[r] \arrow[d,"-\partial_+"] & 0 \arrow[d] \arrow[r] & 0 \, , \\
0 \arrow[r] & \ker \pi_0 \arrow[r,"\iota_{0}"] \arrow[d] & C^\infty_c([t_i,a)) \arrow[r,"\pi_{0}"] \arrow[d]& \mathbb{C} \arrow[r] \arrow[d] & 0 \, . \\
& 0 & 0 & 0 &
\end{tikzcd}
\end{equation}
is a short exact sequence of chain complexes. Since the first column has trivial cohomology, it follows that $\pi_0: C^\infty_c([t_i,a)) \rightarrow \mathbb{C}$ is a quasi-isomorphism.\footnote{This argument here is just a fancy way of saying that $\ker \pi_0$ has trivial cohomology and so it has to be a quasi-isomorphism.}
\end{proof}
The two propositions show that the chain complexes $\text{Obs}_{\rm ferm}^{\rm lin}([t_i,a))$ and $\text{Obs}^{\rm lin}_{\rm ferm}((b,t_f])$ have cohomology isomorphic to $\mathbb{C}$, since each of them is a direct sum of one chain complex with trivial cohomology and one with cohomology isomorphic to $\mathbb{C}$. The cohomology of $\text{Obs}^{\rm lin}_{\rm ferm}([t_i,t_f])$ can be computed similarly.
\begin{prop}
The chain complexes
\begin{equation}
\begin{tikzcd}\label{ClosedCohom1}
0 \arrow[r] & C^\infty_{cf}([t_i,t_f]) \arrow[r,"-\partial_-"] & C^\infty_c([t_i,t_f]) \arrow[r] & 0 \, ,
\end{tikzcd}
\end{equation}
\begin{equation}\label{ClosedCohom2}
\begin{tikzcd}
0 \arrow[r] & C^\infty_{ci}([t_i,t_f]) \arrow[r,"-\partial_+"] & C^\infty_c([t_i,t_f]) \arrow[r] & 0
\end{tikzcd}
\end{equation}
have vanishing cohomology. In particular, $\text{Obs}^{\rm lin}_{\rm ferm}([t_i,t_f])$ has vanishing cohomology, as it is the direct sum of the above two complexes.
\end{prop}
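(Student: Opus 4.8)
The plan is to treat each of the two complexes \eqref{ClosedCohom1} and \eqref{ClosedCohom2} directly, exploiting that on the \emph{compact} interval $[t_i,t_f]$ the compact-support requirement is automatic. Thus $C^\infty_{cf}([t_i,t_f])$ and $C^\infty_{ci}([t_i,t_f])$ are simply the smooth functions subject to a single boundary condition ($f(t_f)=0$, respectively $f(t_i)=0$), while $C^\infty_c([t_i,t_f])=C^\infty([t_i,t_f])$ carries no condition. Each complex is concentrated in degrees $-1$ and $0$ with a single first-order operator, so proving vanishing cohomology amounts to two claims: injectivity of the operator on the degree $-1$ space, and its surjectivity onto the degree $0$ space. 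Unlike the previous propositions, where compact support inside a half-open interval forced the use of the zig-zag lemma, here I would argue by elementary ODE theory, since on a closed interval an explicit particular solution is available.

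For the degree $-1$ (kernel) statement, I note that $\ker(-\partial_-)$ is spanned by $\sigma_-=e^{i\omega t}$ and $\ker(-\partial_+)$ by $\sigma_+=e^{-i\omega t}$, as these satisfy $\partial_-\sigma_-=\partial_+\sigma_+=0$. Since $\sigma_-(t_f)\neq 0$ and $\sigma_+(t_i)\neq 0$, the respective boundary condition forces the coefficient to vanish. Hence in each case the kernel is trivial and $H^{-1}=0$.

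For the degree $0$ (surjectivity) statement, given $g\in C^\infty([t_i,t_f])$ I would solve the first-order equation $-\partial_- f=g$ (respectively $-\partial_+ f=g$) with the integrating factor $e^{-i\omega t}$ (respectively $e^{i\omega t}$), writing the particular solution as an integral based at the endpoint carrying the boundary condition. Concretely, $f(t)=-\sigma_-(t)\int_{t_f}^{t}\sigma_+(s)\,g(s)\,ds$ solves $-\partial_- f=g$; this $f$ is smooth (fundamental theorem of calculus, times the smooth $\sigma_-$) and satisfies $f(t_f)=0$ by construction, so it lies in $C^\infty_{cf}([t_i,t_f])$. The analogous formula $f(t)=-\sigma_+(t)\int_{t_i}^{t}\sigma_-(s)\,g(s)\,ds$, integrated from $t_i$, handles $-\partial_+$ and lands in $C^\infty_{ci}([t_i,t_f])$. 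This exhibits each operator as surjective, so $H^0=0$.

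The only genuine subtlety — and the point I would take care to emphasize — is that the particular solution must land in the correct subspace: one must integrate \emph{from} the endpoint carrying the boundary condition, and it is precisely the pairing of $-\partial_-$ with the condition at $t_f$ (and of $-\partial_+$ with the condition at $t_i$) that makes the boundary value come out to zero automatically. Finally, since $\text{Obs}^{\rm lin}_{\rm ferm}([t_i,t_f])$ is the direct sum of \eqref{ClosedCohom1} and \eqref{ClosedCohom2}, and cohomology commutes with finite direct sums, it too has vanishing cohomology in both degrees, which is the asserted statement.
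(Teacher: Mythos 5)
Your proof is correct, and it takes a more elementary and self-contained route than the paper. The paper disposes of both complexes in two lines by invoking the second-order Feynman propagator $h$ from appendix B: it observes that $-\partial_+ h$ maps into the space with the boundary condition at $t_f$ (and $-\partial_- h$ into the space with the condition at $t_i$), so that, using $\partial_+\partial_- = \partial_t^2+\omega^2$ and the boundary lemmas $\partial_+ h[g](t_f)=0$, $\partial_- h[g](t_i)=0$ proved in the appendix, these composites serve as inverses of the degree-raising maps. You instead solve the first-order equations directly with an integrating factor, and in fact your particular solution is literally the same operator: a short computation gives $-\partial_+ h[g](t) = e^{i\omega t}\int_t^{t_f} e^{-i\omega s}g(s)\,ds$, which coincides with your $f(t)=-\sigma_-(t)\int_{t_f}^{t}\sigma_+(s)g(s)\,ds$. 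What your approach buys is transparency and completeness: you avoid the appendix machinery entirely, you make explicit the key point (integrate from the endpoint carrying the boundary condition, so membership in $C^\infty_{cf}$ resp.\ $C^\infty_{ci}$ is automatic), and you spell out the degree $-1$ step — triviality of $\ker(-\partial_\pm)$ on the constrained spaces, since $\sigma_\mp$ is nonvanishing at the relevant endpoint — which the paper's terse proof leaves implicit. What the paper's route buys is economy and uniformity with the bosonic case, reusing properties of $h$ already established once. Your observations that compact support is automatic on the compact interval and that cohomology commutes with the finite direct sum defining $\text{Obs}^{\rm lin}_{\rm ferm}([t_i,t_f])$ are both correct and match the paper's usage.
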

\begin{proof}
Following appendix \ref{App:Propagator}, the propagator $-\partial_+ h$ maps exactly into $C^\infty_{f,0}([t_i,t_f])$ and therefore can be used to show that the complex \eqref{ClosedCohom1} has trivial cohomology. Similarly, $-\partial_- h$ can be used to show that $\eqref{ClosedCohom2}$ has trivial cohomology.
\end{proof}

The classical space of all observables on a connected open subsets $I \subseteq [t_i,t_f]$ is now given by the symmetric algebra over the chain complex \eqref{Fermlinearobs}, which we denote by
\begin{equation}
\text{Obs}^{cl}_{\rm ferm}(I) = \text{Sym} \big(\text{Obs}^{\rm lin}_{\rm ferm}(I)\big) \, .
\end{equation}
Here, $\rm{Sym}(-)$ accounts for the parity of the linear observables, which at the case at hand means that the linear observables in degree $0$ are anti-commuting. 

At the level of cohomology, we have
\begin{equation}\label{SpinBoundaryCoho}
\text{Obs}^{cl}_{\rm ferm}([t_i,a)) \simeq \mathbb{C}[a^\dagger] \; , \quad \text{Obs}^{cl}_{\rm ferm}((b,t_f]) \simeq \mathbb{C}[a]\; , \quad \text{Obs}^{cl}_{\rm ferm}([t_i,t_f]) \simeq \mathbb{C} \, .
\end{equation}
The space $\mathbb{C}[a^\dagger]$ is spanned by polynomials in a single anticommuting variable $a^\dagger$. When identifying $a$ and $a^\dagger$ with the matrices
\begin{equation}\label{aadaggerasmatrices}
a = \begin{pmatrix}
    0 & 0 \\
    1 & 0
\end{pmatrix} 
\; , \quad
a^\dagger = \begin{pmatrix}
0 & 1 \\
0 & 0
\end{pmatrix} \; ,
\end{equation}
the state $\lambda_0 + \lambda_1 a^\dagger \in \mathbb{C}[a^\dagger] $ is identified with
\begin{equation}\label{adaggerasvectors}
\lambda_0 + \lambda_1 a^\dagger \simeq
\lambda_0 \begin{pmatrix}
0 \\
1
\end{pmatrix}
+ \lambda_1\begin{pmatrix}
0 & 1 \\
0 & 0
\end{pmatrix}
\begin{pmatrix}
0 \\
1
\end{pmatrix}
=
\begin{pmatrix}
\lambda_1 \\
\lambda_0
\end{pmatrix} \, ,
\end{equation}
where we think of the constant $1 \in \mathbb{C}[a^\dagger]$ as the ground state. In this way, $\mathbb{C}[a^\dagger]$ spans the two dimensional Hilbert space of a quantum mechanical spin-$\tfrac{1}{2}$ system. Along the same lines, the cohomology $\text{Obs}_{\rm ferm}^{cl}((b,t_f])$, which is $\mathbb{C}[a]$, consists of polynomials in a single odd variable $a$. As with the harmonic oscillator, the cohomology of $\text{Obs}_{\rm ferm}^{cl}([t_i,t_f])$ is identified with the expectation values of operators and overlaps of states.

At the quantum level, we define the complexes with the differential $\delta_{\rm BV} = d + \Delta$. Explicitly, we set
\begin{equation}
    \text{Obs}_{\rm ferm}^{q}(I) := \big(\text{Sym}(\text{Obs}_{\rm ferm}^{\rm lin}(I)),\delta_{\rm BV}\big) \; .
\end{equation}
As in the bosonic case, the quasi-isomorphisms indicated in \eqref{SpinBoundaryCoho} still apply, but the quasi-isomorphism is now given by
\begin{equation}\label{QuantumSpinQI}
\Pi = \Pi_0 e^{-C} 
\end{equation}
with
\begin{equation}
\Pi_0(\mathbf f_1 \wedge \cdots \wedge \mathbf f_n) = \pi_0(\mathbf f_1) \wedge \cdots \wedge \pi_0(\mathbf f_n)
\end{equation}
and $C$ is the second order derivation extending
\begin{equation}
C(\mathbf f,\mathbf g) = \frac{i}{2\omega}\int_{t_i}^{t_f}  d t ds \, \big( \partial_+ f_1(t) e^{-i\omega|t-s|} g_2(s) + \partial_ -f_2(t) e^{-i\omega|t-s|} g_1(s)\big) \; ,
\end{equation}
where $\mathbf f = (f_1,f_2)$ and $\mathbf g = (g_1, g_2)$. Note that $C(\mathbf f, \mathbf g) = -C(\mathbf g,\mathbf f)$, which means that it can be extended to the alternating algebra. We will not give a proof of the fact that \eqref{QuantumSpinQI}, since the proof is essentially the same as for the harmonic oscillator. 

We define the factorization algebra $\mathfrak{F}_{\rm {spin}-\frac{1}{2}}$ on a closed interval as follows.
\begin{equation}
\mathfrak{F}_{\rm {spin}-\frac{1}{2}}(I) = \text{Obs}_{\rm ferm}^{q}(I)
\end{equation}
The inclusions $m_{I}^J$ and mulitplications $m_{I_1,I_2}^J$ are the the standard ones.

The cohomology of $\mathfrak{F}_{\rm {spin}-\frac{1}{2}}$ is given by factorization algebra derived from the algebra $\text{Mat}_{2 \times 2}(\mathbb{C})$ of $2\times 2$ matrices with complex entries acting on its defining representation $\mathbb{C}^2$ from the left and from the right, together with the standard inner product
\begin{equation}
    \mathbb{C}^2 \otimes \mathbb{C}^2 \longrightarrow \mathbb{C} \; .
\end{equation}
We denote this factorization algebra by $\mathfrak{F}_{\rm Pauli}$. The maps $\mathfrak{F}_{\rm Pauli}(a,b) \rightarrow \mathfrak{F}_{\rm Pauli}([t_i,b))$ and $\mathfrak{F}_{\rm Pauli}(a,b) \rightarrow \mathfrak{F}_{\rm Pauli}((a,t_f])$ can be chosen such that it picks the vector $(0,1) \in \mathbb{C}^2$. The algebra and its representations are identified as $\text{Alt}(\mathbb{C}^2) \cong \text{Mat}_{2 \times 2}(\mathbb{C})$, $\text{Alt}(\mathbb{C}) \cong \mathbb{C}^2$ as indicated in \eqref{aadaggerasmatrices} and \eqref{adaggerasvectors}.

The following statement is analogous to theorem \ref{thm:interval}.
\begin{thm}
The factorization algebra $\mathfrak{F}_{\rm{spin}-\frac{1}{2}}$ on is quasi-isomorphic to the algebra $\mathfrak{F}_{Pauli}$ as factorization algebras on $[t_i,t_f]$.
\end{thm}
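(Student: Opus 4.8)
The plan is to adapt the proof of Theorem \ref{thm:interval} essentially line by line, replacing the bosonic contraction operator by its fermionic counterpart and carefully tracking the Koszul signs forced by the degree-shifted convention \eqref{newedgesigns}. First I would note that the quasi-isomorphism on each individual chain complex is already in hand: the three preceding propositions identify the cohomology of $\text{Obs}^{\rm lin}_{\rm ferm}(I)$ with $\mathbb{C}^2$, $\mathbb{C}$, $\mathbb{C}$, or $0$ according as $I$ is an open interval, a left-closed interval $[t_i,a)$, a right-closed interval $(b,t_f]$, or the whole of $[t_i,t_f]$, and the map $\Pi = \Pi_0 e^{-C}$ of \eqref{QuantumSpinQI} lifts these to quasi-isomorphisms of the associated alternating algebras. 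It therefore remains to verify that this family of maps intertwines the structure maps, i.e.\ that the two diagrams of Definition \ref{def:FactAlgMor} commute for $\mathfrak{F}_{\rm{spin}-\frac{1}{2}}$ and $\mathfrak{F}_{\rm Pauli}$.

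The inclusion diagram is trivial, since on both sides the relevant map is the tautological inclusion induced by $C^\infty_c(I) \hookrightarrow C^\infty_c(J)$. For the product diagram the decisive observation, exactly as in the proof of Theorem \ref{thm:interval}, is that every structure map of $\mathfrak{F}_{\rm Pauli}$ can be written in one master form. The operator product is $\mu$ of \eqref{RecursiveActionFERM}, while the action of an operator on a ket, the action on a bra, and the bra-ket pairing are
\begin{align}
\rho_L(f,g) &= \wedge|_{a=0} \circ e^{\partial_a \otimes \partial_{a^\dagger}}(f \otimes g) \; , \\
\rho_R(g,f) &= \wedge|_{a^\dagger=0} \circ e^{\partial_a \otimes \partial_{a^\dagger}}(g \otimes f) \; , \\
s(f,g) &= \wedge|_{a=a^\dagger=0} \circ e^{\partial_a \otimes \partial_{a^\dagger}}(f \otimes g) \; ,
\end{align}
now with Grassmann derivatives $\partial_a,\partial_{a^\dagger}$, all exponentials terminating because $a^2=(a^\dagger)^2=0$. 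This reduces the several cases arising from the fermionic analogue of the structure maps \eqref{m2strucutremaps} to a single identity.

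I would then establish the fermionic bialgebra relation $C \circ \wedge = \wedge \circ (C \otimes 1 + 1 \otimes C + C_2)$ together with the mutual commutativity \eqref{OperatorsCommute} of the three terms, where $C_2$ is the double-contraction built from the kernel $C(\mathbf{f},\mathbf{g})$ in \eqref{fermionicCaction}. Exponentiating as in the proof of Theorem \ref{thm:reals} gives $e^{-C}\circ\wedge = \wedge \circ e^{-C_2}\circ(e^{-C}\otimes e^{-C})$, hence $\bar m_{I_1,I_2}^J = \wedge \circ e^{-C_2}\circ(m_{I_1}^J\otimes m_{I_2}^J)$ for the deformed factorization product defined by \eqref{Deformedmmu}. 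Using the normalizations \eqref{fafadaggercomp} one checks that, on the images of the quasi-inverses $\iota_{I_1}\otimes\iota_{I_2}$, the operator $C_2$ collapses to the Grassmann operator $\partial_a\otimes\partial_{a^\dagger}$ appearing in $\mu,\rho_L,\rho_R,s$; composing with $\Pi_0\circ\wedge$ and using the chain-map property $\Pi_0\circ d = 0$ then closes the product diagram up to $d$-exact terms, exactly as in \eqref{OkuptoExact}.

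The main obstacle is the sign bookkeeping. Since \eqref{newedgesigns} shifts the degree by $+1$, the degree-zero generators anticommute and the contraction is antisymmetric, $C(\mathbf{f}^1\wedge\mathbf{f}^2) = -C(\mathbf{f}^2\wedge\mathbf{f}^1)$; one must verify that the Koszul signs produced when permuting the odd letters in the bialgebra identity and in the exponential rearrangement are exactly compensated by the extra minus sign that \eqref{DefmmuFERM} carries for $I_1<I_2$ relative to the bosonic \eqref{Defmmu}. Checking these purely algebraic sign identities is the one place where the fermionic computation genuinely departs from the bosonic one; once they are in hand, the remaining steps are identical to those of Theorem \ref{thm:interval}.
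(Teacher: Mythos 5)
Your proposal is correct and follows exactly the route the paper intends: the paper gives no standalone proof of this theorem, stating only that it is analogous to theorem \ref{thm:interval}, whose own proof consists of writing $\mu,\rho_L,\rho_R,s$ in the master form $\wedge|_{\cdots}\circ e^{\partial_a\otimes\partial_{a^\dagger}}$ and then invoking the $e^{-C}$ conjugation and $C_2$-factorization machinery of theorem \ref{thm:reals} together with the quasi-inverses $\iota_{I_1}\otimes\iota_{I_2}$ --- precisely the steps you reproduce. Your identification of the Koszul-sign bookkeeping (the antisymmetry $C(\mathbf{f}^1\wedge\mathbf{f}^2)=-C(\mathbf{f}^2\wedge\mathbf{f}^1)$ and the extra sign in \eqref{DefmmuFERM}) as the one genuinely new fermionic ingredient is accurate and, if anything, more explicit than what the paper provides.
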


\section{Conclusion and Outlook} 

In this paper we presented a reformulation of 
the quantum mechanics of the harmonic oscillator and the spin-$\frac{1}{2}$ system, 
utilising the notions of BV algebras and factorization algebras, following the program of Costello and Gwilliam. This formulation 
is based on off-shell field configurations
and as such is much closer to the path integral formulation than to the canonical or on-shell formulation of quantum mechanics. 
In this one uses the BV formalism, 
which was originally introduced in the context of path integral quantization in order to deal with subtleties arising in certain gauge theories, in a somewhat non-standard fashion, 
by showing its power for non-gauge theories 
and as an alternative to the path integral. 
Our off-shell formulation generalizes previous work by Costello and Gwilliam by including boundary points, adding ket and bra states and the computation of overlaps to 
the structure maps of the factorization algebra. While this off-shell quantum mechanics is quasi-isomorphic to on-shell or canonical quantum mechanics, and thus equivalent in the world of factorization algebras, the framework of factorization algebras may be more appropriate in QFT, where 
both canonical quantization and the path integral are plagued with problems that complicate the proper mathematical definition of QFT.

The results presented here may be extended 
in various directions such as: 

\begin{itemize}

\item One of the most obvious  shortcomings 
of the off-shell formulation presented here 
is the restriction to the free systems of the harmonic oscillator and the spin-$\frac{1}{2}$ system. 
The extension to interacting theories in perturbation theory, with polynomial interactions, should  be rather straightforward and was in fact already partially done by  Costello and Gwilliam \cite{Costello_Gwilliam_2021}. 
It remains, however,  to add boundary points so as to include states to have a full-fledged reformulation of even just perturbative quantum mechanics.

\item 

The most exciting extension would be to the truly non-perturbative context. In this case 
the symmetric algebra of polynomials is not going to be sufficient, and so here it remains to identify a suitable class of local or multi-local functionals. 
Such progress could be particularly fruitful 
when applied to gauge theories such as Yang-Mills theory, where one would like to find a new tool to attack its strong coupling mysteries. The classical version of this problem has been considered in \cite{Alfonsi:2023qpv}, which does not require factorization algebras since factorization algebras are only necessary at the quantum level.

\item 

It would also be interesting to explore the 
relation of the present formulation to 
geometric quantization 
and the so-called BV-BFV formalism \cite{Cattaneo:2012qu,Cattaneo:2015vsa}. It would in particular be desirable to find a framework which allows one to treat different polarizations.  

\end{itemize}

 \section*{Acknowledgments} 

We are grateful to 
Roberto Bonezzi, Owen Gwilliam,  
Maria Kallimani and Allison Pinto for discussions, explanations 
and collaborations on related subjects. 

\noindent
 The work of C.C.~is funded by the Deutsche Forschungsgemeinschaft (DFG, German Research Foundation), ”Homological Quantum Field Theory”, Projektnummer 9710005691.

\appendix

\section{BV algebras  and the path integral}

In this appendix we give a self-contained discussion  of the BV cohomology and 
its relation to finite-dimensional path integrals (Gaussian integrals). 
 This is based on an interpretation of 
 the anti-bracket of the BV formalism due to Witten \cite{Witten:1990wb}
 and the subsequent 
observation due to 
Gwilliam and Johnson-Freyd 
that the  cohomology of the BV differential computes 
 finite-dimensional path integrals \cite{gwilliamFeynman, gwilliam2012factorization}.
 
 We take the coordinates $x^i$, $i=1,\ldots, n$,  of  $\mathbb{R}^n$ as the dynamical `fields' with the action 
 function $S:\mathbb{R}^n\rightarrow \mathbb{R}$ given by 
  \begin{equation}\label{classicalactionnn} 
   S(x) = \frac{1}{2} x^i A_{ij} x^{j}\;, 
  \end{equation} 
 where $A$ is a symmetric non-degenerate $n\times n$-matrix. 
 The $x^i$ are the finite-dimensional analogue of the $\phi(t)$ in the main text. 
 The path integrals here are the Gaussian integrals that  compute 
 the expectation values of  polynomial functions  $F:\mathbb{R}^n\rightarrow \mathbb{R}$, 
\begin{equation}\label{pathintegralGauss} 
    \expval{F} := \frac{1}{N}\int d^nx\, F(x)\, e^{-\frac{1}{\hbar}S(x)}\,,
\end{equation}
with 
${N} = (2\pi\hbar)^{\frac{n}{2}}(\det A)^{-\frac{1}{2}}$
the usual normalization factor so that $\expval{1}=1$. 

In the following we will rephrase the computation of the above integrals in terms of differential forms 
and their de Rham cohomology, 
which we now briefly recall. 
On $\mathbb{R}^n$ the differential forms form the chain complex 
  \begin{equation} 
\begin{array}{ccccccccccc} 
 0 \,\longrightarrow\,
  \Omega^{0} \, \xlongrightarrow{d}\, \Omega^1 \,\xlongrightarrow{d}\,   \cdots  \,\xlongrightarrow{d}\,
\Omega^{n-1} 
\, \xlongrightarrow{d}\, \Omega^{n}   
\longrightarrow \,0\;, 
\end{array}
\end{equation} 
where $\Omega^p$ is the space of $p$-forms, and the de Rham differential maps $d:\Omega^p\rightarrow \Omega^{p+1}$ 
so that $d^2=0$. In addition, there is the familiar wedge product of differential forms that \textit{i)} is graded commutative, 
meaning that $\alpha\wedge \beta=(-1)^{pq}\beta\wedge \alpha$ for a $p$-form $\alpha$ and a $q$-form $\beta$, 
\textit{ii)} is such that the Leibniz rule holds for the de Rham differential $d$, 
and 
\textit{iii)} is associative, meaning $(\alpha\wedge \beta)\wedge \gamma= \alpha\wedge (\beta\wedge \gamma)$.
An algebraic structure with these data (a differential $d$ and  a product $\wedge$) satisfying these 
axioms is called a differential graded (dg) commutative algebra. 
The $k$-th de Rham cohomology is the quotient vector space 
 \begin{equation}\label{deRhamcoho} 
  H_{\rm dR}^k := \frac{{\rm ker}\, d_k}{{\rm im}\,d_{k-1}}\;, 
 \end{equation} 
where the notation  $d_k$ indicates the de Rham differential acting on $k$-forms. 
The elements of $H_{\rm dR}^k$ are equivalence classes $[\alpha]$ of closed $k$-forms $\alpha\in \Omega^k$, 
$d\alpha=0$, with the equivalence relation $\alpha \sim \alpha+d\beta$ for any $(k-1)$-form $\beta$. 
The wedge product gives rise to a graded commutative and associative product on cohomology, defined by 
 \begin{equation}\label{cohomologyalgebra} 
   [\alpha_1] \cdot [\alpha_2] := [\alpha_1\wedge \alpha_2]\;. 
 \end{equation} 
This product is independent of the chosen representatives and hence  well-defined, 
which follows with the Leibniz rule and $d\alpha_1=d\alpha_2=0$.  

We now explain that the computation of the path integral can be interpreted 
in terms of de Rham cohomology. 
We define  the  top-form ($n$-form or volume form) 
 \begin{equation}\label{volumeform} 
  \omega :=  \tfrac{1}{{N}}\, e^{-\frac{1}{\hbar}S(x)}\, dx^1\wedge \ldots \wedge dx^n\;, 
 \end{equation} 
obeying $\int \omega=1$, 
in terms of which the path integral (\ref{pathintegralGauss}) becomes 
 \begin{equation}\label{Quantumexp} 
  \expval{F} = \int_{\mathbb{R}^n}  F\,  \omega \;. 
 \end{equation} 
On a compact connected manifold without boundary 
the de Rham cohomology in top degree is isomorphic to $\mathbb{R}$. For the non-compact space $\mathbb{R}^n$ we could instead pick the subcomplex of differential forms
 \begin{equation}
   \alpha= \tfrac{1}{p!} 
   e^{-\frac{1}{\hbar}S(x)} 
   \alpha_{i_1\ldots i_p}(x)dx^{i_1}\wedge 
   \cdots \wedge dx^{i_p}\;, 
 \end{equation} 
where the functions $\alpha_{i_1\ldots i_p}(x)$ are polynomials, in which case the cohomology in top degree is also isomorphic to $\mathbb{R}$. However, the prize to pay is that the wedge product is not closed on this subspace. In any case, given a choice of $n$-form $\omega$, any other $n$-form is a scalar (function) times $\omega$. 
The cohomology being isomorphic to $\mathbb{R}$ then implies that any $n$-form can be written as  a real \textit{number}  
times $\omega$, up to exact terms. Applied to $F \omega$ this means 
 \begin{equation}
  F  \omega = \expval{F}   \omega+d\omega_{n-1}\;, 
 \end{equation} 
 for some $(n-1)$-form $\omega_{n-1}$, 
where the number  on the right-hand side was immediately identified as $\expval{F}$, which follows 
by integrating both sides of this equation. Thus, taking the cohomology class, 
 \begin{equation}
   [ F  \omega] =  \expval{F}  [\omega] \;.
 \end{equation}  
In this sense, the computation of the expectation value $\expval{F}$ is a 
problem in the de Rham cohomology of differential forms.

Despite the observable or functions being in degree zero, the relevant cohomology was in top degree $n$, 
using  the choice of top-form $\omega$. However,  in 
the infinite-dimensional setting of QFT there is no  notion of top-form. 
Another shortcoming is that the wedge product of the de Rham complex does not encode the product of observables (functions). 
This is the reason that (\ref{cohomologyalgebra}) does \textit{not} imply $ \expval{F_1 F_2}= \expval{F_1} \expval{F_2}$. 
Nevertheless, given 
a choice of top form in the finite-dimensional setting, the above is equivalent to a BV algebra with a BV differential, 
whose product does encode the algebra of functions. Such a BV algebra 
may be well-defined in the infinite-dimensional context of QFT independent of a notion of  top-form. 

This BV algebra arises, in finite dimensions, on the `dual' complex of polyvector fields as follows. 
We first  note that the `classical theory' 
defined by the action (\ref{classicalactionnn}) can be 
encoded  in the chain complex  
 \begin{equation}\label{finitedimComplex} 
 \begin{split}
  0 \longrightarrow &\; V_0 \longrightarrow V_1\longrightarrow 0 \\
  & \{ x^i \}  \quad\;\{ \theta_i \} \
 \end{split} 
 \end{equation} 
with the one non-trivial differential given by 
 \begin{equation}
  (d x)_i := A_{ij} x^j\;. 
 \end{equation} 
The space of observables is then the space of functions on this space: the space of functions of $x^i$ and $\theta_i$: 
 \begin{equation}\label{polyvectorsasfunctions} 
  F(x,\theta) = \sum_{p=0}^{n} \frac{1}{p!} F^{i_1\ldots i_p}(x)\, \theta_{i_1}\cdots \theta_{i_p}\;, 
 \end{equation} 
where we take the $\theta_i$ to be anticommuting, $\theta_i\theta_j=-\theta_j\theta_i$ (which is  suggested  by 
the $\theta_i$ living in the odd degree one in (\ref{finitedimComplex})). 
Consequently, the coefficient functions $ F^{i_1\ldots i_p}$ are totally antisymmetric and hence  define  polyvector fields 
on $\mathbb{R}^n$, 
with a natural grading given by $p$: for  $p=0$ we have the functions, for $p=1$ we have the 
vector fields, for $p=2$ we have the bivector fields, etc. 
The polyvectors form the chain complex 
   \begin{equation} 
\begin{array}{ccccccccccc} 
 0 \,\longrightarrow\,
  {\rm Vec}^{n} \, \xlongrightarrow{\Delta}\,{\rm Vec}^{n-1} \,\xlongrightarrow{\Delta }\,   \cdots  \,\xlongrightarrow{\Delta }
  \,{\rm Vec}^{1} 
\, \xlongrightarrow{\Delta}\, {\rm Vec}^{0}   
\longrightarrow \,0\;, 
\end{array}
\end{equation} 
where ${\rm Vec}^k$ is the space of polyvectors of rank $k$, and the differential $\Delta$ (the BV Laplacian),  
defined by 
 \begin{equation}\label{secondorderDElta} 
 \Delta:= \frac{\partial^2}{\partial x^i \partial \theta_i} \;, 
 \end{equation} 
obeys $\Delta^2=0$ and acts as the divergence, hence reducing the rank by one. 
Furthermore, there is a wedge product of polyvectors, which in terms of the presentation in 
(\ref{polyvectorsasfunctions}) is just the point-wise product of superfields. 
In particular, for polyvectors of rank 0, i.e., for functions, this reduces to the ordinary product of functions. 
Importantly,  $\Delta$ does not obey the Leibniz rule w.r.t.~this wedge product because 
it  is a second-order operator, as is evident from (\ref{secondorderDElta}). 
The second-order character can be expressed purely algebraically as follows: $\Delta$ is of second order if 
its failure  to act via the Leibniz rule (the anti-bracket), 
 \begin{equation}\label{anTIBracket} 
  \{ F_1, F_2\}  := \Delta (F_1 F_2) - \Delta(F_1)F_2- (-1)^{F_1} F_1\Delta(F_2)\;, 
 \end{equation} 
 defines a first-order operator $\partial_F:=\{F,\, \cdot\, \}$ that does act via the Leibniz rule, 
 which is the case here. 
 We can now  define a BV algebra: \\[1ex] 
{\bf Definition:} A BV algebra is a graded vector space together with a degree $+1$ differential $\Delta $, 
obeying $\Delta^2=0$, and a graded symmetric and associative product so that $\Delta$ is a second-order 
operator. 

\medskip

We next note that given a choice of top-form, which here is (\ref{volumeform}), 
there is an isomorphism between the de Rham complex 
and the complex of polyvector fields: there is an invertible map $\iota: {\rm Vec}^{k}\rightarrow \Omega^{{n-k}}$, 
defined for $F\in {\rm Vec}^{k}$ as 
 \begin{equation}
  \iota(F) = \frac{1}{(n-k)!k!}\omega_{i_1\ldots i_{n-k}j_1\ldots j_k} F^{j_1\ldots j_k} dx^{i_1}\wedge \cdots \wedge dx^{i_{n-k}}  
  \in \Omega^{n-k} \;. 
 \end{equation} 
Using this map one can define an invariant pairing between a rank-$k$ polyvector $F$ and 
a $k$-form $\alpha$ as follows: 
 \begin{equation}
  \langle F, \alpha \rangle = \int_{\mathbb{R}^n} \iota(F)\wedge \alpha \ \in  \ \mathbb{R}\;. 
 \end{equation}  
Writing in local coordinates 
$\alpha = \frac{1}{p!} \alpha_{i_1\ldots i_p}(x) dx^{i_1}\wedge \ldots \wedge dx^{i_p}$ 
and similarly for $F$, and using 
(\ref{volumeform}), this can also be written as 
  \begin{equation}\label{pairingoncemore} 
  \langle F, \alpha \rangle  
  = \frac{1}{p!} \frac{1}{{N}} \int_{\mathbb{R}^n}   \big(F^{i_1\ldots i_p} \alpha_{i_1\ldots i_p} \big)  e^{-\frac{1}{\hbar}S} d^nx\;. 
 \end{equation} 
The de Rham differential $d$ can be dualized to a differential $\delta$ (BV operator) on the polyvectors  by demanding that for
a rank-$p$ polyvector $F$ and  a $(p-1)$-form $\beta$
 \begin{equation}
  \langle F, d\beta\rangle  + \langle \delta F, \beta\rangle  =  0\;, 
 \end{equation} 
from which one obtains with (\ref{pairingoncemore}) 
 \begin{equation}
  \delta F = \sum_{p=1}^{n} \frac{1}{(p-1)!} \, e^{\frac{1}{\hbar}S}\partial_j\big(e^{-\frac{1}{\hbar}S}F^{ji_1\ldots i_{p-1}}\big)\theta_{i_1}\cdots \theta_{i_{p-1}}\;.
 \end{equation} 
 For instance, for a vector field $V=V^i\theta_i\in {\rm Vec}^1$ one computes 
  \begin{equation}\label{deltaVcomp} 
  \begin{split} 
   \delta V &= e^{\frac{1}{\hbar}S}\partial_i\big(e^{-\frac{1}{\hbar}S} V^i\big) \\
   &= \partial_i V^i - \tfrac{1}{\hbar} \partial_iS \,V^i
   = \Delta(V) - \tfrac{1}{\hbar}\{S,V\}\;, 
  \end{split} 
  \end{equation} 
where we used that the anti-bracket (\ref{anTIBracket}) is $\{S,V\}=\partial_iS V^i$ for a function $S$ and 
vector field $V$. Thus, $\delta=-\frac{1}{\hbar}\delta_{\rm BV}$ with $\delta_{\rm BV} = -\hbar\Delta+\{S, \,\cdot\,\}$ 
having   the more familiar normalization 
of the BV differential, which we  used in the main text.  
One may also verify that $\delta$ makes the following diagram  commute
\begin{equation}  
    \begin{tikzcd}
        \text{Vec}^k \arrow{r}{\delta} \arrow{d}{\iota} & \text{Vec}^{k-1} \arrow{d}{\iota}\\
        \Omega^{n-k} \arrow{r}{d} & \Omega^{n-k+1}
    \end{tikzcd}
\end{equation}
Therefore, the BV operator $\delta$ is transported by the isomorphism $\iota$ 
to the de Rham differential $d$. While the differentials are 
thus equivalent, the wedge products on forms and on polyvectors are \textit{not} equivalent. 
This follows  from the fact that $d$ acts on the wedge product of forms via the Leibniz rule, while 
$\delta$ does not act so on the wedge product of polyvectors. 

We now turn to the cohomology of the BV differential, which is defined as in (\ref{deRhamcoho}), 
but w.r.t.~$\delta$. We will show that the BV cohomology in degree zero computes 
the path integral and hence the quantum expectation value (\ref{Quantumexp}). 
We first note that for a vector field  $V=V^i\theta_i$ 
the expectation value of the function $\delta V$ gives a total divergence and hence vanishes: 
 \begin{equation}
  \expval{\delta V} = \int_{\mathbb{R}^n} \delta V\,\omega = \frac{1}{N}  \int_{\mathbb{R}^n}
  \partial_i\big(e^{-\frac{1}{\hbar}S} V^i\big) d^nx =  0\;, 
 \end{equation} 
where we used the first line of (\ref{deltaVcomp}). Thus, the expectation value is only well-defined 
up to $\delta$-exact terms and hence belongs to the cohomology. 
More precisely, 
in $\delta$ cohomology, a function $F$ (a polyvector of rank zero) is equivalent  to the number 
$ \expval{F}$ computed by the Gaussian path integral: 
 \begin{equation}\label{vevFincoho} 
  F = \expval{F}  +\delta V\;, 
 \end{equation} 
for a suitable vector field $V$. For instance, for the quadratic functions 
 \begin{equation}
  F(x) = c_{ij} x^i x^j\;, 
 \end{equation} 
with $c_{ij}$ an arbitrary symmetric matrix, we take the vector field 
 \begin{equation}
  V^i := -\hbar (A^{-1})^{ij} c_{jk} x^k \;, 
 \end{equation} 
where $A^{-1}$ is the inverse to the `kinetic matrix' defining the action $S(x) =\frac{1}{2} A_{ij} x^i x^j$, 
for which $\partial_iS=A_{ij}x^j$. 
One then computes with the second line of (\ref{deltaVcomp}) 
\begin{equation}
 \delta V = \partial_i V^i - \tfrac{1}{\hbar} \partial_iS V^i  = -\hbar c_{ij} (A^{-1})^{ij} + c_{ij} x^i x^j
 \;, 
\end{equation} 
and thus 
 \begin{equation}
 F = \hbar c_{ij} (A^{-1})^{ij} + \delta V \;. 
 \end{equation} 
This confirms  (\ref{vevFincoho}) with $\expval{F} = c_{ij}  \expval{x^i x^j} = \hbar c_{ij} (A^{-1})^{ij}$,  
which is usually summarized as $ \expval{x^i x^j}=\hbar (A^{-1})^{ij}$.  
It is left as an exercise for the reader to verify  that the expectation values of higher polynomials follow similarly, 
which are equivalent to those obtained by complete  Wick contractions.\footnote{For instance, 
for quartic functions $F(x)=c_{ijkl}x^i x^j x^k x^l$, where the  coefficients 
$c_{ijkl}$ are completely symmetric, one defines the vector field 
$V^i=-\hbar (A^{-1})^{ij}c_{jklm} x^k x^l x^m-3\hbar^2(A^{-1})^{ij} (A^{-1})^{kl} c_{jklm} x^m$ to show 
$F=3\hbar^2(A^{-1})^{ij} (A^{-1})^{kl} c_{ijkl} + \delta V$, encoding the expectation value $\expval{x^ix^jx^kx^l}$. }

 Finally, we can understand in this homological language why in general $\expval{F_1F_2}\neq\expval{F_1}\expval{F_2}$. 
 To this end assume $V_1$ and  $V_2$ 
 are vector fields so that 
  \begin{equation}
  \begin{split} 
    F_1 &= \expval{F_1}  +\delta V_1\;, \\
    F_2 &= \expval{F_2}  +\delta V_2\;. \\
  \end{split} 
  \end{equation} 
Note that $V_1$ and $V_2$  
are well-defined, up to $\delta$ exact terms, by the requirement 
that $ \expval{F_1}$ and  $\expval{F_2}$ 
are numbers.  From this we infer 
 \begin{equation}\label{F1F2cohomology} 
 \begin{split} 
  F_1F_2&=( \expval{F_1}  +\delta V_1) (\expval{F_2}  +\delta V_2) \\
  &=\expval{F_1} \expval{F_2} +\delta\big(\expval{F_1}V_2+\expval{F_2}V_1\big) + \delta V_1\delta V_2\;. 
 \end{split} 
 \end{equation} 
If $\delta$ was a first-order operator we could pull out a $\delta$ from the last term $\delta V_1\delta V_2$ 
and hence show that $F_1F_2$ is equal to  $\expval{F_1} \expval{F_2}$ up to $\delta$ exact terms, 
which in turn would imply $\expval{F_1F_2}= \expval{F_1}\expval{F_2}$. However, $\delta$ is of second order, 
with its failure to be first order given by the anti-bracket (\ref{anTIBracket}). (Here one uses that $\delta$ and $\Delta$ 
differ by a first-order operator which does act via the Leibniz rule and hence drops out off (\ref{anTIBracket}).) 
Thus, (\ref{F1F2cohomology}) implies only 
 \begin{equation}
  F_1F_2 = \expval{F_1} \expval{F_2} + \delta\big(\expval{F_1}V_2+\expval{F_2}V_1
  + V_1\delta V_2\big) - \{V_1, \delta V_2\} \;. 
 \end{equation} 
Therefore, the anti-bracket $ \{V_1, \delta V_2\}$, which is generally non-zero 
due to $\delta$ being second order, encodes the 
difference between $\expval{F_1F_2}$ and $\expval{F_1} \expval{F_2}$.

Summarizing this appendix, the computation of finite-dimensional path integrals (Gaussian integrals) 
can be formulated either in terms of the de Rham cohomology of the complex $\Omega^{\bullet}$ of 
differential forms or in terms of the cohomology 
of the BV operator $\delta$ on the `dual' complex ${\rm Vec}^{\bullet}$ of polyvector fields. 
Given a choice of top form there is an isomorphism $\iota: {\rm Vec}^{\bullet}\rightarrow  \Omega^{\bullet}$
that transports $\delta$ to $d$, which are thus equivalent. However, the natural wedge products on $\Omega^{\bullet}$
and ${\rm Vec}^{\bullet}$ are \textit{inequivalent} as on one the differential acts as a first-order operator while on 
the other it acts as a second-order operator. The wedge product on polyvectors is more natural as it properly encodes 
the algebra of functions (observables). The differential being second-order w.r.t.~wedge product then implies 
that in general $\expval{F_1F_2}\neq\expval{F_1}\expval{F_2}$ and similar relations.

\section{Feynman propagator and boundary conditions}
\label{App:Propagator}

In this appendix we summarize some facts about  the Feynman propagator and its boundary conditions.

For the harmonic oscillator, we study equations of the form
\begin{equation}\label{DrivenHamOsc}
\ddot g + \omega^2 g = f\,,
\end{equation}
with $f\in C^\infty_c (I)$ and $g \in C^\infty(I)$ on an interval $I$. We consider the cases 
\begin{equation}
    I \ = \  \mathbb{R}\,, \quad 
    (-\infty,t_i]\,, \quad 
    [t_i,\infty)\,, \quad 
    [t_i,t_f] \, .
\end{equation}
A solution to equation \eqref{DrivenHamOsc} can be obtained via the $(1+0)$-dimensional Feynman propagator:
\begin{equation}\label{BosonProp}
g(t) = h[f](t) = \int_I \text d s \, G_F(t,s) f(s) \, ,
\end{equation}
where
\begin{equation}\label{AlmostTextbookFeynmanP}
    G_F(t,s) = \frac{i}{2\omega} e^{-i\omega|t-s|} = \theta(t-s)\frac{i}{2\omega} e^{-i\omega(t-s)} + \theta(s-t)\frac{i}{2\omega} e^{i\omega(t-s)} \, ,
\end{equation}
where on the right hand side $G_F$ is written in the form we believe is the most well known by physicists. We will continue to work with expressions like \eqref{BosonProp}, where the Feynman propagator $G_F$ is integrated against a source term $f$.

In order to study the properties of \eqref{BosonProp}, we write it as
\begin{equation}\label{PropWithoutAV}
h[f](t) = \frac{i}{2\omega} e^{-i \omega t}\int_{a}^t \text d s \, e^{i\omega s} f(s) + \frac{i}{2\omega} e^{i \omega t}\int_{t}^{b} \text d s \, e^{-i\omega s} f(s) \, ,
\end{equation}
where $a = \{-\infty,t_i\}$ and $b \in \{\infty,t_f\}$ depending on $I$.

We first point out that $h$ indeed defines a linear map $h: C^\infty_c(I) \rightarrow C^\infty(I)$: 
\begin{prop}
 Let $f \in C^\infty_c(I)$. Then
\begin{equation}\label{hpm}
h_-[f](t) = \frac{i}{2\omega} e^{-i\omega t}\int_a^t d s \, e^{i\omega s} f(s) \, , \quad  h_+[f](t) = \frac{i}{2\omega} e^{i\omega t}\int_t^b d s \, e^{-i\omega s} f(s)
\end{equation}
are smooth functions for $a \in \{-\infty,t_i\}$ and $b \in \{\infty,t_f\}$.
\end{prop}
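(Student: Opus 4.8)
The plan is to reduce the smoothness of $h_\pm[f]$ to the elementary fact that an indefinite integral of a smooth, compactly supported integrand is again smooth, so that each $h_\pm[f]$ appears as a product of manifestly smooth functions. First I would dispose of the convergence issue present in the unbounded cases $a=-\infty$ or $b=\infty$. Since $f\in C^\infty_c(I)$ has compact support, say $\mathrm{supp}(f)\subseteq[c,d]\subset I$, the integrand $e^{\pm i\omega s}f(s)$ is itself smooth and compactly supported. Hence for $h_-$ the integral $\int_a^t ds\, e^{i\omega s}f(s)$ is finite for every $t$ — it vanishes for $t<c$ and equals $\int_c^{\min(t,d)} ds\, e^{i\omega s}f(s)$ for $t\ge c$ — and likewise the integral defining $h_+$ is finite. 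This observation is uniform in whether the relevant endpoint is finite or infinite, so both cases are handled at once.

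Next I would introduce the auxiliary functions
\begin{equation}
F_-(t) := \int_a^t ds\, e^{i\omega s}f(s)\,, \qquad F_+(t) := \int_t^b ds\, e^{-i\omega s}f(s)\,,
\end{equation}
so that $h_-[f](t)=\tfrac{i}{2\omega}e^{-i\omega t}F_-(t)$ and $h_+[f](t)=\tfrac{i}{2\omega}e^{i\omega t}F_+(t)$. By the fundamental theorem of calculus, $F_-$ is differentiable with $F_-'(t)=e^{i\omega t}f(t)$ and $F_+$ is differentiable with $F_+'(t)=-e^{-i\omega t}f(t)$; in the unbounded cases the fixed (possibly infinite) lower or upper limit is irrelevant to this computation, since differentiating in $t$ only sees the moving endpoint. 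As the right-hand sides are smooth, a trivial induction on the order of the derivative gives $F_\pm\in C^\infty(I)$: if $F_\pm$ is $C^k$, then differentiating the explicit formula for $F_\pm'$ (a smooth multiple of $f$) exhibits a $C^k$ derivative, so $F_\pm$ is $C^{k+1}$.

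Finally I would conclude that $h_\pm[f]$ is smooth, being the product of the smooth exponential $e^{\mp i\omega t}$ with the smooth function $F_\pm$. I expect no genuine obstacle here: the single point requiring a moment's care is the unbounded-endpoint case, where one must invoke the compactness of $\mathrm{supp}(f)$ to guarantee both that the semi-infinite integrals are well defined and that the fundamental theorem of calculus applies at the moving limit. Both are immediate from $\mathrm{supp}(f)$ being a compact subset of $I$, and the remaining steps are routine applications of the fundamental theorem of calculus and the product rule.
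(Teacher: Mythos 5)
Your proof is correct and follows essentially the same route as the paper, which simply observes that products and integrals of smooth functions are smooth; you have spelled out that one-line argument via the fundamental theorem of calculus and the product rule. Your extra care with the semi-infinite endpoints (using compact support of $f$ to ensure the integrals converge and the moving-endpoint differentiation is valid) is a detail the paper leaves implicit but is the same underlying argument.
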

\begin{proof}
This follows immediately from 
products of smooth functions and integrals of smooth functions being smooth. 
\end{proof}

We next show that $g = h[f]$ indeed solves \eqref{DrivenHamOsc}.
\begin{prop}
Let $f \in C^\infty_c(I)$. Then $g_\pm(t) = h_\pm[f]$ with $h_\pm$ given in \eqref{hpm} satisfies
\begin{equation}
(\partial_t^2 + \omega^2) h_\pm[f](t) = \frac{1}{2}f(t) \mp \frac{i}{2 \omega} \dot f(t) \, .
\end{equation}
In particular, we have
\begin{equation}
(\partial_t^2 + \omega^2) (h_-[f](t) + h_+[f](t)) = f(t) \, .
\end{equation}
\end{prop}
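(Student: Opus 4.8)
The plan is to differentiate the explicit expressions \eqref{hpm} directly, applying the product rule together with the fundamental theorem of calculus for the variable endpoint $t$. Since the other endpoint ($a\in\{-\infty,t_i\}$ or $b\in\{\infty,t_f\}$) is fixed with respect to $t$ and the compact support of $f$ guarantees convergence of the integral, only the endpoint $t$ contributes a boundary term under differentiation. Concretely, I would first establish the first-order relations
\begin{equation}
\dot h_-[f](t) = -i\omega\, h_-[f](t) + \frac{i}{2\omega}f(t)\,, \qquad
\dot h_+[f](t) = i\omega\, h_+[f](t) - \frac{i}{2\omega}f(t)\,,
\end{equation}
where the term linear in $h_\pm$ comes from differentiating the exponential prefactor and the $f$-term comes from the variable limit, using $e^{\mp i\omega t}e^{\pm i\omega t}=1$.

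Next I would differentiate once more and substitute these first-order relations back in. For instance, for $h_-$ one gets
\begin{equation}
\ddot h_-[f] = -i\omega\,\dot h_-[f] + \frac{i}{2\omega}\dot f
= -\omega^2 h_-[f] + \tfrac{1}{2}f + \frac{i}{2\omega}\dot f\,,
\end{equation}
so that $(\partial_t^2+\omega^2)h_-[f] = \tfrac{1}{2}f + \tfrac{i}{2\omega}\dot f$, and the analogous computation for $h_+$ flips the sign of the $\dot f$ term, giving $(\partial_t^2+\omega^2)h_+[f] = \tfrac{1}{2}f - \tfrac{i}{2\omega}\dot f$. This is exactly the claimed formula with the sign $\mp$. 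Adding the two results, the $\dot f$ contributions cancel and the two halves of $f$ combine, yielding $(\partial_t^2+\omega^2)(h_-[f]+h_+[f]) = f$, which is the ``in particular'' statement and reproduces \eqref{DrivenHamOsc} with $g=h[f]=h_-[f]+h_+[f]$.

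This argument is essentially a bookkeeping exercise, so I do not expect a genuine obstacle; the only point requiring care is the sign tracking, in particular the identity $-i^2=1$ that converts the cross terms into real coefficients, and the sign of the boundary term from $\frac{d}{dt}\int_t^b(\cdot)\,ds = -(\cdot)|_{s=t}$ in the $h_+$ case. I would also remark once, at the outset, that differentiation under the integral sign is justified because the integrands $e^{\pm i\omega s}f(s)$ are smooth with compact support, so all manipulations are legitimate and the resulting functions are smooth by the previous proposition.
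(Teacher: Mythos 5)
Your proof is correct and is essentially the same direct computation as the paper's: both rely on the product rule plus the fundamental theorem of calculus at the variable endpoint, with the fixed endpoint contributing nothing thanks to compact support. The only difference is organizational --- you iterate the first-order identities $\dot h_\mp[f] = \mp i\omega\, h_\mp[f] \pm \frac{i}{2\omega}f$ (which in effect exploits the factorization $\partial_t^2+\omega^2=\partial_+\partial_-$), whereas the paper writes $h_+[f]=uv$ and applies the second-order product rule directly, using $\ddot u+\omega^2 u=0$ and computing $\dot v,\ddot v$; your signs and the final cancellation in the sum check out.
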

\begin{proof}
We will give the proof for $h_+[f](t)$. We write
\begin{equation}\label{Justuseproductrule}
h_+[f](t) = u(t)v(t) \ \text{with} \quad u(t) = \frac{i}{2\omega}e^{i\omega t} \, ,\quad v(t) = \int_t^b \text d s \, e^{-i\omega s}f(s) \, .
\end{equation}
In this notation we have
\begin{equation}
(\partial_t^2 + \omega^2) h_+[f](t) = (\ddot u(t) + \omega^2 u(t))v(t) + 2 \dot u(t) \dot v(t) + u(t)\ddot v(t) \, .
\end{equation}
Since $u(t) \sim e^{i\omega t}$, we have $\ddot u(t) + \omega^2 u(t) = 0$. Two find the other two terms in \eqref{Justuseproductrule}, we compute the derivatives of $v(t)$. We find
\begin{equation}
\begin{split}
    \dot v(t) &= \partial_t \int_t^b \text d s \, e^{-i\omega s}f(s) = - e^{-i\omega t} f(t) \, , \\
    \ddot v(t) & = -\partial_t(- e^{-i\omega t} f(t))  = i\omega e^{-i\omega t} f(t) - e^{-i\omega t} \dot f(t) \, .
\end{split}
\end{equation}
From this we deduce that
\begin{equation}
(\partial_t^2 + \omega^2) h_+[f](t) =  2 \dot u(t) \dot v(t) + u(t)\ddot v(t) = f(t) - \frac{1}{2} f(t) - \frac{i}{2\omega} \dot f(t) = \frac{1}{2} f(t) - \frac{i}{2\omega} \dot f(t) \, ,
\end{equation}
which is what we wanted to show.
\end{proof}
Note that the proofs of the above two propositions only use very elementary calculus and could be easily proven by most readers. Despite that, we still want to give these proofs as they show that the properties of the Feynman propagator can be deduced with the help of basic mathematics. The physics textbook treatment of the Feynman propagator obscures this, since there the Feynman propagator is given in the form
\begin{equation}\label{TextbookFeynmanP}
    G_F(t,s) = -\int \frac{d k_0}{2 \pi} \frac{e^{-ik_0(t-s)}}{k_0^2 - \omega^2 + i \varepsilon} \, .
\end{equation}
Only after applying the residue theorem one arrives at the form given in \eqref{AlmostTextbookFeynmanP}. Since the proofs of the properties of $h[f]$ are so elementary, we prefer to work with $h[f]$ in the form given in \eqref{BosonProp} and \eqref{PropWithoutAV} instead of \eqref{TextbookFeynmanP}.

Coming back to our analysis of $h$, we showed that $(\partial_t^2 + \omega^2) h[f](t) = f(t)$. One could say that $h$ is a right-inverse of $\partial_t^2 + \omega^2$, although for that statement to be true one needs to be very careful with the domains of $\partial_t^2 + \omega^2$ and $h$. We ignore this issue for now and will ask whether $h$ is also a left-inverse of $\partial_t^2 + \omega^2$.
\begin{prop}
Let $g \in C^\infty_c(I)$. We then have
\begin{equation}\label{fullhboundary}
    h[\ddot g + \omega^2 g](t) = g(t) - \frac{i}{2\omega}e^{-i \omega(t-a)}(\dot g(a) - i\omega g(a)) + \frac{i}{2 \omega}e^{i \omega(t-b)}  (\dot g(b) +  i \omega g(b)) \, ,
\end{equation}
where $g(a) = \dot g(a) = 0$ when $a = - \infty$ and $g(b) = \dot g(b) = 0$ when $b = \infty$.
\end{prop}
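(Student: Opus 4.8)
The plan is to evaluate $h[\ddot g + \omega^2 g]$ directly from the split form $h = h_- + h_+$ established in the previous proposition, reducing the whole computation to the fundamental theorem of calculus via a total-derivative identity. First I would record the factorization $\partial_t^2 + \omega^2 = \partial_+\partial_-$ with $\partial_\pm = \partial_t \pm i\omega$ from (\ref{delpm}), and observe the two exact-differential identities
\begin{equation}
e^{i\omega s}(\ddot g + \omega^2 g)(s) = \frac{d}{ds}\Big[e^{i\omega s}\,\partial_- g(s)\Big], \qquad e^{-i\omega s}(\ddot g + \omega^2 g)(s) = \frac{d}{ds}\Big[e^{-i\omega s}\,\partial_+ g(s)\Big],
\end{equation}
each of which follows from a one-line product-rule computation in which the terms linear in $\dot g$ cancel. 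These are the crux of the argument, and everything else is bookkeeping.

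Second, I would substitute $f = \ddot g + \omega^2 g$ into the formulas (\ref{hpm}) for $h_\pm$ and use the identities above to carry out the $s$-integrals exactly. For $h_-$ the integral $\int_a^t e^{i\omega s}(\ddot g + \omega^2 g)\,ds$ collapses to $e^{i\omega t}\partial_- g(t) - e^{i\omega a}\partial_- g(a)$, and multiplying by $\tfrac{i}{2\omega}e^{-i\omega t}$ yields a bulk term $\tfrac{i}{2\omega}\partial_- g(t)$ minus a boundary contribution at $a$; symmetrically, $h_+$ produces a bulk term $-\tfrac{i}{2\omega}\partial_+ g(t)$ plus a boundary contribution at $b$.

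Third, I would add the two pieces. The bulk terms combine as $\tfrac{i}{2\omega}\big(\partial_- g - \partial_+ g\big)(t) = \tfrac{i}{2\omega}(-2i\omega)g(t) = g(t)$, while the surviving boundary terms are precisely the two exponentials in the claimed formula, using $\partial_- g(a) = \dot g(a) - i\omega g(a)$ and $\partial_+ g(b) = \dot g(b) + i\omega g(b)$. This reproduces (\ref{fullhboundary}) exactly.

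Finally, the only point requiring a word of care---and the closest thing to an obstacle---is the treatment of the infinite endpoints. When $a = -\infty$ or $b = \infty$ one must argue both that the relevant integral converges and that the corresponding boundary term genuinely vanishes; this is immediate from $g \in C^\infty_c(I)$, since compact support forces $g$ and $\dot g$, hence $\partial_\pm g$, to vanish identically near $\pm\infty$, whereas at a finite endpoint $t_i$ or $t_f$ the function may be nonzero and the boundary term survives. I expect no genuine difficulty beyond this case distinction.
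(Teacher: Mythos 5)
Your proof is correct and is essentially the paper's own argument: the paper likewise splits $h = h_- + h_+$, uses $(\partial_s^2+\omega^2)e^{\pm i\omega s}=0$ to reduce each integral to boundary terms written via the Lagrange concomitant $e^{\pm i\omega s}\overleftrightarrow{\partial_s}\,g(s) = e^{\pm i\omega s}\partial_\mp g(s)$ --- which is exactly the antiderivative in your total-derivative identities --- and then adds the two pieces so the bulk terms combine to $g(t)$. Your closing remark on convergence and vanishing of boundary terms at infinite endpoints is the same case distinction the paper absorbs into the statement of the proposition via the convention $g(a)=\dot g(a)=0$ for $a=-\infty$ (and analogously at $b=\infty$).
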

\begin{proof}
Since $(\partial_s^2 + \omega^2) e^{i \omega s} = 0$, one finds that $h_\pm[\ddot g + \omega^2 g]$ only consists of boundary terms. Explicitly,
\begin{equation}\label{hboundaryterms}
\begin{split}
h_-[\ddot g + \omega^2 g] &= \frac{i}{2 \omega} e^{-i\omega t} \left[e^{i \omega s} \overleftrightarrow {\partial_s} g(s) \right]_{s = t} - \frac{i}{2 \omega} e^{-i\omega t} \left[e^{i \omega s} \overleftrightarrow {\partial_s} g(s) \right]_{s = a} \, , \\
h_+ [\ddot g + \omega^2 g] &= \frac{i}{2 \omega} e^{i\omega t}\left[e^{-i \omega s} \overleftrightarrow {\partial_s} g(s) \right]_{s = b} - \frac{i}{2 \omega} e^{i\omega t}\left[e^{-i \omega s} \overleftrightarrow {\partial_s} g(s) \right]_{s = t} \, .
\end{split}
\end{equation}
where $f(s) \overleftrightarrow {\partial_s} g(s) = f(s) \dot g(s) - \dot f(s) g(s)$. We find
\begin{equation}
e^{\pm i \omega s} \overleftrightarrow {\partial_s} g(s) = e^{\pm i \omega s}(\dot g(s) \mp i \omega g(s)) \, .
\end{equation}
Using this in \eqref{hboundaryterms}, we find
\begin{equation}
\begin{split}
h_-[\ddot g + \omega^2 g] &= \frac{i}{2 \omega}(\dot g(t) - i\omega g(t)) - \frac{i}{2 \omega} e^{-i\omega (t-a)}(\dot g(a) - i\omega g(a)) \, , \\
h_+ [\ddot g + \omega^2 g] &= \frac{i}{2 \omega}e^{i\omega(t-b)}( \dot g(b) + i \omega g(b)) - \frac{i}{2 \omega}(\dot g(t) + i \omega g(t))\, .
\end{split}
\end{equation}
Adding these two expression we obtain \eqref{fullhboundary}.
\end{proof}

The above result shows that $h[\ddot g + \omega^2 g] = g$ only when $I = \mathbb{R}$. When $I = [a,b]$, we find that this relation is obstructed due to the boundary terms
\begin{equation}
\frac{i}{2 \omega}e^{i\omega(t-b)}\partial_+ g(b) - \frac{i}{2 \omega}e^{-i\omega(t-a)}\partial_- g(a) \, ,
\end{equation}
where we defined $\partial_\pm = \partial_t \pm i \omega$. Therefore, we can define the spaces
\begin{equation}\label{smoothwithboundary}
    C_{cp}^\infty(I) := \{f \in C^\infty_c(I) \, | \, \partial_- f(a) = \partial_+ f(b) = 0\} \, .
\end{equation} 
We conclude that in general $h[\ddot g + \omega^2 g] = g$ for $g \in C^\infty_{cp}(I)$.
\vspace{1ex}

\noindent \emph{Boundary conditions of the Feynman propagator}
\vspace{1ex}

In this part, we will discuss the boundary conditions satisfied by $h[f]$ for various $f$. For begin with the case $I = \mathbb{R}$. In the following we write $t > \text{supp}(f)$ if $t > s$ for all $s \in \text{supp}(f)$. Along the same lines, we write $t < \text{supp}(f)$ if $t < s$ for all $s \in \text{supp}(f)$.

For a given $f \in C^\infty_c(\mathbb{R})$, we consider $h$ as given in \eqref{PropWithoutAV}. For $t > \text{supp}(f)$, $h[f]$ becomes
\begin{equation}\label{hlate}
h[f](t) = \frac{i}{2\omega} e^{-i \omega t}\int_{-\infty}^\infty \text d s \, e^{i\omega s} f(s) \, ,
\end{equation}
while for $t < \text{supp}(f)$, it becomes
\begin{equation}\label{hearly}
h[f](t) = \frac{i}{2\omega} e^{i \omega t}\int_{-\infty}^{\infty} \text d s \, e^{-i\omega s} f(s) \, .
\end{equation} 
Note that both expressions are solutions to $\partial_t^2 h[f](t) + \omega^2 h[f](t) = 0$. This is consistent with the fact that the right hand side of \eqref{BosonProp} vanishes outside of $\text{supp}(f)$. However, we can even make a stronger statement. Defining $\partial_\pm = \partial_t \pm i \omega$ as above, we can say that for $t < \text{supp}(f)$ we have $\partial_- h[f](t) = 0$ and for $t > \text{supp}(f)$ we have $\partial_+ h[f](t) = 0$. This then automatically implies that $(\partial_t^2 + \omega^2) h[f] = \partial_+ \partial_- h[f] = 0$.

Another interesting case happens when $f$ is orthogonal to solutions. If
\begin{equation}
\int \text d s \, f(s) e^{-i \omega s} = 0 \, , 
\end{equation}
we find from \eqref{hearly} that $h[f](t) = 0$ for $t < \text{supp}(f)$. Similarly from \eqref{hlate} we deduce that orthogonality to $e^{i\omega s}$ implies that $h[f](t) = 0$ for $t > \text{supp}(f)$. In summary, we have that
\begin{itemize}
\item $\partial_- h[f](t) = 0$ for $t < \text{supp}(f)$.
\item $\partial_+ h[f](t) = 0$ for $t < \text{supp}(f)$ and $f \perp e^{-i \omega t}$.
\item $\partial_+ h[f](t) = 0$ for $t > \text{supp}(f)$.
\item $\partial_- h[f](t) = 0$ for $t > \text{supp}(f)$ and $f \perp e^{i\omega t}$.
\end{itemize}
Note that $\partial_+ f(t) = \partial_- f(t) = 0$ implies that $f(t) = 0$.

The function $h[f]$ has a surprisingly similar behavior when we work on the closed interval $[t_i,t_f]$. Given $f \in C^\infty_c([t_i,t_f]) = C^\infty([t_i,t_f])$, we write \eqref{BosonProp} as
\begin{equation}
h[f](t) = \frac{i}{2\omega} \int_{t_i}^t \text d s \, e^{-i\omega(t-s)} f(s) + \frac{i}{2\omega} \int_{t}^{t_f} \text d s \, e^{i\omega(t-s)} f(s) \, .
\end{equation}
We act with $\partial_-$, in which case we find
\begin{equation}
\partial_- h[f](t)  = -\int_{t_i}^t e^{-i\omega(t-s)} f(s) \, ,
\end{equation}
which vanishes at $t = a$. If $f \perp e^{i\omega s}$, it also vanishes at $t = b$. Likewise, let us compute $\partial_+ h[f](t)$. We find
\begin{equation}
\partial_+ h[f](t) = - i\int_t^{t_f} e^{i\omega(t-s)}f(s) \, .
\end{equation}
We have $\partial_- h[f](t_f) = 0$. Further, if $f \perp e^{-i\omega t}$, we also have $\partial_+ h[f](t_i) = 0$. Altogether, we deduce that
\begin{itemize}
\item $\partial_- h[f](t_i) = 0$.
\item $\partial_+ h[f](t_i) = 0$ for $f \perp e^{-i \omega t}$.
\item $\partial_+ h[f](t_f) = 0$.
\item $\partial_- h[f](t_f) = 0$ for $f \perp e^{i\omega t}$.
\end{itemize}
This is a very similar behavior as before, although we consider $t = t_i$ instead of $t < \text{supp}(f)$ and $t = t_f$ instead of $t > \text{supp}(f)$.

The above observations translate to the case of compactly supported functions $f$ on half-open intervals $[t_i,\infty)$ and $(-\infty,t_f]$.

\section{Reconstructing prefactorization algebras from connected open sets}
\label{App:C}

In section 3, we gave the definition of prefactorization algebras only in terms of connected open sets. 
In this appendix we discuss the relation to the general case.
Indeed, the standard definition of a prefactorization algebra on a topological space $X$ is defined with respect to any open sets, not only the connected ones. If we assume that the topological space $X$ is locally connected, the value of $\mathfrak{F}$ on an arbitrary open set is then determined once one requires multiplicativity, in the following sense. For $U_1,U_2$ disjoint, we require that
\begin{equation}\label{Multiplicativity}
    m_{U_1,U_2}^{U_1 \sqcup U_2}: \mathfrak{F}(U_1) \otimes \mathfrak{F}(U_2) \longrightarrow \mathfrak{F}(U_1 \sqcup U_2)
\end{equation}
is an isomorphism. Without loss of generality, we can just define $\mathfrak{F}(U_1 \sqcup U_2) := \mathfrak{F}(U_1) \otimes \mathfrak{F}(U_2)$. Note that \eqref{Multiplicativity} implies that $\mathfrak{F}(\emptyset) = \mathbb{C}$. The inclusion $m_{\emptyset}^U =: \eta_U$ then induces a map $\mathbb{C} \rightarrow \mathfrak{F}(U)$ for all open sets $U$.

Suppose now that we defined $\mathfrak{F}$ on connected open sets of $X$. If an open set $V \subseteq X$ is the disjoint union of \emph{finitely many} connected and disjoint open sets $U_1,\ldots U_n$, we define
\begin{equation}
    \mathfrak{F}(V) := \mathfrak{F}(U_1) \otimes \cdots \otimes \mathfrak{F}(U_n) \, .
\end{equation}
Given $U_{1}, \ldots, U_{n}$ disjoint, we also need to define the inclusion
$m_{U_1 \sqcup \ldots \sqcup U_{k}}^{U_1 \sqcup \ldots \sqcup U_n}$ for $k \le n$. As a special case, we consider $m_{U_1}^{U_1 \sqcup U_2}$. Its value is determined by
\begin{equation}
   m_{U_1,U_2}^{U_1 \sqcup U_2} \circ (\text{id}_{\mathfrak{F}(U_1)} \otimes m_{\emptyset}^{U_2}) =  m_{U_1}^{U_1 \sqcup U_2} \circ m_{U_1,\emptyset}^{U_1} \, .
\end{equation}
Since $m_{U_1,\emptyset}^{U_1}$ and $m_{U_1,U_2}^{U_1 \sqcup U_2}$ are the identity (up to the canonical isomorphism $\mathfrak{F}(U_1) \otimes \mathbb{C} \cong \mathfrak{F}(U_1)$), we deduce that $m_{U_1}^{U_1 \sqcup U_2} = \text{id}_{\mathfrak{F}(U_1)} \otimes \eta_{U_2}$. More generally, associativity can be used to show that 
\begin{equation}
    m_{U_1\sqcup \ldots \sqcup U_k}^{U_1\sqcup \ldots \sqcup U_n} = \text{id}_{\mathfrak{F}(U_1)} \otimes \cdots \text{id}_{\mathfrak{F}(U_k)} \otimes \eta_{U_{k+1}} \otimes \cdots \otimes \eta_{U_n} \, . 
\end{equation}

A generic open set $V \subseteq X$ can be written as a (potentially uncountable) union of its connected components $\{U_{i}\}_{i \in I}$, i.e. $\bigsqcup_{i \in I} U_i = V$, which are all necessarily open by the assumption that $X$ is locally connected. One then defines
\begin{equation}
    \mathfrak{F}(V) = \bigoplus_{\substack{J \subseteq I \\ J finite}} \bigotimes_{i \in J} \mathfrak{F}(U_i) /\sim \, ,
\end{equation}
where the equivalence relation identifies
\begin{equation}
    \mathfrak{F}(U_{i_1}) \otimes \cdots\otimes \mathfrak{F}(U_{i_k}) \longrightarrow \mathfrak{F}(U_{i_1}) \otimes \cdots \otimes \mathfrak{F}(U_{i_n})
\end{equation}
as a subspace under the map $m_{U_{i_1} \sqcup \ldots \sqcup U_{i_k}}^{U_{i_1} \sqcup \ldots \sqcup U_{i_n}}$ derived above.


\bibliography{Bibliography}
\bibliographystyle{utphys}

\end{document}